\title{QCSP on Reflexive Tournaments}
\titlerunning{QCSP on Reflexive Tournaments}
\author{Beno\^it Larose}{LACIM, Universit\'e du Qu\'ebec a Montr\'eal, Canada}{blarose@lacim.ca}{}{}
\author{Petar Markovi\'c}{Department of Mathematics and Informatics, University of Novi Sad, Serbia}{pera@dmi.uns.ac.rs}{}{}
\author{Barnaby Martin}{Department of Computer Science, Durham University, UK}{barnaby.d.martin@durham.ac.uk}{}{}
\author{Dani\"el Paulusma}{Department of Computer Science, Durham University, UK}{daniel.paulusma@durham.ac.uk}{}{}
\author{Siani Smith}{Department of Computer Science, Durham University, UK}{siani.smith@durham.ac.uk}{}{}
\author{Stanislav \v{Z}ivn\'y}{Department of Computer Science, University of Oxford, UK}{standa.zivny@cs.ox.ac.uk}{}{Stanislav Zivny was supported by a Royal Society University Research Fellowship. This project has received funding from the European Research Council (ERC) under the European Union’s Horizon 2020 research and innovation programme (grant agreement No 714532). The paper reflects only the authors’ views and not the views of the ERC or the European Commission. The European Union is not liable for any use that may be made of the information contained therein.}
\authorrunning{B. Larose, P. Markovi\'c, B. Martin, D. Paulusma, S. Smith and S. \v{Z}ivn\'y}
\keywords{computational complexity, algorithmic graph theory, quantified constraints, universal algebra, constraint satisfaction}
\def\romannum{\begingroup
\def\theenumi{\textup{(\roman{enumi})}}
\def\p@enumi{}
\def\labelenumi{\theenumi}
\enumerate}
\newcommand{\NP}[0]{\ensuremath{\mathsf{NP}}}
\newcommand{\PP}[0]{\ensuremath{\mathsf{P}}}
\newcommand{\NL}[0]{\ensuremath{\mathsf{NL}}}
\newcommand{\Pspace}[0]{\ensuremath{\mathsf{Pspace}}}
\newcommand{\QCSP}[0]{\ensuremath{\mathrm{QCSP}}}
\newcommand{\CSP}[0]{\ensuremath{\mathrm{CSP}}}
\newcommand{\HC}[0]{\ensuremath{\mathrm{HC}}}
\newcommand{\TT}[0]{\ensuremath{\mathrm{TT}^*}}
\newcommand{\DC}[0]{\ensuremath{\mathrm{DC}}}
\newcommand{\DCm}[0]{\ensuremath{\mathrm{DC}^*_m}}
\newcommand{\Cyl}[0]{\ensuremath{\mathrm{Cyl}^*}}
\newcommand{\Cylm}[0]{\ensuremath{\mathrm{Cyl}^*_m}}
\newcommand{\Cylmplus}[0]{\ensuremath{\mathrm{Cyl}^{*+}_m}}
\newcommand{\Cylplus}[0]{\ensuremath{\mathrm{Cyl}^{*+}}}
\renewcommand{\phi}{\varphi}
\newcommand{\ignore}[1]{}
\newcommand{\A}{\mathrm{A}}
\newcommand{\G}{\ensuremath{\mathrm{G}}}
\newcommand{\F}{\ensuremath{\mathrm{F}}}
\renewcommand{\H}{\ensuremath{\mathrm{H}}}
\newcommand{\B}{\ensuremath{\mathrm{B}}}
\newcommand{\K}{\ensuremath{\mathrm{K}}}
\begin{document}
\maketitle

\begin{abstract}
We give a complexity dichotomy for the Quantified Constraint Satisfaction Problem $\QCSP(\H)$ when $\H$ is a reflexive tournament. It is well-known that reflexive tournaments can be split into a sequence of strongly connected components $\H_1,\ldots,\H_n$ so that there exists an edge from every vertex of $H_i$ to every vertex of $H_j$ if and only if $i<j$. 
We prove that if $\H$ has both its initial and final strongly connected component (possibly equal) of size~$1$, then $\QCSP(\H)$ is in \NL\ and otherwise $\QCSP(\H)$ is \NP-hard.
\end{abstract}

\newpage

\section{Introduction}

The \emph{Quantified Constraint Satisfaction Problem} QCSP$(\B)$, for a fixed \emph{template} (structure) $\B$, is a popular generalisation of the \emph{Constraint Satisfaction Problem} CSP$(\B)$. In the latter, one asks if a primitive positive sentence (the existential quantification of a conjunction of atoms) $\phi$ is true on $\B$, while in the former this sentence may also have universal quantification\footnote{\textcolor{black}{Typically, primitive positive logic also possesses equality, but these can be propagated out by substitution, or removed in the case $x=x$. In the presence of universal quantification, any atom $x=y$ whose innermost variable is universal is false (unless $x$ and $y$ are the same variable). Other instances of equality may be propagated out as before. It follows that the complexity of QCSP$(\B)$ is not affected by the presence or absence of equality, up to logarithmic space reducability.}}. Much of the theoretical research into (finite-domain\footnote{All structures considered in this article are finite.}) CSPs has been in respect of a complexity classification project \cite{FV98,JBK05}, recently completed by \cite{Bu17,Zh17,Zh20}, in which it is shown that all such problems are either in \PP\ or \NP-complete.
Various methods, including combinatorial (graph-theoretic), logical and universal-algebraic were brought to bear on this classification project, with many remarkable consequences.

Complexity classifications for QCSPs appear to be harder than for CSPs. Indeed, a classification for QCSPs will give a fortiori a classification for CSPs (if $\B \uplus \K_1$ is the disjoint union of $\B$ with an isolated element, then QCSP$(\B \uplus\K_1)$ and CSP$(\B)$ are polynomial-time many-one equivalent). Just as CSP$(\B)$ is always in \NP, so QCSP$(\B)$ is always in \Pspace. However, no polychotomy has been conjectured for the complexities of QCSP$(\B)$, though, until recently, only the complexities \PP, \NP-complete and \Pspace-complete were known. Recent work \cite{ZhukM20} has shown that this complexity landscape is considerably richer, and that dichotomies of the form \PP\ versus \NP-hard (using Turing reductions) might be the sensible place to be looking for classifications.

CSP$(\B)$ may equivalently be seen as the \emph{homomorphism} problem which takes as input a structure~$\A$ and asks if there is a homomorphism from $\A$ to $\B$. The \emph{surjective CSP}, SCSP$(\B)$, is a cousin of CSP$(\B)$ in which one requires that this homomorphism from $\A$ to $\B$ be surjective. From the logical perspective this translates to the stipulation that all elements of $\B$ be used as witnesses to the (existential) variables of the primitive positive input $\phi$. The surjective CSP appears in the literature under a variety of names, including \emph{surjective homomorphism} \cite{BKM12}, \emph{surjective colouring} \cite{GPS12,STACStoTOCT} and \emph{vertex compaction} \cite{Vi13,Vi17}. CSP$(\B)$ and SCSP$(\B)$ have various other cousins: see the survey \cite{BKM12} or, in the specific context of reflexive 
tournaments,~\cite{STACStoTOCT}. The only one we will dwell on here is the \emph{retraction} problem CSP$^c(\B)$ which can be defined in various ways but, in keeping with the present narrative, we could define logically as 
allowing atoms of the form $v=b$ in the input sentence $\phi$ where $b$ is some element of $\B$ (the superscript $c$ indicates that constants are allowed). It has only recently been shown that there exists a $\B$ so that SCSP$(\B)$ is in \PP\ while CSP$^c(\B)$ is \NP-complete \cite{zhuk2020norainbow}. It is still not known whether such an example exists among the (partially reflexive) graphs.

It is well-known that the 
binary \emph{cousin} relation 
is not transitive, so let us ask the question as to whether the surjective CSP and QCSP are themselves cousins? The algebraic operations pertaining to the CSP are \emph{polymorphisms} and for QCSP these become \emph{surjective} polymorphisms. On the other hand, a natural use of universal quantification in the QCSP might be to ensure some kind of surjective map (at least under some evaluation of many universally quantified variables). So it is that there may appear to be some relationship between the problems. Yet, there are known irreflexive graphs $\H$ for which QCSP$(\H)$ is in \NL, while SCSP$(\H)$ is \NP-complete (take the $6$-cycle \cite{CiE2006,Vi17}). On the other hand, one can find a $3$-element $\B$ whose relations are preserved by a \emph{semilattice-without-unit} operation such that both CSP$^c(\B)$ and SCSP$(\B)$ are in \PP\ but QCSP$(\B)$ is \Pspace-complete. We are not aware of examples like this among graphs and it is perfectly possible that for (partially reflexive) graphs $\H$, SCSP$(\H)$ being in \PP\ implies that QCSP$(\H)$ is in P.

Tournaments, both irreflexive and reflexive (and sometimes in between), have played a strong role as a testbed for conjectures and a habitat for classifications, for relatives of the CSP 
both complexity-theoretic~\cite{BHM88,TOCL17,STACStoTOCT} and algebraic~\cite{La06,Wires15}. 
Looking at Table~\ref{fig:context} one can see the last unresolved case is precisely QCSP on reflexive tournaments. This is the case we address in this paper. For irreflexive tournaments $\H$, QCSP$(\H)$ is in \PP\ if and only if SCSP$(\H)$ in P, but for reflexive tournaments this is not the case. When $\H$ is a reflexive tournament, we prove that QCSP$(\H)$ is in 
\NL\
if $\H$ has both initial and final
strongly connected
components trivial, and is \NP-hard otherwise. In contrast to the proof from~\cite{TOCL17} and like the proof of \cite{STACStoTOCT}, we will henceforth work largely combinatorially rather than algebraically. Note that we do not investigate beyond \NP-hard, 
so our dichotomy cannot be compared directly to the trichotomy of~\cite{TOCL17} 
for irreflexive tournaments
which distinguishes between \PP, \NP-complete and \Pspace-complete.

\begin{table}[h]
\begin{center}
\resizebox{!}{1.05cm}{
\begin{tabular}{|p{1.9cm}|p{2.4cm}|p{3cm}|p{3cm}|p{3cm}|}
\hline
& QCSP & CSP & Surjective CSP & Retraction \\
\hline
irreflexive tournaments & trichotomy \cite{TOCL17} & dichotomy \cite{BHM88} & dichotomy \cite{BHM88} & dichotomy \cite{BHM88} \\
\hline
reflexive\hspace{1cm} tournaments & {\bf this paper} & all trivial  & dichotomy \cite{STACStoTOCT} & dichotomy \cite{La06} \\
\hline
\end{tabular}
}
\end{center}
\caption{Our result in a wider context. The results for irreflexive tournaments were all proved in the more general setting of irreflexive semicomplete digraphs in the papers cited.}
\label{fig:context}
\end{table}
In Section~\ref{s-tour} we prove the \NP-hard cases of our dichotomy.
Our proof method follows that from \cite{STACStoTOCT}, while adapting the ideas of \cite{LMCS2015} in order to make what was developed for Surjective CSP applicable to QCSP. 
The QCSP
 is not naturally a combinatorial problem but can be seen thusly when viewed in 
 a
 certain way. 
 We indeed closely mirror \cite{STACStoTOCT} with \cite{LMCS2015} in the strongly connected case. For the not strongly connected case, the adaptation from the strongly connected case was straightforward for the Surjective CSP in \cite{STACStoTOCT}. However, the straightforward method does not work for the QCSP. Instead, we seek a direct argument that essentially sees us extending the method from \cite{STACStoTOCT}. 
 
In Section~\ref{s-nl} we prove the \NL\ cases of our dichotomy. Here, we use ideas originally developed in (the conference version of) \cite{LMCS2015} and first used in the wild in \cite{Martin11}. Thus, we do not introduce new proof techniques as such but rather weave our proof through the reasonably intricate synthesis of different known techniques. 
In Section~\ref{s-con} we state our dichotomy and give some directions for future work. 

\section{Preliminaries}

For an integer $k\geq 1$, we write $[k]:=\{1,\ldots,k\}$. 
A  vertex $u\in V(G)$ in a digraph $G$ is {\it backwards-adjacent} to another vertex $v\in V$ if $(u,v)\in E$. It is {\it forwards-adjacent} to another vertex $v\in V$ if $(v,u)\in E$.
If a vertex $u$ has a self-loop $(u,u)$, then $u$ is \emph{reflexive}; otherwise $u$ is \emph{irreflexive}. A digraph $G$ is \emph{reflexive} or \emph{irreflexive} if all its vertices are reflexive or irreflexive, respectively.

The \emph{directed path} on $k$ vertices is the digraph with vertices $u_0,\ldots, u_{k-1}$ and edges $(u_i,u_{i+1})$ for $i=0,\ldots,k-2$. By adding the edge $(u_{k-1},u_0)$, we obtain
the \emph{directed cycle} on $k$ vertices.
A digraph $\G$ is \emph{strongly connected} if for all $u,v \in V(\G)$ there is a 
directed path 
in $E(\G)$ from $u$ to $v$.
A \emph{double edge} in a digraph~$\G$ consists in a pair of distinct vertices $u,v \in V(\G)$, so that $(u,v)$ and $(v,u)$ belong to $E(\G)$. 
A digraph~$\G$ is \emph{semicomplete} if for every two distinct vertices $u$ and $v$, at least one of $(u,v)$, $(v,u)$ belongs to $E(\G)$. A semicomplete digraph~$\G$ is a \emph{tournament} if for every two distinct vertices $u$ and $v$, exactly one of $(u,v)$, $(v,u)$ belongs to $E(\G)$. 
A reflexive tournament~$\G$ is \emph{transitive} if for every three vertices $u,v,w$ with $(u,v), (v,w)\in E(\G)$, also $(u,w)$ belongs to $E(\G)$.
 A digraph~$\F$ is a \emph{subgraph} of a digraph~$\G$ if $V(\F) \subseteq V(\G)$ and $E(\F) \subseteq E(\G)$. It is \emph{induced} if $E(\F)$ coincides with $E(\G)$ restricted to pairs containing only vertices of $V(\F)$. A \emph{subtournament} is an induced subgraph of a tournament.
 It is well known that a reflexive tournament $\H$ can be split into a sequence of strongly connected components $\H_1,\ldots,\H_n$ for some integer $n\geq 1$ so that there exists an edge from every vertex of $H_i$ to every vertex of $H_j$ if and only if $i<j$. We will use the notation $\H_1\Rightarrow \cdots \Rightarrow \H_n$ for $\H$ and we refer to $\H_1$ and $\H_n$ as the {\it initial} and {\it final} components of $\H$, respectively.

A \emph{homomorphism} from a digraph~$\G$ to a digraph~$\H$ is a function $f:V(\G)\rightarrow V(\H)$ such that for all $u,v \in V(\G)$ with $(u,v) \in E(\G)$ we have $(f(u),f(v)) \in E(\H)$. We say that $f$ is \emph{(vertex)-surjective} if for every vertex $x\in V(\H)$ there exists a vertex $u\in V(\G)$ with $f(u)=x$.
A digraph $\H'$ is a \emph{homomorphic image} of a digraph~$\H$ if 
there is a surjective homomorphism from $\H$ to $\H'$ that is also {\it edge-surjective}, that is, for all $(x',y') \in E(\H')$ there exists an $(x,y) \in E(\H)$ with $x'=h(x)$ and $y'=h(y)$. 

The problem {\sc $\H$-Retraction} takes as input a graph $\G$ of which $\H$ is an induced subgraph and asks whether there is a homomorphism from $\G$ to $\H$ that is the identity on $\H$. This definition is polynomial-time many-one equivalent to the one we suggested in the introduction (see \mbox{e.g.} \cite{BKM12}). The \emph{quantified constraint satisfaction problem} $\QCSP(\H)$ takes as input a sentence $\phi:=\forall x_1 \exists y_1 \ldots \forall x_n \exists y_n \ \Phi(x_1,y_1,\ldots,x_n,y_n)$, where $\Phi$ is a conjunction of positive atomic (binary edge) relations. This is a yes-instance to the problem just in case $\H \models \phi$.

The \emph{canonical query} of $\G$ (from \cite{KolaitisVardiBook05}) is a primitive positive sentence $\phi_{\G}$ that has the property that, for all $\H$, $\G$ has a homomorphism to $\H$ iff $\H \models \phi_{\G}$. It is built by mapping edges $(x,y)$ from $E(\G)$ to atoms $E(x,y)$ is an existentially quantified conjunctive query.

The \emph{direct product} of two digraphs $\G$ and $\H$, denoted $\G \times \H$, is the digraph on vertex set $V(\G) \times V(\H)$ with edges $((x,y),(x',y'))$ if and only if $(x,x') \in E(\G)$ and $(y,y') \in E(\H)$. We denote the direct product of $k$ copies of $G$ by $G^k$.
A $k$-ary \emph{polymorphism} of~$\G$ is a homomorphism $f$ from $G^k$ to $G$; if $k=1$, then $f$ is also called an
\emph{endomorphism}.
A $k$-ary polymorphism $f$ is \emph{essentially unary} if there exists a unary operation $g$ and $i \in [k]$ so that $f(x_1,\ldots,x_k)=g(x_i)$
for every $(x_1,\ldots,x_k)\in \G^k$.
Let us say that a $k$-ary polymorphism~$f$ is {\it uniformly} $z$ for some $z\in V(\G)$ if $f(x_1,\ldots,x_k)=z$ for every $(x_1,\ldots,x_k) \in V(\G^k)$.
We need the following two lemmas.

\begin{lemma}
Let $H$ be a reflexive tournament and $f$ be a $k$-ary polymorphism of $\H$.  If $f(x,\ldots,x)=z$ for every $x \in V(\H)$, then $f$ is uniformly \textcolor{black}{equal to} $z$.
\label{lem:uniformly-z}
\end{lemma}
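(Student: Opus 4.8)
The plan is to combine two elementary facts about a reflexive tournament $\H$: it has a Hamiltonian directed path, and any two vertices of $\H$ joined by an edge in each direction must coincide (being a tournament, at most one of the two edges can be present between distinct vertices, and a loop is present at every vertex).

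Fix a Hamiltonian directed path $v_1\to v_2\to\cdots\to v_m$ of $\H$, so that $(v_i,v_{i+1})\in E(\H)$ for $1\le i<m$, and note $(v_i,v_i)\in E(\H)$ for all $i$ by reflexivity. Call a tuple $(c_1,\dots,c_k)\in V(\H)^k$ \emph{good} if $f(c_1,\dots,c_k)=z$; by hypothesis every constant tuple is good. The key observation is the following closure property: if $(u_1,\dots,u_k)$ and $(w_1,\dots,w_k)$ are good and $(u_j,c_j),(c_j,w_j)\in E(\H)$ for every $j\in[k]$, then $(c_1,\dots,c_k)$ is good. Indeed, applying the polymorphism $f$ to the edges $\big((u_1,\dots,u_k),(c_1,\dots,c_k)\big)$ and $\big((c_1,\dots,c_k),(w_1,\dots,w_k)\big)$ of $\H^k$ yields $(z,f(c_1,\dots,c_k))\in E(\H)$ and $(f(c_1,\dots,c_k),z)\in E(\H)$; since $\H$ is a tournament, distinct vertices cannot be joined by edges in both directions, so $f(c_1,\dots,c_k)=z$.

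We then show, by induction on $r\ge 1$, that every tuple all of whose entries lie in a block of $r$ consecutive path-vertices $\{v_i,v_{i+1},\dots,v_{i+r-1}\}$ is good. For $r=1$ this is just the diagonal hypothesis. For the inductive step, let $(c_1,\dots,c_k)$ have all entries in $\{v_i,\dots,v_{i+r}\}$. Put $u_j:=v_i$ if $c_j=v_i$, and $u_j:=v_{i+t-1}$ if $c_j=v_{i+t}$ with $1\le t\le r$; then each $(u_j,c_j)$ is a loop or a path-edge, so $(u_j,c_j)\in E(\H)$, and $(u_1,\dots,u_k)$ has all entries in the $r$-block $\{v_i,\dots,v_{i+r-1}\}$, hence is good by induction. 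Symmetrically, put $w_j:=v_{i+r}$ if $c_j=v_{i+r}$, and $w_j:=v_{i+t+1}$ if $c_j=v_{i+t}$ with $0\le t\le r-1$; then $(c_j,w_j)\in E(\H)$ for all $j$, and $(w_1,\dots,w_k)$ lies in the $r$-block $\{v_{i+1},\dots,v_{i+r}\}$, hence is good by induction. The closure property now makes $(c_1,\dots,c_k)$ good. Taking $r=m$, so that the block is all of $V(\H)$, shows that every tuple is good, i.e.\ $f$ is uniformly $z$.

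The only genuine obstacle is that one cannot simply sandwich an arbitrary tuple $(c_1,\dots,c_k)$ between two \emph{constant} good tuples: a reflexive tournament---for instance a strongly connected one such as the reflexive directed triangle---may have no vertex dominating $\{c_1,\dots,c_k\}$ and no vertex dominated by it, so there is in general no constant tuple with edges to, or from, $(c_1,\dots,c_k)$ coordinatewise. Using a Hamiltonian path and enlarging the admissible block of vertices one vertex at a time sidesteps this, because at each stage the two sandwiching tuples $(u_1,\dots,u_k)$ and $(w_1,\dots,w_k)$ only need to be good---already established for the smaller block---and need not be constant.
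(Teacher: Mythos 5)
Your proof is correct. Both your argument and the paper's are at heart ``sandwich'' arguments: one shows coordinatewise edges from a good tuple into the target tuple and from the target tuple into another good tuple, so that the polymorphism forces both $(z,z')$ and $(z',z)$ to lie in $E(\H)$, hence $z'=z$ since a tournament has no double edges between distinct vertices. Where you differ is the induction that supplies the two flanking good tuples. The paper inducts directly on the number $m$ of distinct entries: given a tuple with $m+1$ distinct values $y_1,\ldots,y_{m+1}$ ordered so that $(y_m,y_{m+1})\in E(\H)$, it sandwiches between the tuples obtained by replacing $y_{m+1}$ by $y_m$ and by replacing $y_m$ by $y_{m+1}$, each of which has only $m$ distinct entries. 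This needs no global structure on $\H$ beyond ``any two vertices are comparable''. You instead fix a Hamiltonian directed path (Rédei's theorem) and induct on the width of a window of consecutive path vertices, shifting every coordinate one step down or up the path; this imports a classical but external fact about tournaments, but in exchange yields a very concrete bookkeeping where the flanking tuples are obtained by a uniform shift rather than a targeted substitution. Your closing remark --- that one cannot in general sandwich between two \emph{constant} good tuples --- is exactly the right motivation, and the paper's substitution trick and your window trick are two clean ways around it.
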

\begin{proof}
Consider some tuple $(x_1,\ldots,x_k)$ which has $m$ distinct vertices. We proceed by induction on $m$, where the base case $m=1$ is given as an assumption. Suppose we have the result for $m$ vertices and let $(x_1,\ldots,x_k)$ have $m+1$ distinct entries. For simplicity (and w.l.o.g.) we will consider this reordered and without duplicates as $(y_1,\ldots,y_m,y_{m+1})$.  Suppose $f$ maps $(x_1,\ldots,x_k)$ to $z'$. Assume $(y_{m},y_{m+1}) \in E(\H)$ (the case $(y_{m+1},y_{m})$ is symmetric). Then consider the tuples $(y_1,\ldots,y_{m},y_{m})$ and $(y_1,\ldots,y_{m+1},y_{m+1})$. By the inductive hypothesis, $f$ maps each of these (when reordered and padded appropriately with duplicates) to $z$. Furthermore, we have co-ordinatewise edges from $(y_1,\ldots,y_{m},y_{m})$ to $(y_1,\ldots,y_{m},y_{m+1})$ and from $(y_1,\ldots,y_{m},y_{m+1})$ to $(y_1,\ldots,y_{m+1},y_{m+1})$. Since we deduce by the definition of polymorphism that both $(z,z'), (z',z) \in E(\H)$, it follows that $z'=z$. Thus, $f$ maps also $(y_1,\ldots,y_m,y_{m+1})$  (when reordered and padded appropriately with duplicates) to $z$. That is, $f(x_1,\ldots,x_k)=z$.
\end{proof}

\begin{lemma}
Let $\H$ be the reflexive tournament $\H_1\Rightarrow \cdots \Rightarrow \H_i \Rightarrow \cdots \Rightarrow \H_n$. If $f$ is a $k$-ary surjective polymorphism of $\H$, then $f$ preserves each of $V(\H_1),\ldots,V(\H_n)$;
that is, for every $i$ and every tuple of $k$ vertices $x_1,\ldots,x_k\in V(\H_i)$, $f(x_1,\ldots,x_k)\in V(\H_i)$.
\end{lemma}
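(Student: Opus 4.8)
The plan is to combine one connectivity observation with surjectivity, working throughout in the digraph power $\H^k$. Write $\H=\H_1\Rightarrow\cdots\Rightarrow\H_n$, and for $v\in V(\H)$ let $\ell(v)$ be the index of the component containing $v$; recall that $(u,v)\in E(\H)$ forces $\ell(u)\le\ell(v)$, so along any directed walk of $\H$ the level is non-decreasing, and in particular $N^-_{\H}(V(\H_1))=V(\H_1)$ — every in-neighbour of a vertex of the initial component lies in the initial component — with the dual statement for $V(\H_n)$ and out-neighbours.

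The structural observation is the following. For a fixed $i$, the subgraph of $\H^k$ induced on $V(\H_i)^k$ is exactly $\H_i^k$; since $\H_i$ is reflexive and strongly connected, so is $\H_i^k$ (join the coordinates one at a time, padding shorter directed walks with self-loops to a common length). The image of a homomorphism from a non-empty strongly connected digraph induces a strongly connected subgraph of the codomain, and in the reflexive tournament $\H$ every strongly connected induced subgraph is contained in a single component $V(\H_j)$ (a directed path cannot return from a later component to an earlier one). Hence $f(V(\H_i)^k)\subseteq V(\H_{\sigma(i)})$ for some map $\sigma:[n]\to[n]$, and $\sigma$ is non-decreasing: if $i<i'$, then any $\vec x\in V(\H_i)^k$ and $\vec x'\in V(\H_{i'})^k$ satisfy $\vec x\to\vec x'$ coordinatewise in $\H^k$, so $(f(\vec x),f(\vec x'))\in E(\H)$ and $\sigma(i)\le\sigma(i')$. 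It remains to prove $\sigma=\mathrm{id}$.

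The cornerstone is $\sigma(1)=1$, and this is where surjectivity enters. Since $V(\H_1)$ is closed under in-neighbours in $\H$, the preimage $f^{-1}(V(\H_1))\subseteq V(\H)^k$ is closed under predecessors in $\H^k$: if $\vec w\to\vec w'$ in $\H^k$ and $f(\vec w')\in V(\H_1)$, then $f(\vec w)$ is an in-neighbour of $f(\vec w')$ and hence lies in $V(\H_1)$. Surjectivity of $f$ makes this preimage non-empty; pick $\vec z$ in it. Now let $\vec x\in V(\H_1)^k$ be arbitrary. As each $x_a$ lies in the initial component, from $x_a$ one can reach $z_a$ by a directed walk in $\H$ (move inside the strongly connected $\H_1$, then along an edge into the component of $z_a$, then inside that strongly connected component); padding with self-loops makes all these walks a common length, so $\vec x\to^{*}\vec z$ in $\H^k$. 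Closure under predecessors then gives $\vec x\in f^{-1}(V(\H_1))$, i.e. $f(\vec x)\in V(\H_1)$; thus $\sigma(1)=1$. Applying the same argument to the edge-reversed tournament $\H^{op}=\H_n^{op}\Rightarrow\cdots\Rightarrow\H_1^{op}$, of which $f$ is still a surjective polymorphism, yields $\sigma(n)=n$ as well.

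To finish one strips off $\H_1$ and $\H_n$ and recurses: pass to the reflexive tournament $\H_2\Rightarrow\cdots\Rightarrow\H_{n-1}$ and repeat, so that the statement follows by induction on $n$. The step that needs the most care — and which I expect to be the main obstacle — is justifying this recursion: one must argue that the restriction of $f$ to the tuples over the middle components is again a surjective polymorphism there, equivalently that $\sigma$ cannot skip an index. This is precisely where Lemma~\ref{lem:uniformly-z} is meant to be used, to rule out $f$ collapsing a component onto a single vertex; the rest of the argument — the structural observation and the endpoint analysis — is otherwise routine.
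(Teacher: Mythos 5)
Your structural observation (the image of the strongly connected $\H_i^k$ lands in a single component), the monotone map $\sigma$, and the surjectivity argument giving $\sigma(1)=1$ and, dually, $\sigma(n)=n$ are all correct. This part is in fact a more careful rendering of the paper's own (very terse) proof, whose last sentence amounts to: if $\sigma(n)<n$ then every tuple's image lies in a component $\le\sigma(M)\le\sigma(n)$, so component $n$ is missed by the range.

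The gap you flag for the middle components is genuine, and it cannot be filled, because the statement is false for $1<i<n$. Take $\H=\TT_3$, the reflexive transitive tournament $a\to b\to c$, so $\H_1=\{a\}$, $\H_2=\{b\}$, $\H_3=\{c\}$. Define the binary $f$ by $f(b,c)=f(c,b)=b$, $f(c,c)=c$, and $f(x,y)=a$ on the remaining six pairs. Since $E(\TT_3)$ is the reflexive total order $a\le b\le c$, a binary polymorphism is exactly a map monotone for the product order; this $f$ is monotone and surjective, yet $f(b,b)=a\notin V(\H_2)$. This is precisely the failure mode of your recursion: surjectivity of $f$ does not pass down to the tuples over the middle components, because mixed tuples (here $(b,c)$ and $(c,b)$) can supply the preimages of middle-component vertices, so ``$\sigma$ skips an index'' is compatible with surjectivity, and Lemma~\ref{lem:uniformly-z} cannot rescue it --- the image of $V(\H_i)^k$ collapsing onto a single vertex of an earlier component is perfectly consistent with everything else. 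The paper's closing assertion that some component $\ell'\ge i$ would then be missed by the range is refuted by the same example. What does hold, and what you prove rigorously, is the case $i\in\{1,n\}$; fortunately that is all that is used later, since Corollary~\ref{cor:galois} is only ever invoked to define the initial strongly connected component.
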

\begin{proof}
Suppose $f$ maps some tuple $(x_1,\ldots,x_m)$ from $V(\H_i)$ to $y \in V(\H_\ell)$. Let $(x'_1,\ldots,x'_m)$ be any tuple from $V(\H_i)$. Since $\H_i$ is strongly connected, $f(x'_1,\ldots,x'_m)$ in $V(\H_\ell)$. It follows that if $\ell\neq i$, \mbox{e.g. w.l.o.g.} $\ell<i$, then some component $\ell'\geq i$ can not be in the range of $f$.
\end{proof}
The relevance of this lemma is in its sequent corollary, which follows according to Proposition 3.15 of \cite{BBCJK}.
\begin{corollary}
Let $\H$ be the reflexive tournament $\H_1\Rightarrow \cdots \Rightarrow \H_i \Rightarrow \cdots \Rightarrow \H_n$. Each subset of the domain $V(\H_i)$ is definable by a QCSP instance in one free variable.
\label{cor:galois}
\end{corollary}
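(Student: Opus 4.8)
The plan is to derive the statement from the Galois correspondence between surjective polymorphisms and definability in the quantified setting. Concretely, Proposition~3.15 of~\cite{BBCJK} says that, for a fixed finite template, a subset of the domain is definable by a QCSP-instance with a single free variable if and only if it is preserved by every surjective polymorphism of the template; equivalently, the sets so definable are exactly the unary members of $\Inv(\sPol(\H))$. Granting this, it suffices to show that each $V(\H_i)$ lies in $\Inv(\sPol(\H))$, i.e.\ that it is preserved by every surjective polymorphism of $\H$; and this is precisely the content of the lemma immediately above.

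In more detail, I would proceed in three steps. First, recall the relevant half of the correspondence from~\cite{BBCJK}: if a unary relation $S$ on $V(\H)$ satisfies $f(s_1,\ldots,s_k)\in S$ for every $k$-ary surjective polymorphism $f$ of $\H$ and all $s_1,\ldots,s_k\in S$, then there is a QCSP-instance $\phi_S(x)$ in the single free variable $x$ such that $\H\models\phi_S(a)$ exactly when $a\in S$. Second, fix $i\in[n]$ and put $S:=V(\H_i)$; by the preceding lemma, for every $k$-ary surjective polymorphism $f$ of $\H$ and all $s_1,\ldots,s_k\in V(\H_i)$ we have $f(s_1,\ldots,s_k)\in V(\H_i)$, so $S=V(\H_i)$ satisfies the hypothesis of the first step. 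Third, conclude that $V(\H_i)$ is definable by a QCSP-instance in one free variable; since $i$ was arbitrary, this holds for every component.

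The only point requiring care is the appeal to~\cite{BBCJK}. The direction we use runs from invariance under $\sPol(\H)$ to genuine QCSP-definability; this is the substantive half of the Galois correspondence (the converse being a routine check that $\{\exists,\forall,\wedge\}$-formulas are preserved by surjective polymorphisms) and it relies on $\H$ being finite. Beyond verifying that the precise formulation of Proposition~3.15 applies to $\H$ viewed as a QCSP template, there is no real obstacle: once the correspondence is in hand, the conclusion is immediate from the lemma.
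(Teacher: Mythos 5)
Your proof is correct and takes the same route as the paper: the paper's entire argument is the remark preceding the corollary, namely that it ``follows according to Proposition~3.15 of \cite{BBCJK}'' together with the immediately preceding lemma asserting that each $V(\H_i)$ is preserved by every surjective polymorphism of $\H$. You simply spell out the Galois-correspondence reasoning that the paper leaves implicit.
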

An endomorphism~$e$ of a digraph~$\G$ is a \emph{constant map} if there exists a vertex $v\in V(\G)$ such that $e(u)=v$ for every $u\in V(\G)$, and $e$ is the {\it identity} if $e(u)=u$ for every $u\in \G$.
An \emph{automorphism} is a bijective endomorphism whose inverse is a homomorphism.
An endomorphism is \emph{trivial} if it is either an automorphism or a constant map; otherwise it is {\it non-trivial}.
A digraph is \emph{endo-trivial} if all of its endomorphisms are trivial. 
An endomorphism~$e$ of a digraph~$\G$ \emph{fixes} a subset $S \subseteq V(\G)$ if $e(S)=S$, that is, $e(x)\in S$ for 
every $x \in S$, 
and $e$ fixes an induced subgraph~$\F$ of~$\G$ if it is the identity on $V(\F)$.
It fixes an induced subgraph~$\F$ \emph{up to automorphism} if $e(\F)$ is an automorphic copy of~$\F$. 
An endomorphism~$e$ of~$\G$ is a \emph{retraction} of~$\G$ if $e$ is the identity on~$e(V(\G))$. A digraph is \emph{retract-trivial} if all of its retractions are the identity or constant maps. Note that endo-triviality implies retract-triviality, but the reverse implication is not necessarily true (see \cite{STACStoTOCT}). However, on reflexive tournaments both concepts do coincide~\cite{STACStoTOCT}.

We need a series of results from~\cite{STACStoTOCT}. The third one follows from the well-known fact that every strongly connected tournament has a directed Hamilton cycle~\cite{Ca59}. 

\begin{lemma}[\cite{STACStoTOCT}]
A reflexive tournament is endo-trivial if and only if it is retract-trivial.
\label{lem:endo-retraction-trivial}
\end{lemma}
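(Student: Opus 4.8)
\emph{Forward direction.} This holds for every digraph: a retraction is an endomorphism, and a trivial one is either an automorphism -- hence, being the identity on its (full) image, the identity -- or a constant map.

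\emph{Converse; reduction.} I would argue the contrapositive. Let $e$ be a non-trivial endomorphism of the reflexive tournament $\H$; then $2\le|e(V(\H))|\le|V(\H)|-1$, so $|V(\H)|\ge 3$. By the standard finiteness argument the chain $V(\H)\supsetneq e(V(\H))\supseteq e^2(V(\H))\supseteq\cdots$ stabilises at an induced subtournament on which $e$ acts as an automorphism, and a suitable power of $e$ retracts $\H$ onto it; if that subtournament has between $2$ and $|V(\H)|-1$ vertices, this retraction is non-trivial and we are done. Otherwise some power of $e$ is constant, so -- choosing, among all non-constant, non-surjective endomorphisms of $\H$, one (again called $e$) whose image $\H''$ has minimum cardinality $s\ge 2$ -- either $e|_{\H''}$ is an automorphism of $\H''$ (whence a power of $e$ retracts $\H$ onto $\H''$, and $2\le s\le|V(\H)|-1$ finishes it) or $e$ maps $\H''$ onto a single vertex $v$, because otherwise $e^2$ would be a non-constant, non-surjective endomorphism with strictly smaller image.

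\emph{Converse; the remaining case.} So suppose $e:\H\twoheadrightarrow\H''$ with $e|_{\H''}$ constant and $s=|V(\H'')|\ge 2$; I must exhibit a non-trivial retraction of $\H$. Here I would split on the strong-component decomposition $\H=\H_1\Rightarrow\cdots\Rightarrow\H_n$. If $n\ge 2$ this is easy and does not even use $e$: either collapse a non-trivial $\H_i$ onto one of its own vertices while fixing all other vertices, or, if $\H$ is transitive -- in which case $n\ge 3$, since for $n=2$ it is the reflexive edge, which has no non-constant non-surjective endomorphism -- collapse every vertex other than the initial and the final one onto the initial one. The case $n=1$, i.e. $\H$ strongly connected, is the crux, and here iterating $e$ is of no further help, since a non-constant endomorphism of a strongly connected reflexive tournament can collapse to a point under iteration. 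Applying $e$ to a directed Hamilton cycle of $\H$ (which exists by \cite{Ca59}) shows that $\H''$ is itself strongly connected, hence $s\ge 3$ and $|V(\H)\setminus V(\H'')|\ge s-1\ge 2$. The plan is then to build a non-trivial retraction that collapses $V(\H'')$ onto a carefully chosen vertex $z$ lying outside it, the condition on $z$ being exactly that every arc between $V(\H'')$ and its complement still maps to an arc after the collapse; proving that such a $z$ (or some other suitable proper retract) must exist -- using the tournament property that distinct vertices span no double edge, together with the fact that $e$ is a homomorphism -- is the main obstacle, and the one genuinely combinatorial step of the argument.
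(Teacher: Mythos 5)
You have not proved the lemma. The paper itself offers no proof to compare against -- the statement is imported from \cite{STACStoTOCT} -- so your argument has to stand alone, and it does not. What you have is correct as far as it goes: the forward direction, the minimality argument reducing everything to an endomorphism $e$ onto a proper subtournament $\H''$ with $s=|V(\H'')|\ge 2$ and $e|_{\H''}$ constant, and the case $n\ge 2$ of the component decomposition are all fine (as is the observation that $\H''$ inherits strong connectivity from a Hamilton cycle, so $s\ge 3$). But the strongly connected case, which you yourself call ``the crux'' and ``the main obstacle'', is left as a plan rather than a proof. That case is where the entire content of the lemma lives, so this is a genuine gap, not a detail.

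Moreover, the plan you sketch cannot succeed in the form stated. If you collapse all of $V(\H'')$ onto a single vertex $z\notin V(\H'')$ while fixing everything else, then any vertex $x\notin V(\H'')\cup\{z\}$ that has both an out-neighbour and an in-neighbour inside $V(\H'')$ forces both $(z,x)$ and $(x,z)$ to be edges -- a double edge, which a tournament forbids. And such ``mixed'' vertices are exactly what strong connectivity makes you expect: if every external vertex were homogeneous with respect to $V(\H'')$ (dominating it or dominated by it), you could simply collapse $\H''$ onto one of its own vertices and be done at once, with no need for $z$ at all. So the one configuration your construction handles is the easy one, and the hard configuration is the one it provably excludes; the hedge ``or some other suitable proper retract'' is carrying the whole proof. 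To close the gap you would need to actually produce that retract, for instance by analysing how the sets of vertices dominating $\H''$, dominated by $\H''$, and mixed with respect to $\H''$ sit inside $\H$ and interact with $e$ (note, e.g., that $e$ must already send every mixed vertex to the single vertex $e(V(\H''))$). Until then the converse direction is unestablished.
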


\begin{lemma}[\cite{STACStoTOCT}]\label{l-Benoit}
Let $\H$ be an endo-trivial reflexive digraph with at least three vertices. Then every polymorphism of  $\H$ is essentially unary.  
\end{lemma}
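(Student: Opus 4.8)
The plan is to induct on the arity $k$ of the polymorphism $f$. If $k=1$, then $f$ is an endomorphism of $\H$, hence (by endo-triviality) either an automorphism or a constant map, and in both cases $f$ is trivially essentially unary. So assume $k\geq 2$ and, for a contradiction, take $f$ to be a polymorphism of $\H$ of \emph{minimal} arity that is not essentially unary. Then $f$ depends on all $k$ of its coordinates (dropping a coordinate on which $f$ does not depend would give a polymorphism of smaller arity that is still not essentially unary, contradicting minimality), and every identification of two coordinates of $f$ yields a $(k-1)$-ary polymorphism which, by the induction hypothesis, is essentially unary.

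Next I would reduce to a binary polymorphism. View $f$ as a function of its last coordinate: for each $\bar c\in V(\H)^{k-1}$ the slice $x\mapsto f(\bar c,x)$ is an endomorphism of $\H$, hence an automorphism or a constant map. Since $f$ depends on its last coordinate, some slice $f(\bar a,\cdot)$ is non-constant, hence an automorphism $\alpha$; replacing $f$ by $\alpha^{-1}\circ f$ (an operation that preserves being a polymorphism and being essentially unary) we may assume $f(\bar a,\cdot)=\mathrm{id}$. Now walk from $\bar a$ to an arbitrary $\bar c$ in $V(\H)^{k-1}$, changing one coordinate at a time. If along some walk two consecutive tuples $\bar d,\bar d'$ differing only in coordinate $j$ have $f(\bar d,\cdot)=\mathrm{id}$ while $f(\bar d',\cdot)\neq\mathrm{id}$ (possibly after a further composition with an inverse automorphism), then restricting $f$ to coordinates $j$ and $k$ produces a binary polymorphism $p$ with $p(a_j,x)=x$ for all $x$ and either $p(a_j',x)$ constant in $x$ (the ``mixed'' case) or $p(a_j',\cdot)$ a non-identity automorphism (the ``all-automorphism'' case), where $a_j\neq a_j'$. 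If no walk ever changes the slice in this way, then every slice of $f$ equals $\mathrm{id}$ (the walk-graph on $V(\H)^{k-1}$ is connected), so $f(\bar c,x)=x$ depends only on the last coordinate, contradicting that $f$ depends on all $k\geq 2$ of them. Hence it suffices to rule out such a binary $p$.

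The crux is this binary analysis. Consider the diagonal $q(x):=p(x,x)$; since $\H$ is reflexive, $(x,x')\in E(\H)$ forces $((x,x),(x',x'))\in E(\H^2)$, so $q$ is an endomorphism of $\H$ and thus an automorphism or a constant map. In the mixed case, where $p(a_j,x)=x$ and $p(a_j',x)=w$ for all $x$: the endomorphism $x\mapsto p(x,w)$ sends the distinct elements $a_j,a_j'$ both to $w$, so it is not injective and must be the constant map $w$; hence $p(x,w)=w$ for all $x$, so $q(a_j)=a_j$ while $q(a_j')=q(w)=w$. If $w\notin\{a_j,a_j'\}$ this makes $q$ non-constant (it fixes $a_j$) and non-injective (it sends both $a_j'$ and $w$ to $w$), a non-trivial endomorphism, contradicting endo-triviality. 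The degenerate cases $w=a_j$ and $w=a_j'$, and the all-automorphism case, are treated by further arguments of the same flavour: examining the slices $x\mapsto p(\cdot,x)$, composing with inverse automorphisms, and forming auxiliary endomorphisms such as $x\mapsto p(\gamma^{-1}(x),x)$ for an automorphism $\gamma$. These use $|V(\H)|\geq 3$ and the extra rigidity that endo-triviality forces — for instance, an endo-trivial reflexive digraph on at least three vertices is connected, since collapsing a non-trivial component of a disconnected reflexive digraph to a single vertex (and fixing the rest) would be a non-trivial endomorphism.

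The main obstacle is precisely this binary case, and within it the degenerate configurations: when the constant value $w$ coincides with $a_j$ or $a_j'$, or when every slice of $p$ in one direction is an automorphism. There the diagonal endomorphism $q$ is ``accidentally'' trivial, so no contradiction can be read off directly; one must instead manufacture a non-trivial endomorphism (or force $|V(\H)|\leq 2$) out of a configuration of automorphisms, and it is exactly here that the hypotheses $|V(\H)|\geq 3$ and endo-triviality — via the limited size of $\mathrm{Aut}(\H)$ and the connectivity it entails — are genuinely needed.
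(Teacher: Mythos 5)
The paper does not contain a proof of this lemma; it is quoted from \cite{STACStoTOCT}, whose argument in turn rests on results of Larose \cite{La06} about reflexive structures --- a route quite different from yours. I can therefore only assess your sketch on its own merits.

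Your reduction to a binary polymorphism $p$ with $p(a,\cdot)=\mathrm{id}$ and $p(a',\cdot)\neq\mathrm{id}$ (via a minimal counterexample and a walk in the Hamming graph on $V(\H)^{k-1}$) is sound, and your treatment of the subcase $p(a',\cdot)\equiv w$ with $w\notin\{a,a'\}$ is correct; the subcase $w=a$ does in fact fall to the same pattern (one finds $p(\cdot,a)\equiv a$, then $p(x,x)\equiv a$, and then the column $p(\cdot,b)$ for a third vertex $b$ is non-constant yet non-injective). But the two remaining subcases --- $w=a'$, and $p(a',\cdot)$ a non-identity automorphism --- are not ``of the same flavour,'' and you do not prove them. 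In the automorphism subcase everything your toolkit produces (the diagonal $p(x,x)$, the twisted diagonals $p(\gamma^{-1}(x),x)$, all columns $p(\cdot,y)$) is forced by endo-triviality to be an automorphism or a constant, so the mechanism of ``manufacture a non-constant, non-injective endomorphism'' that carried your earlier case never fires; one needs a genuinely different idea, presumably a structural argument about what $\Aut(\H)$ and the edge relation must look like for such a $p$ to exist. (For instance, for $\mathrm{DC}^*_3$ this subcase is vacuous only because no non-identity automorphism fixes a vertex --- a fact particular to that digraph, not a consequence of endo-triviality.) Similarly, in the $w=a'$ subcase one deduces $p(\cdot,a')\equiv a'$ and that the diagonal is an automorphism fixing $a$ and $a'$, and the argument stalls without further input, such as the (true, but unproved here) fact that an endo-trivial reflexive digraph on $\geq 3$ vertices has no universal vertex. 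These are precisely the places where the hypotheses $|V(\H)|\geq 3$ and endo-triviality must do real work, so the heart of the proof is missing.
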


\begin{lemma}[\cite{STACStoTOCT}]\label{l-Ham}
If $\H$ is an endo-trivial reflexive tournament, then $\H$ contains a directed Hamilton cycle.
\end{lemma}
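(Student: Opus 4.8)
The plan is to derive the lemma from two facts: (i)~an endo-trivial reflexive tournament is strongly connected, and (ii)~the irreflexive tournament underlying a strongly connected reflexive tournament is itself strongly connected, so that Camion's theorem~\cite{Ca59} supplies a directed Hamilton cycle whose edges, being non-loops, already lie in $\H$. Step~(ii) is routine: any shortest directed walk between two distinct vertices of $\H$ uses no loops, hence lives in the underlying irreflexive tournament, which is therefore strongly connected; and a directed Hamilton cycle of that irreflexive tournament is a directed Hamilton cycle of $\H$. (The degenerate case of a single reflexive vertex is covered by its self-loop; I would restrict attention to tournaments $\H$ with at least three vertices, as in Lemma~\ref{l-Benoit} and as suffices for our applications, rather than fuss over the two-vertex case.)

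The substance is therefore Step~(i), which I would prove by contraposition. Suppose $\H=\H_1\Rightarrow\cdots\Rightarrow\H_n$ with $n\ge 2$; I will exhibit a non-trivial endomorphism of $\H$. The idea is to collapse everything outside one extreme component onto a single vertex of that component while acting as the identity elsewhere. Concretely, if $|V(\H_2)\cup\cdots\cup V(\H_n)|\ge 2$, fix $v\in V(\H_n)$ and let $e$ be the identity on $V(\H_1)$ and constantly $v$ on $V(\H_2)\cup\cdots\cup V(\H_n)$. Using that every vertex of $\H_1$ has an edge to $v$ (because $\H_1\Rightarrow\H_n$) and that $v$ carries a loop, one checks directly that $e$ is a homomorphism; it is not injective, so not an automorphism, and its image $V(\H_1)\cup\{v\}$ has at least two elements, so $e$ is not constant either; hence $e$ is non-trivial. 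The only situation not covered is $n=2$ with $|V(\H_2)|=1$, in which the symmetric construction --- identity on $V(\H_2)$, constant on $V(\H_1)$ --- is non-trivial as long as $|V(\H_1)|\ge 2$, i.e.\ as long as $\H$ has at least three vertices. Combining Steps~(i) and~(ii), an endo-trivial reflexive tournament with at least three vertices is strongly connected, and Camion's theorem then yields the directed Hamilton cycle.

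An alternative route for Step~(i) is to go through Lemma~\ref{lem:endo-retraction-trivial}: the same collapsing maps are in fact retractions onto proper induced subtournaments, so they contradict retract-triviality directly. This saves re-checking that the maps are non-automorphisms but is otherwise the identical construction.

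I expect the only real obstacle to be bookkeeping rather than depth: one must make sure the collapsing endomorphism is genuinely non-trivial across the various cases for the sizes of the components --- in particular isolating the genuine exception on two vertices --- and one must be a little careful to read ``directed Hamilton cycle'' for a reflexive digraph via its underlying irreflexive tournament so that Camion's theorem applies without modification.
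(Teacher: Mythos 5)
Your proof is correct and matches the approach the paper merely gestures at: the paper cites the lemma from \cite{STACStoTOCT} and says it ``follows from the well-known fact that every strongly connected tournament has a directed Hamilton cycle,'' so you are supplying exactly the suppressed half --- showing by contraposition, via the collapsing endomorphism built from the component decomposition $\H_1\Rightarrow\cdots\Rightarrow\H_n$ (equivalently a non-identity retraction, if one routes through Lemma~\ref{lem:endo-retraction-trivial}), that a non-strongly-connected reflexive tournament on at least three vertices has a non-trivial endomorphism --- before applying Camion's theorem to the underlying irreflexive tournament. Your caution about the two-vertex case is also well-placed: $\TT_2$ is endo-trivial under the paper's definitions (its only endomorphisms are the identity and the two constant maps) yet has no directed Hamilton cycle, so the ``at least three vertices'' restriction is genuinely needed for the literal statement; this causes no trouble in the paper since every $\H_0$ to which the lemma is applied arises as the image of a non-constant retraction of a non-trivial strongly connected reflexive tournament, hence is itself strongly connected and therefore has at least three vertices.
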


\begin{lemma}[\cite{STACStoTOCT}]
If $\H$ is an endo-trivial reflexive tournament, then every homomorphic image of~$\H$ of size $1<n<|V(\H)|$ has a double edge.
\label{lem:under-m}
\end{lemma}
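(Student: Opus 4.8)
The plan is to argue by contraposition. Suppose $\H$ has a homomorphic image $\H'$ with $1<|V(\H')|<|V(\H)|$ that has \emph{no} double edge; I will build a non-trivial endomorphism of $\H$, contradicting endo-triviality. Fix a vertex- and edge-surjective homomorphism $h\colon \H\to\H'$. First I would record two easy facts: $\H'$ is reflexive (loops map to loops and $h$ is onto), and $\H'$ is semicomplete (as $\H$ is and $h$ is onto); together with the no-double-edge hypothesis this makes $\H'$ a reflexive tournament on some $n$ vertices with $1<n<|V(\H)|$.

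The core of the argument concerns the fibres $C_v:=h^{-1}(v)$, $v\in V(\H')$, which partition $V(\H)$ into $n$ non-empty blocks, at least one of size $\geq 2$ since $n<|V(\H)|$. \textbf{Key claim:} between any two distinct fibres $C_u,C_w$ all arcs of $\H$ run in the same direction. Indeed, if $a\to b$ and $b'\to a'$ for some $a,a'\in C_u$ and $b,b'\in C_w$, then applying the homomorphism $h$ yields both $(u,w)$ and $(w,u)$ in $E(\H')$, the forbidden double edge. Now pick one representative $r_v\in C_v$ for each $v$. I claim $v\mapsto r_v$ is an isomorphism from $\H'$ onto the subtournament of $\H$ induced by $\{r_v : v\in V(\H')\}$: an arc $(u,w)$ of $\H'$ lifts, by edge-surjectivity of $h$, to an arc of $\H$ from $C_u$ to $C_w$, whence $r_u\to r_w$ by the Key Claim; conversely $r_u\to r_w$ forces $(u,w)\in E(\H')$ since $h$ is a homomorphism; and the loops match because $\H$ is reflexive. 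Write $g\colon\H'\to\H$ for this embedding.

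Finally, $e:=g\circ h$ is an endomorphism of $\H$ with image exactly $\{r_v : v\in V(\H')\}$, of size $n$. Since $n>1$, $e$ is not a constant map; since $n<|V(\H)|$, $e$ is not onto and hence not an automorphism. Thus $e$ is a non-trivial endomorphism, contradicting the endo-triviality of $\H$; so every homomorphic image of the stated size contains a double edge.

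I expect the one genuinely delicate point to be the Key Claim together with its consequence that $v\mapsto r_v$ yields an \emph{induced} copy of $\H'$ inside $\H$ — this is exactly where the no-double-edge hypothesis is consumed, and it relies on keeping the edge-surjectivity of $h$ (not merely vertex-surjectivity) in play throughout. The remaining steps are routine bookkeeping; note in particular that this route needs neither Lemma~\ref{l-Ham} nor Lemma~\ref{l-Benoit}.
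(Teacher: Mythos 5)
Your proof is correct and complete: the no-double-edge hypothesis forces all arcs of $\H$ between two distinct fibres to point the same way, so choosing one representative per fibre embeds $\H'$ back into $\H$ as an induced subtournament, and composing gives an endomorphism with image of size strictly between $1$ and $|V(\H)|$, which is neither constant nor an automorphism. The paper itself imports this lemma from \cite{STACStoTOCT} without reproducing an argument, and your derivation is the standard one for that result; the only cosmetic remark is that edge-surjectivity of $h$ is not actually essential at the step you flag (the tournament property of $\H$ plus the absence of a double edge in $\H'$ already force $(r_u,r_w)\in E(\H)$ whenever $(u,w)\in E(\H')$), but using it as you do is perfectly valid.
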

\begin{corollary}\label{c-main}
If $\H$ is an endo-trivial reflexive digraph on at least three vertices, then $\QCSP(\H)$ is \NP-hard (in fact it is even \Pspace-complete). 
\end{corollary}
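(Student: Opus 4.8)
The corollary combines the preceding lemmas with a standard hardness result for QCSP. The plan is as follows. By Lemma~\ref{l-Benoit}, since $\H$ is endo-trivial with at least three vertices, every polymorphism of $\H$ is essentially unary. By Lemma~\ref{lem:endo-retraction-trivial} (or rather, by endo-triviality directly) every such essentially unary polymorphism is built from a trivial endomorphism, i.e.\ an automorphism or a constant map. Now I would invoke the known algebraic characterisation of QCSP hardness via surjective polymorphisms: if the surjective polymorphisms of a structure are all ``essentially unary coming from automorphisms'' (equivalently, the structure has no non-trivial surjective polymorphism beyond projections composed with automorphisms), then QCSP$(\H)$ is co-NP-hard, and combined with the lack of a near-unanimity / other good operations one gets \Pspace-completeness. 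The cleanest route is to cite the result (from the QCSP literature, e.g.\ the work on which \cite{STACStoTOCT} and its companions rely) stating that if a finite structure $\B$ with at least two elements has the property that all of its surjective polymorphisms are essentially unary (equivalently, generated by its automorphism group together with projections), then QCSP$(\B)$ is \Pspace-complete. Since $|V(\H)|\geq 3$, the hypothesis of this result is met, giving \Pspace-completeness and hence \NP-hardness.

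In more detail, the key steps in order are: (i) apply Lemma~\ref{l-Benoit} to conclude every polymorphism of $\H$ is essentially unary; (ii) observe, using endo-triviality, that every endomorphism of $\H$ (in particular the unary operation underlying any essentially unary polymorphism) is an automorphism or a constant map; (iii) note that among surjective polymorphisms, the underlying unary operations must be surjective, hence automorphisms, so all surjective polymorphisms of $\H$ are automorphisms applied to a single coordinate; (iv) invoke the \Pspace-completeness criterion for QCSP on templates whose surjective polymorphisms are all essentially unary (projection composed with an automorphism), which applies here because $\H$ has at least three elements and so is not a trivial two-element case; (v) conclude QCSP$(\H)$ is \Pspace-complete, hence \NP-hard. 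Membership in \Pspace\ is immediate since QCSP$(\B)$ is always in \Pspace.

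The main obstacle is step (iv): pinning down exactly which hardness meta-theorem to cite and checking that its hypotheses match our situation. The subtlety is that ``essentially unary surjective polymorphisms'' is the condition under which the \emph{Chen conjecture}-style reductions give \Pspace-hardness, and one must be careful that the relevant theorem is stated for general finite relational structures (or at least for digraphs) rather than only for structures admitting constants, and that the automorphism group does not accidentally collapse the construction. Here the situation is favourable: Lemma~\ref{l-Ham} guarantees $\H$ has a directed Hamilton cycle, so $\H$ is far from transitive, and Lemma~\ref{lem:under-m} controls its homomorphic images, which is precisely the combinatorial input needed to run the standard $\Pspace$-hardness construction (encoding a QBF by forcing universally quantified variables to range over a suitable ``gadget'' built from the Hamilton cycle). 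I would either cite this packaged result directly or, if a self-contained argument is wanted, sketch the reduction from quantified $3$-colourability / QBF using the Hamilton cycle of $\H$ as the alphabet gadget and Lemma~\ref{lem:under-m} to ensure the universal branching cannot be trivialised.
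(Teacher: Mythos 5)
Your proposal follows essentially the same route as the paper: apply Lemma~\ref{l-Benoit} to conclude all polymorphisms (hence all surjective polymorphisms) of~$\H$ are essentially unary, then invoke the \Pspace-completeness criterion from \cite{BBCJK} (Theorem~5.2 there) for structures with more than one element whose surjective polymorphisms are all essentially unary. The intermediate steps (ii)--(iii) identifying the unary operations as automorphisms, and the closing worry about needing Lemmas~\ref{l-Ham} and~\ref{lem:under-m} to run a hand-rolled QBF reduction, are harmless but unnecessary once the \cite{BBCJK} meta-theorem is available.
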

\begin{proof}
This follows from Lemma~\ref{l-Benoit} and \cite{BBCJK}.
\end{proof}

\section{The Proof of the NP-Hard Cases of the Dichotomy}\label{s-tour}


We commence with the \NP-hard cases of the dichotomy. The simpler \NL\ cases will follow, \textcolor{black}{in Section~\ref{s-nl}}. \textcolor{black}{In this section, the central results will appear as Corollaries~\ref{cor:strongly-connected1} and \ref{cor:initial-full}. However, each of these proceeds via an induction where there are two base cases and two inductive (general) cases. Thus, there are eight principal propositions. Propositions~\ref{prop:main1}, \ref{prop:main2}, \ref{prop:main-general1} and \ref{prop:main-general2} lead to Corollary~\ref{cor:strongly-connected1} and Propositions~\ref{prop:mainA1}, \ref{prop:mainA2}, \ref{prop:main-generala-1} and \ref{prop:main-general-a2} lead to Corollary~\ref{cor:initial-full}. The base cases are the simplest to understand and are given in the most detail. The principal propositions commence in Section~\ref{sec:principal-lemma}. Before this we introduce our construction with some supporting lemmas.
}

\subsection{The NP-Hardness Gadget}
We introduce the gadget $\Cylm$ from \cite{STACStoTOCT} drawn in Figure~\ref{fig:Photo2}. Take $m$ disjoint copies of the (reflexive) directed $m$-cycle $\DCm$ arranged in a cylindrical fashion so that there is an edge from $i$ in the $j$th copy to $i$ in the $(j+1)$th copy (drawn in red), and an edge from $i$ in the $(j+1)$th copy to $(i+1) \bmod m$ in the $j$th copy (drawn in green). We consider $\DCm$ to have vertices $\{1,\ldots,m\}$.
Recall that every strongly connected (reflexive) tournament on $m$ vertices has a Hamilton Cycle $\HC_m$.
We label the vertices of $\HC_m$ as $1,\ldots,m$ in order to attach it to the gadget $\Cylm$.\footnote{The superscripted $*$ indicates that the corresponding graph is reflexive. This notation is inherited from \cite{STACStoTOCT}. It is not significant since we could safely assume every graph we work with is reflexive as the template is a reflexive tournament.}
%

\begin{figure}
\centering

\input{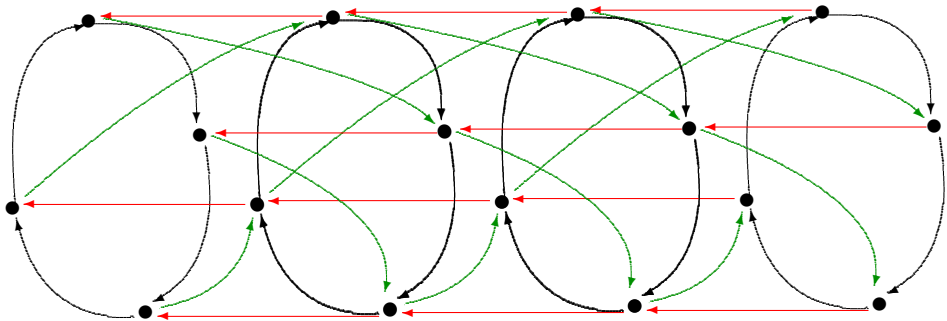}
\caption{The gadget $\Cylm$ in the case $m:=4$ (self-loops are not drawn). We usually visualise the right-hand copy of~$\DC^*_4$ as the ``bottom'' copy and then we talk about vertices ``above'' and ``below'' according to the red arrows.}
\label{fig:Photo2}
\end{figure}

The following lemma follows from induction on the copies of~$\DCm$, since a reflexive tournament has no double edges.
\begin{lemma}[\cite{STACStoTOCT}]
In any homomorphism $h$ from $\Cylm$, with bottom cycle~$\DCm$, to a reflexive tournament, if $|h(\DCm)|=1$, then $|h(\Cylm)|=1$.
\label{lem:claim1}
\end{lemma}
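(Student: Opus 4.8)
The plan is to prove the statement by induction on the copies of $\DCm$ making up $\Cylm$, working upwards from the bottom copy along the red arcs. Fix a homomorphism $h$ from $\Cylm$ to a reflexive tournament $\H$, write the vertices of the $j$th copy of $\DCm$ inside $\Cylm$ as $v^j_1,\dots,v^j_m$ with $j=1$ the bottom copy, and recall from the construction of $\Cylm$ that its red arcs are the pairs $(v^j_i,v^{j+1}_i)$ while each green arc leaves some vertex $v^{j+1}_i$ and enters a vertex of the $j$th copy. The hypothesis $|h(\DCm)|=1$ says precisely that $h(v^1_i)=z$ for all $i\in[m]$ and some fixed $z\in V(\H)$; this is the base case of the induction.

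For the inductive step, assume $h(v^j_i)=z$ for all $i\in[m]$ and fix $i\in[m]$. The red arc $(v^j_i,v^{j+1}_i)$ yields $(z,h(v^{j+1}_i))\in E(\H)$. On the other hand, the green arc out of $v^{j+1}_i$ enters some vertex $v^j_{i'}$ of the $j$th copy, and since $h(v^j_{i'})=z$ by the induction hypothesis this yields $(h(v^{j+1}_i),z)\in E(\H)$. Thus both $(z,h(v^{j+1}_i))$ and $(h(v^{j+1}_i),z)$ lie in $E(\H)$; as $\H$ is a reflexive tournament it has no double edge, so this is possible only if $h(v^{j+1}_i)=z$. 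Since $i$ was arbitrary, $h$ is constantly $z$ on the $(j+1)$st copy, and the induction goes through. Once all copies are exhausted we obtain $h(\Cylm)=\{z\}$, that is, $|h(\Cylm)|=1$.

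I do not expect a genuine obstacle here: the argument is little more than bookkeeping. The one point that requires care is to use both the red arc into and the green arc out of a vertex $v^{j+1}_i$ of the next copy, because it is exactly the presence of arcs in \emph{both} directions between $z$ and $h(v^{j+1}_i)$ --- forbidden by the no-double-edge property of reflexive tournaments --- that forces the two vertices to coincide. One should also check against the definition of $\Cylm$ that every vertex of the $(j+1)$st copy is indeed the tail of a green arc whose head lies in the $j$th copy, which is immediate from the construction.
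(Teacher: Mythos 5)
Your proof is correct and follows exactly the approach the paper indicates (and attributes to \cite{STACStoTOCT}): induction on the copies of $\DCm$, using the red arc into and green arc out of each vertex of the next copy to manufacture a double edge, which a reflexive tournament forbids unless the two images coincide. No further comment needed.
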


We will use another property, denoted \textbf{$(\dagger)$}, of~$\Cylm$, which is that the retractions from~$\Cylm$ to its bottom copy of~$\DCm$, once propagated through the intermediate copies, induce on the top copy precisely the set of automorphisms of~$\DCm$. That is, the top copy of~$\DCm$ is mapped isomorphically to the bottom copy, and all such isomorphisms may be realised. The reason is that in such a retraction, the $(j+1)$th copy may either map under the identity to the $j$th copy, or rotate one edge of the cycle clockwise,  and $\Cylm$ consists of sufficiently many (namely $m$) copies of~$\DCm$.
Now let $\H$ be a reflexive tournament that contains a subtournament~$\H_0$ on $m$ vertices that is endo-trivial.
By Lemma~\ref{l-Ham}, we find that $\H_0$ contains at least one directed Hamilton cycle $\HC_0$. 
Define $\mathrm{Spill}_m(\H[\H_0,\HC_0])$ as follows. Begin with $\H$ and add a copy of the gadget $\Cylm$, where the bottom copy of~$\DCm$ is identified with $\HC_0$, to build a 
digraph~$\F(\H_0,\HC_0)$. Now ask, for some $y \in V(\H)$ whether there is a retraction~$r$ of~$\F(\H_0,\HC_0)$ to~$\H$ so that some vertex~$x$ (not dependent on $y$) in the top copy of~$\DCm$ in~$\Cylm$ is such that $r(x)=y$. Such vertices~$y$ comprise the set $\mathrm{Spill}_m(\H[\H_0,\HC_0])$.

\medskip
\noindent
{\bf Remark~1.}
If $x$ belongs to some copy of~$\DCm$ that is not the top copy, we can find a vertex~$x'$ in the top copy of~$\DCm$ and a retraction $r'$ from $\F(\H_0,\HC_0)$ to $\H$ with $r'(x')=r(x)=y$, namely by letting $r'$ map the vertices of higher copies of~$\DCm$ to the image of their corresponding vertex in the copy that contains~$x$. In particular this implies that $\mathrm{Spill}_m(\H[\H_0,\HC_0])$ contains $V(\H_0)$.

\medskip
\noindent
We note that the set $\mathrm{Spill}_m(\H[\H_0,\HC_0])$ is potentially dependent on which Hamilton cycle in $\H_0$ is chosen. We now recall that $\mathrm{Spill}_m(\H[\H_0,\HC_0])=V(\H)$ if $\H$ retracts to~$\H_0$.


\begin{lemma}[\cite{STACStoTOCT}]
If $\H$ is a reflexive tournament that retracts to a 
subtournament~$\H_0$ with Hamilton cycle~$\HC_0$, then $\mathrm{Spill}_m(\H[\H_0,\HC_0])=V(\H)$.
\label{lem:spillage}
\end{lemma}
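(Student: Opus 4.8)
The plan is to exploit the retraction $\rho\colon \H\to\H_0$ (the identity on $\H_0$) to reduce every target vertex $y$ to a vertex of the Hamilton cycle $\HC_0$, and then to use that the gadget $\Cylm$ can both rotate a cycle and perform one such ``swap'' as it is propagated from its bottom copy to its top one. Fix $y\in V(\H)$, set $u:=\rho(y)$, and write $\HC_0 = v_1\to v_2\to\cdots\to v_m\to v_1$ with $u=v_k$. Since $\rho$ is a homomorphism and $\H_0$, being a reflexive tournament, has no double edges, for every $w\in V(\H_0)\setminus\{v_k\}$ one has $w\to y\iff w\to v_k$ and $y\to w\iff v_k\to w$: for instance, if $w\to v_k$ but $y\to w$ then $\rho(y)=v_k\to\rho(w)=w$ contradicts $w\to v_k$ in $\H_0$, and dually. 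In particular $v_{k-1}\to y$ and $y\to v_{k+1}$, so the sequence $\HC_0'$ obtained from $\HC_0$ by replacing $v_k$ by $y$ is again a directed $m$-cycle of $\H$, now passing through $y$.

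I would then define $r$ to be the identity on $\H$ and map $\Cylm$ into $\H$ one copy at a time, the bottom copy of $\DCm$ being $\HC_0$ itself. Between consecutive copies I would use only three kinds of move, each verified to be a homomorphism from the red and green edges of $\Cylm$ together with the adjacencies above: (a) keep the current $m$-cycle; (b) rotate it forward by one, as in property $(\dagger)$; and (c) the swap, turning a rotation of $\HC_0$ into the matching rotation of $\HC_0'$. Spreading the rotations suitably over the $m-1$ available inter-copy transitions places $y$ at the chosen top vertex $x$; since $r$ is the identity on $\H$ this is a retraction of $\F(\H_0,\HC_0)$ to $\H$ with $r(x)=y$, witnessing $y\in\mathrm{Spill}_m(\H[\H_0,\HC_0])$. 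When $y\in V(\H_0)$ move (c) is not needed and the conclusion is already contained in $(\dagger)$ (compare Remark~1).

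The delicate point is that $\Cylm$ has exactly $m$ copies, hence only $m-1$ transitions --- just enough to realise any one of the $m$ rotations of a fixed cycle. Move (c) consumes a transition unless it can be merged with a rotation step, and whether that merge is legal depends on the direction of the single edge of $\H$ joining $y$ and $u$. In the extreme cases (where bringing $y$ to $x$ requires the full rotation by $m-1$ and the swap cannot be merged) one must instead insert $y$ at a less obvious position of a rotated $\HC_0$ --- possible by using the in- and out-neighbourhoods of $v_k$ inside $\H_0$ --- or fix $x$ in the top copy in advance so that no such obstructive $y$ arises. Checking that one of these escapes always works, through a careful accounting of the red/green edge conditions across the copies in each residual case, is the real content of the proof; the remainder is just transporting the first step into the gadget.
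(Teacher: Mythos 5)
You cannot be compared against an in-paper argument here, because the paper imports Lemma~\ref{lem:spillage} from \cite{STACStoTOCT} without proof; so I am judging your proposal on its own merits. Your opening step is correct and is the right one: since $\rho$ is edge-preserving and a reflexive tournament has no double edges, $y$ has exactly the same adjacencies to $V(\H_0)\setminus\{v_k\}$ as $v_k=\rho(y)$, so replacing $v_k$ by $y$ in $\HC_0$ gives another directed $m$-cycle, and the legal inter-copy moves are indeed ``keep'', ``rotate by one'', and a ``swap''.

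The difficulty you flag in your last paragraph is, however, genuine, and you do not close it -- you explicitly defer ``the real content of the proof''. Concretely: checking the red and green edges shows that the swap is legal in exactly one of its two forms (merged with a rotation or not), the form being dictated by the direction of the unique edge of $\H$ between $y$ and $v_k$; so the swap has a rigid rotation cost of $0$ or $1$ that you do not get to choose. With only $m-1$ transitions, exactly one of the $m$ total rotations -- hence exactly one top-copy position for $y$ -- is unattainable by moves (a)--(c). Neither of your escapes is established, and the second one (choosing $x$ uniformly in advance) can actually fail: take $\H$ on $\{v_1,v_2,v_3,y_1,y_2,y_3\}$ retracting to $\H_0=\DC^*_3$ by $y_j\mapsto v_j$, where each $y_j$ is a twin of $v_j$ with $y_j\to v_j$ and $y_i\to y_j$ iff $v_i\to v_j$. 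For a suitable position the middle-copy vertex sitting above $v_1$ and below the top-copy vertex carrying $y_1$ would have to satisfy $v_1\to z\to y_1$, and no such $z$ exists; by the rotational symmetry the three $y_j$ together obstruct all $m=3$ top positions, so no single $x$ serves every $y$. What your move-counting does deliver is that each $y$ is reachable at $m-1$ of the $m$ top-copy positions, i.e.\ at a position depending on $y$ -- which proves the lemma only under the reading of $\mathrm{Spill}_m$ in which the witnessing top-copy vertex may vary with $y$. You neither adopt that reading nor supply the missing argument for the fixed-$x$ version you aim at, so as written the proof is incomplete.
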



We now review a variant of a construction from \cite{LMCS2015}. Let $\G$ be a graph containing $\H$ where $|V(\H)|$ is of size $n$. Consider all possible functions $\lambda:[n]\rightarrow V(\H)$ (let us write $\lambda \in V(\H)^{[n]}$ of cardinality $N$). For some such $\lambda$, let $\mathcal{G}(\lambda)$ be the graph $G$ enriched with constants $c_1,\ldots,c_n$ where these are interpreted over $V(\H)$ according to $\lambda$ in the natural way (acting on the subscripts). We use calligraphic notation to remind the reader the signature has changed from $\{E\}$ to $\{E,c_1,\ldots,c_n\}$ but we will still treat these structures as graphs. If we write $\G(\lambda)$ without calligraphic notation we mean we look at only the $\{E\}$-reduct, that is, we drop the constants. Of course, $\G(\lambda)$ will always be $\G$.

Let $\mathcal{G}=\bigotimes_{\lambda \in V(\H)^{[n]}} \mathcal{G}(\lambda)$. That is, the vertices of $\mathcal{G}$ are $N$-tuples over $V(\G)$ and there is an edge between two such vertices $(x_1,\ldots,x_N)$ and $(y_1,\ldots,y_N)$ if and only if $(x_1,y_1),\ldots,(x_N,y_N) \in E(\G)$. Finally, the constants $c_i$ are interpreted as $(x_1,\ldots,x_N)$ so that $\lambda_1(c_i)=x_1, \ldots, \lambda_N(c_i)=x_N$. An important induced substructure of $\mathcal{G}$ is $\{(x,\ldots,x):x \in V(\G)\}$. It is a copy of $\G$ called the \emph{diagonal} copy and will play an important role in the sequel. To comprehend better the construction of $\mathcal{G}$ from the sundry $\mathcal{G}(\lambda)$, confer on Figure~\ref{fig:dir-prod}.

The final ingredient of our fundamental construction involves taking some structure $\mathcal{G}$ and making its canonical query with all vertices other than those corresponding to $c_1,\ldots,c_n$ becoming existentially quantified variables (as usual in this construction). We then turn the $c_1,\ldots,c_n$ to variables $y_1,\ldots,y_n$ to make $\phi_\mathcal{G}(y_1,\ldots,y_n)$. Let $\mathcal{H}$ come from the given construction in which $G=H$. It is proved in \cite{LMCS2015} that $\H' \models \forall y_1,\ldots,y_n \ \phi_{\mathcal{H}}(y_1,\ldots,y_n)$ if and only if $\QCSP(\H) \subseteq \QCSP(\H')$ (here we identify $\QCSP(\H)$ with the set of sentences that form its yes-instances). By way of a side note, let us consider a $k$-ary relation $R$ over $\H$ with tuples $(x^1_1,\ldots,x^1_k)$, \ldots, $(x^r_1,\ldots,x^r_k)$. For $i \in [r]$, let $\lambda_i$ map $(c_1,\ldots,c_k)$ to $(x^i_1,\ldots,x^i_k)$. Let $\mathcal{H}=\bigotimes_{\lambda \in \{\lambda_1,\ldots,\lambda_r\}} \mathcal{H}(\lambda)$. Then $\phi_{\mathcal{H}}(y_1,\ldots,y_n)$ is the closure of $R$ under the polymorphisms of $\H$.

\begin{figure}
\begin{center}
$
\xymatrix{
& c_2 \ar[dr] & \\
c_1 \ar[ur] & & \bullet \ar[ll] \\
}
$
\
\resizebox{!}{2cm}{$\begin{array}{c} \mbox{\ } \\ \times \end{array}$}
\
$
\xymatrix{
& \bullet \ar[dr] & \\
c_2  \ar[ur] & & c_1 \ar[ll] \\
}
$
\
\resizebox{!}{2cm}{$\begin{array}{c} \mbox{\ } \\ = \end{array}$}
\
$
\xymatrix{
\bullet \ar[dr] & c_2 \ar[dr] & \bullet \ar[dll] \\
\bullet \ar[dr] & \bullet \ar[dr] & \bullet \ar[dll] \\
c_1 \ar[uur]  & \bullet \ar[uur] & \bullet \ar@/^/[uull] \\
}
$

$
\xymatrix{
& c_2 \ar[dr] & \\
c_1 \ar[ur] & & \bullet \ar[ll] \\
}
$
\
\resizebox{!}{2cm}{$\begin{array}{c} \mbox{\ } \\ \times \end{array}$}
\
$
\xymatrix{
& c_2 \ar[dr] & \\
\bullet \ar[ur] & & c_1 \ar[ll] \\
}
$
\
\resizebox{!}{2cm}{$\begin{array}{c} \mbox{\ } \\ = \end{array}$}
\
$
\xymatrix{
\bullet \ar[dr] & \bullet \ar[dr] & \bullet \ar[dll] \\
\bullet \ar[dr] & c_2 \ar[dr] & \bullet \ar[dll] \\
c_1 \ar[uur]  & \bullet \ar[uur] & \bullet \ar@/^/[uull] \\
}
$
\end{center}
\caption{Illustrations of direct product with constants.}
\label{fig:dir-prod}
\end{figure}
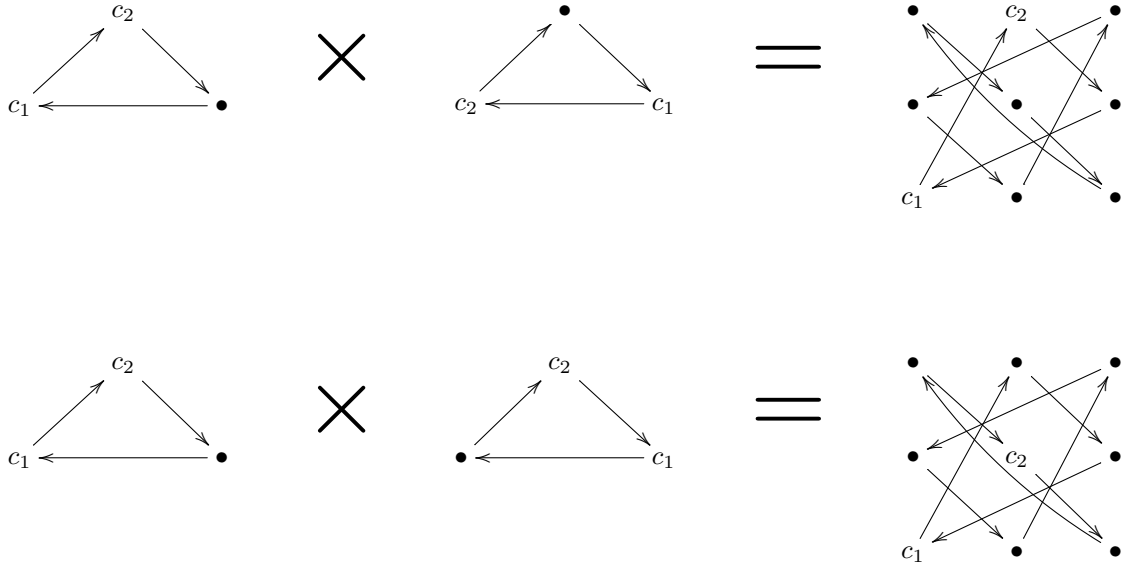

\subsection{The strongly connected case: Two Base Cases}
\label{sec:principal-lemma}

Recall that if $\H$ is a (reflexive) endo-trivial tournament, 
then $\QCSP(\H)$ is \NP-hard due to Lemma~\ref{l-Benoit} combined with 
the results from \cite{BBCJK}. \textcolor{black}{Indeed, Theorem~5.2 in \cite{BBCJK} states that any $\H$ with more than one element, such that all surjective polymorphisms of $\H$ are essentially unary, satisfies that $\QCSP(\H)$ is \Pspace-complete.} However $\H$ may not be endo-trivial. We will now show how to deal with the case
where $\H$ is not endo-trivial but retracts to an endo-trivial subtournament. For doing this we use the \NP-hardness gadget, but we need to distinguish between two different cases. 


\begin{proposition}[Base Case I.]
Let $\H$ be a reflexive tournament that retracts to an endo-trivial subtournament $\H_0$ with Hamilton cycle $\HC_0$. 
Assume that $\H$ retracts to $\H'_0$ for every isomorphic copy $\H'_0=i(\H_0)$ of~$\H_0$ in~$\H$ with $\mathrm{Spill}_m(\H[\H'_0,i(\HC_0)])=V(\H)$.
Then $\H_0$-{\sc Retraction} can be polynomially reduced to $\QCSP(\H)$.
\label{prop:main1}
\end{proposition}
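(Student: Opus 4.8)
The plan is to reduce $\H_0$-{\sc Retraction} to $\QCSP(\H)$ by building, from an input instance $\G$ of $\H_0$-{\sc Retraction} (so $\H_0$ is an induced subgraph of $\G$), a $\QCSP(\H)$ sentence $\psi_\G$ that is true on $\H$ iff $\G$ retracts to $\H_0$. First I would use the gadget machinery: attach to $\G$ a copy of $\Cylm$ whose bottom cycle $\DCm$ is identified with the chosen Hamilton cycle $\HC_0$ of the distinguished copy of $\H_0$ inside $\G$; call the resulting digraph $\F$. The key point, from property $(\dagger)$ and Lemma~\ref{lem:spillage}, is that a retraction of $\F$ onto $\H$ is forced to send the bottom copy of $\DCm$ into a Hamilton cycle of \emph{some} isomorphic copy $\H'_0$ of $\H_0$ in $\H$ (because retractions of $\Cylm$ behave rigidly on the cycles, and by Lemma~\ref{lem:under-m} a smaller homomorphic image would create a double edge, impossible in a reflexive tournament, and the spillage hypothesis pins down the relevant copies). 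Thus deciding whether $\G$ retracts to $\H_0$ is equivalent to deciding whether $\F$ retracts to $\H$ in a way that maps the bottom cycle isomorphically onto $\HC_0$ — but this latter ``marked'' retraction requirement is exactly what the constants/free-variable construction of \cite{LMCS2015} lets us express inside a QCSP instance.

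Concretely, the second step is to invoke the $\mathcal{G}(\lambda)$ / $\mathcal{G} = \bigotimes_\lambda \mathcal{G}(\lambda)$ construction applied to $\F$ in place of $\G$, turning the constants $c_1,\ldots,c_n$ (which we set so that $\lambda$ ranges over exactly those labellings that realise $\H_0$ as a subtournament of $\H$ on the ``cycle'' coordinates) into universally quantified variables $y_1,\ldots,y_n$, and existentially quantifying the rest; this yields a QCSP sentence $\psi_\G := \forall y_1,\ldots,y_n\ \phi_{\mathcal{F}}(y_1,\ldots,y_n)$. By the Galois-type result behind Corollary~\ref{cor:galois} (each subset of $V(\H_i)$, and here $V(\H_0)$, is QCSP-definable in one free variable), we can further restrict the universal variables $y_1,\ldots,y_n$ to range over $V(\H_0)$, so that the universal block genuinely quantifies over all ways of embedding $\HC_0$. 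Then $\H \models \psi_\G$ iff for every such embedding of $\HC_0$ into $\H$ the rest of $\F$ (hence of $\G$) can be completed to a homomorphism into $\H$; combining with the spillage/retraction hypotheses, this happens iff $\G$ retracts to $\H_0$.

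The third step is the soundness/completeness verification split into the two directions. For the ``if'' direction: if $\G$ retracts to $\H_0$, compose with any isomorphism $\H_0 \to \H'_0$, extend through $\Cylm$ using $(\dagger)$, and use the hypothesis ``$\H$ retracts to $\H'_0$ whenever $\mathrm{Spill}_m = V(\H)$'' together with Lemma~\ref{lem:spillage} to get a global homomorphism into $\H$ for every choice of the universal variables; hence $\H \models \psi_\G$. For the ``only if'' direction: given that $\H \models \psi_\G$, instantiate the universal variables by the identity embedding of $\HC_0 = \DCm$; the resulting homomorphism of $\F$ into $\H$ restricts to a homomorphism of $\G$ into $\H$ that is forced (via Lemma~\ref{lem:claim1} ruling out collapse and via $(\dagger)$ forcing the top copy to be an automorphic image of $\DCm$, propagated back down) to map $\H_0$ into an isomorphic copy, and then postcomposing with the retraction onto that copy and an isomorphism back gives a retraction $\G \to \H_0$.

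The main obstacle I expect is the ``only if'' direction — precisely, showing that some instantiation of the universal variables \emph{forces} the distinguished copy of $\H_0$ in $\G$ to be mapped to a genuine isomorphic copy of $\H_0$ rather than being partially collapsed or smeared into a larger subtournament of $\H$. This is where the combinatorial rigidity of $\Cylm$ (Lemma~\ref{lem:claim1}, property $(\dagger)$, and the no-double-edge consequence of Lemma~\ref{lem:under-m}) has to be combined carefully with the spillage bookkeeping: one must argue that the copies $\H'_0$ of $\H_0$ in $\H$ that can receive the bottom cycle are exactly those with $\mathrm{Spill}_m(\H[\H'_0,i(\HC_0)]) = V(\H)$, and that on all of these $\H$ retracts to $\H'_0$ — which is exactly the extra hypothesis of the proposition, so the delicate part is organising the argument so that this hypothesis is used in precisely the right place. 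The rest (the $\mathcal{G}$-construction, the reduction being polynomial-time, and the use of \cite{LMCS2015} to pass between QCSP containment and the universally-quantified canonical query) is routine given the cited results.
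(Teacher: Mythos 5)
Your proposal gets the broad ingredients right (the $\Cylm$ gadget, the LMCS product, the need to extract a map onto an isomorphic copy of $\H_0$ and pin down its spill) but it diverges from the actual construction in several places that are not just cosmetic, and the deviations are what your own ``main obstacle'' paragraph runs into.

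First, you build $\F$ by attaching one copy of $\Cylm$ to $\G$ and then take the product $\bigotimes_\lambda \mathcal{F}(\lambda)$. The paper instead first forms $\G'$ by gluing a \emph{separate, disjoint copy of} $\H$ onto $\G$ along $\H_0$, takes the product $\mathcal{G}' = \bigotimes_{\lambda\in V(\H)^{[n]}} \mathcal{G}'(\lambda)$ (so $\mathcal{G}'$ contains $\mathcal{H}=\H^N$), and only \emph{afterwards} attaches a fresh $\Cylm$ for \emph{every} vertex $v \in V(\mathcal{H})\setminus V(\H_0^d)$, each with its bottom cycle glued to the diagonal $\HC_0^d$ and its top to $v$. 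The reason this is needed is exactly the obstacle you flag: to establish $\mathrm{Spill}_m(\H[\H'_0, i(\HC_0)]) = V(\H)$ in the backward direction, one uses a \emph{surjective} assignment $\lambda$ of the universal variables so that the induced restriction $s'\colon \H^N\to\H$ is a surjective polymorphism; then for any $y\in V(\H)$ some $v\in\mathcal{H}$ has $s'(v)=y$, and the $\Cylm$ hanging off $v$ (with $s'$ mapping $\H_0^d$ isomorphically to $\H'_0$, by Lemmas~\ref{lem:uniformly-z} and \ref{lem:under-m}) witnesses $y$ in the spill. With your single, pre-product $\Cylm$ there is no such $\Cylm$ per $v\in\mathcal{H}$, and there is no $\H^N$ inside your product at all (since your $\F$ contains $\H_0$ but not $\H$), so there is no polymorphism to reason about.

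Second, you restrict the universal variables to $V(\H_0)$ and in the ``only if'' direction instantiate them by the identity on $\HC_0$. Both choices undercut the argument. Restricting to $V(\H_0)$ forbids exactly the surjective assignment onto $V(\H)$ that drives the backward direction, and it is not even clearly legitimate here: Corollary~\ref{cor:galois} defines subsets of the strongly connected components $V(\H_i)$, but Proposition~\ref{prop:main1} is stated without any strong connectivity assumption on $\H$. Moreover, instantiating by the identity does not ``restrict to a homomorphism of $\F$ into $\H$'': the witnessing map is from $(\F)^N$ to $\H$, and the constants are the tuples $(\lambda_1(c_i),\ldots,\lambda_N(c_i))$, which are \emph{not} diagonal elements, so fixing the constants to the cycle vertices tells you nothing about what happens on the diagonal copy $\F^d$. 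The paper's proof explicitly shifts attention to the diagonal copies $\H^d_0\subset\H^d\subset\G'^d$ and uses the surjectivity of the polymorphism $s'$ (derived from the surjective $\lambda$) together with Lemma~\ref{lem:uniformly-z} to show the diagonal is not collapsed; your identity instantiation provides neither surjectivity nor any handle on the diagonal.

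Finally, the ``if'' direction in the paper works because, after projecting $(\G')^N$ onto the $\lambda^*$-factor, a \emph{retraction of $\G'$ onto $\H$} sends each constant $c_i$ to $\lambda^*(c_i)\in V(\H)$, so an arbitrary instantiation $\lambda^*\in V(\H)^{[n]}$ is handled uniformly. This uses that the glued-on copy of $\H$ is where the constants sit. With constants on the cycle of $\H_0$ and no $\H$ glued in, you would have to compose the retraction $\G\to\H_0$ with an endomorphism of $\H_0$ sending $\HC_0$ to $\lambda^*(\HC_0)$; since $\H_0$ is endo-trivial, such an endomorphism does not exist for most $\lambda^*$, and the argument does not close. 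In short, the gap is not in knowing which lemmas to invoke but in the construction itself: the glued-in copy of $\H$, the post-product per-vertex $\Cylm$ attachments, and the surjective universal instantiation are each essential, and your $\F$-based variant lacks all three.
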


\begin{proof}
Let $m$ be the size of~$|V(\H_0)|$ and $n$ be the size of $|V(\H)|$. 
Let~$\G$ be an instance of $\H_0$-{\sc Retraction}.
We build an instance~$\phi$ of $\QCSP(\H)$
in the following fashion. First, take a copy of~$\H$ together with~$\G$ and build~$\G'$ by identifying these on the copy of~$\H_0$ that they both possess as an 
induced
subgraph.
Now, consider all possible functions $\lambda:[n]\rightarrow V(\H)$. For some such $\lambda$, let $\mathcal{G'}(\lambda)$ be the graph enriched with constants $c_1,\ldots,c_n$ where these are interpreted over some subset of $V(\H)$ according to $\lambda$ in the natural way (acting on the subscripts).

Let $\mathcal{G}'=\bigotimes_{\lambda \in {V(\H)}^{[n]}} \mathcal{G}'(\lambda)$. Let $\G'^d$, $\H^d$ and $\H^d_0$ be the diagonal copies of $\G'$, $\H$ and $\H_0$ in $\mathcal{G}'$. Let $\mathcal{H}$ be the subgraph of $\mathcal{G}'$ induced by $V(\H) \times \cdots \times V(\H)$. Note that the constants $c_1,\ldots,c_n$ live in $\mathcal{H}$. Now build $\mathcal{G}''$ from $\mathcal{G}'$ by augmenting a new copy 
of~$\Cylm$ for every vertex $v \in V(\mathcal{H}) \setminus V(\H^d_0)$. Vertex~$v$ is to be identified with any vertex in the top copy of~$\DCm$ in~$\Cylm$ and the bottom copy of~$\DCm$ is to be identified with $\HC_0$ in $\H^d_0$ according to the identity function. (Thus, in each case, the new vertices are the middle cycles of $\Cylm$ and all but one of the vertices in the top cycle of $\Cylm$.)

Finally, build $\phi$ from the canonical query of $\mathcal{G}''$ where we additionally turn the constants $c_1,\ldots,c_n$ to outermost universal variables. The size of $\phi$ is doubly exponential in $n$ (the size of $H$) but this is constant, so still polynomial in the size of $G$.


We claim that~$\G$ retracts to~$\H_0$ if and only if $\phi \in \QCSP(\H)$.

First suppose that~$\G$ retracts to~$\H_0$. Let $\lambda$ be some assignment of the universal variables of $\phi$ to $\H$. To prove $\phi \in \QCSP(\H)$ it suffices to prove that there is a homomorphism from $\mathcal{G}''$ to $\H$ that extends $\lambda$. Then for this it suffices to prove that there is a homomorphism $h$ from $\mathcal{G}'$ that extends $\lambda$. Let us explain why. Because $\H$ retracts to $\H_0$, we have $\mathrm{Spill}_m(\H[\H_0,\HC_0])=V(\H)$ due to Lemma~\ref{lem:spillage}.  Hence, if $h(x)=y$ for two vertices $x\in V(\mathcal{H})\setminus V(\H^d_0)$ and $y\in V(\H)$, we can always find a retraction of the graph $\F(\H_0,\HC_0)$ to $\H$ that maps~$x$ to~$y$, and we mimic this retraction on the corresponding subgraph in $\mathcal{G}''$. The crucial observation is that this can be done independently for each vertex in $V(\mathcal{H})\setminus V(\H^d_0)$, as two vertices of different copies of $\Cylm$ are only adjacent if they both belong to~$\mathcal{H}$.

Henceforth let us consider the homomorphic image of $\mathcal{G}'$ that is $\mathcal{G}'(\lambda)$. To prove $\phi \in \QCSP(\H)$ it suffices to prove that there is a homomorphism from $\G'(\lambda)$ to $\H$ that extends $\lambda$. Note that it will be sufficient to prove that $\G'$ retracts to $\H$. Let $h$ be the natural retraction from~$\G'$ to~$\H$ that extends the known retraction from $\G$ to $\H_0$. We are done.  

Suppose now $\phi \in \QCSP(\H)$. Choose some surjection for $\lambda$, the assignment of the universal variables of $\phi$ to $\H$. Recall $N=|V(\H)^{[n]}|$. The evaluation of the existential variables that witness $\phi \in \QCSP(\H)$ induces a surjective homomorphism $s$ from $\mathcal{G}''$ to $\H$ which contains within it a surjective homomorphism $s'$ from $\mathcal{H}=\H^N$ to $\H$. Consider the diagonal copy of $\H^d_0 \subset \H^d \subset \G'^d$ in $\mathcal{G}'$. By abuse of notation we will also consider each of $s$ and $s'$ acting just on the diagonal. If $|s'(\H^d_0)|=1$, by construction of $\mathcal{G}''$, we have $|s'(\H^d)|=1$. Indeed, this was the property we noted in Lemma~\ref{lem:claim1}. By Lemma  \ref{lem:uniformly-z}, this would mean $s'$ is uniformly mapping $\mathcal{H}$ to one vertex, which is impossible as $s'$ is surjective. Now we will work exclusively in the diagonal copy $\G'^d$. As $1<|s'(\H^d_0)|<m$ is not possible either due to Lemma~\ref{lem:under-m}, we find that $|s'(\H^d_0)|=m$, and indeed $s'$ maps $\H^d_0$ to a copy of itself in $\H$ which we will call $\H'_0=i(\H^d_0)$ for some isomorphism~$i$. 

We claim that $\mathrm{Spill}_m(\H[\H'_0,i(\HC^d_0)]) = V(\H)$. In order to see this, consider a vertex $y\in V(\H)$. 
As $s'$ is surjective, there exists a vertex~$x\in V(\mathcal{H})$ with $s'(x)=y$. By construction, 
$x$ belongs to some top copy of $\DCm$ in $\Cylm$ in $\F(\H_0,\HC_0)$.
We can extend $i^{-1}$ to an isomorphism from the copy of $\Cylm$  (which has $i(\HC^d_0)$ as its bottom cycle)  in the graph $\F(\H_0',i(\HC^d_0))$ to the copy of $\Cylm$ (which has $\HC^d_0$ as its bottom cycle) in the graph $\F(\H_0,\HC_0)$.
We define a mapping~$r^*$ from $\F(\H_0',i(\HC^d_0))$ to $\H$ by $r^*(u)=s'\circ i^{-1}(u)$ if $u$ is on the copy of $\Cylm$ in $\F(\H_0',i(\HC^d_0))$ and $r^*(u)=u$ otherwise. We observe that $r^*(u)=u$ if $u\in V(\H_0')$ as $s'$ coincides with $i$ on $\H_0$. As $\H^d_0$ separates the other vertices of 
the copy of $\Cylm$ from $V(\H^d)\setminus V(\H^d_0)$, in the sense that removing $\H^d_0$ would disconnect them, this means that $r^*$ is a retraction from $\F(\H_0',i(\HC^d_0))$ to $\H$. We find that $r^*$ maps $i(x)$ to $s'\circ i^{-1}(i(x))=s'(x)=y$. Moreover, as $x$ is in the top copy of $\DCm$ in $\F(\H_0,\HC_0)$,
we conclude that $y$ always belongs to  $\mathrm{Spill}_m(\H[\H'_0,i(\HC^d_0)])$.

As $\mathrm{Spill}_m(\H[\H'_0,i(\HC^d_0)]) = V(\H)$, we find, by assumption of the lemma, that there exists a retraction $r$ from $\H$ to $\H'_0$.  
Now, recalling that we can view $s'$ acting just on the diagonal copy $\H^d$ of $\H$, $i^{-1} \circ r \circ s'$ is the desired retraction of $\G$ to~$\H_0$.  
\end{proof}

\begin{figure}
\centering
\input{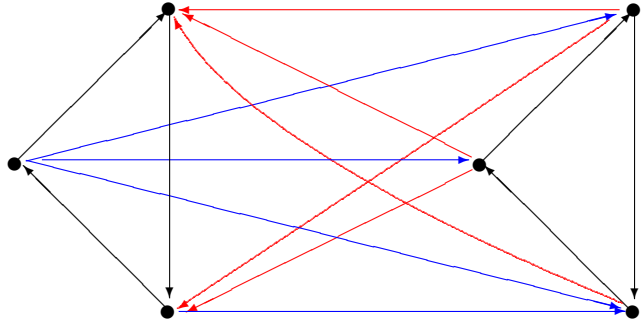}
\caption{An interesting tournament $\H$ on six vertices (self-loops are not drawn). This tournament does not retract to the $\DC^*_3$ on the left-hand side, yet $\mathrm{Spill}_3(\H[\DC^*_3,\DC_3]) = V(\H)$.}
\label{fig:weird-example}
\end{figure}

We now need to deal with the situation in which we have an isomorphic copy $\H'_0=i(\H_0)$ of $\H_0$ in $\H$ with $\mathrm{Spill}_m(\H[\H'_0,i(\HC_0)])=V(\H)$, such that $\H$ does not retract to $\H'_0$ (see Figure~\ref{fig:weird-example} for an example). We cannot deal with this case in a direct manner and first show another base case. For this we need the following lemma and an extension of endo-triviality that we discuss afterwards.

\begin{lemma}[\cite{STACStoTOCT}]
Let $\H$ be a reflexive tournament, containing a subtournament $\H_0$ so that any endomorphism of $\H$ that fixes $\H_0$ as a graph is an automorphism. Then any endomorphism of $\H$ that maps $\H_0$ to an isomorphic copy $\H'_0=i(\H_0)$ of itself is an automorphism of $\H$. 
\label{lem:7}
\end{lemma}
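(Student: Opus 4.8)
The plan is to iterate $e$ and exploit the finiteness of $\End(\H)$, so that a single application of the hypothesis finishes the argument. Write $m:=|V(\H_0)|\ge 2$. Since $e|_{V(\H_0)}$ is injective with image $V(\H'_0)$ and $\H'_0$ is induced, $i:=e|_{V(\H_0)}$ is in fact an isomorphism $\H_0\to\H'_0$. Recall also that a finite digraph has no injective endomorphism other than its automorphisms (an injective endomorphism of a finite digraph is bijective, and counting edges shows its inverse is a homomorphism). Hence it suffices to prove that $e$ is injective.

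First I would run the standard ``eventual image'' argument. As $\End(\H)$ is a finite monoid, the cyclic subsemigroup generated by $e$ contains an idempotent $f=e^{N}$ with $N\ge 1$; then $f$ is a retraction of $\H$ onto the induced subtournament $\H^{*}:=f(V(\H))$, being idempotent it is the identity on $\H^{*}$, and since $e$ commutes with $f$ we have $e(V(\H^{*}))=V(\H^{*})$, so $e$ restricts to an automorphism of $\H^{*}$ and in particular $e^{N}|_{V(\H^{*})}=\mathrm{id}$. The point of this is that \emph{it now suffices to show $V(\H_0)\subseteq V(\H^{*})$}: then $e^{N}$ is the identity on $V(\H_0)$, so $e^{N}$ fixes $\H_0$ as a graph, so by hypothesis $e^{N}$ is an automorphism; since $e^{N}=e^{N-1}\circ e$ is then injective, $e$ is injective, hence an automorphism. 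So the lemma comes down to showing that some positive power of $e$ fixes $\H_0$ as a graph, equivalently that $V(\H_0)\subseteq V(\H^{*})=\bigcap_{j\ge 0}e^{j}(V(\H))$.

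This last step is the crux, and the main obstacle. The plan for it is to follow the orbit of $\H_0$ under $e$, i.e.\ the subtournaments $\H_0^{(j)}:=e^{j}(\H_0)$, which starts with $\H_0^{(1)}=\H'_0\cong\H_0$. The first technical point is that each $\H_0^{(j)}$ is again an $m$-vertex induced subtournament isomorphic to $\H_0$ --- i.e.\ $e$ never identifies two vertices inside an iterated image of $\H_0$. For this I would exploit the structure of the fibres of $e$ in the tournament $\H$: if $e(a)=e(b)$ for an arc $a\to b$, then $e(x)=e(a)$ for every $x$ with $a\to x$ and $x\to b$ (otherwise $e(x)$ would be joined to $e(a)$ by two oppositely oriented edges), which severely limits where a collapse can sit relative to a copy of $\H_0$, and combined with the hypothesis on $\H_0$ should rule it out. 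Granting this, there are only finitely many $m$-vertex induced subtournaments of $\H$, so the sequence $(\H_0^{(j)})_{j\ge 0}$ is eventually periodic, say with period $q$, and $\H_0^{(j)}\subseteq V(\H^{*})$ for all large $j$; one then gets $e^{q}$ fixing $\H_0^{(a)}$ as a graph for some $a$. The hardest remaining point is to push $a$ down to $0$ --- to show the orbit of $\H_0$ is \emph{purely} periodic, so that $e^{q}(\H_0)=\H_0$ and therefore $V(\H_0)=e^{q}(V(\H_0))\subseteq V(\H^{*})$. I would try to prove this by contradiction: a genuine pre-period, together with the retraction $f$ onto $\H^{*}$ and the isomorphisms $e^{j}|_{\H_0}$, should let one assemble a non-automorphic endomorphism of $\H$ that nevertheless fixes $\H_0$ as a graph, contradicting the hypothesis. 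That the reflexive-tournament structure is genuinely needed here is visible from the transitive tournaments, which do admit such pre-periods but retract onto each of their subtournaments, so no proper subtournament of a transitive tournament could satisfy the hypothesis to begin with.
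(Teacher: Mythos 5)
Your strategy is workable in its early reduction step but then stalls precisely where you say it does, and the missing steps are substantial. The reduction to ``show $V(\H_0)\subseteq V(\H^*)$, the eventual image of $e$'' is correct. But the two facts you then need --- that each $e^j(\H_0)$ stays an isomorphic copy of $\H_0$, and that the orbit of $\H_0$ is purely (not just eventually) periodic --- are not proved. For the first, your fibre observation (that $e(a)=e(b)$ with $a\to b$ forces every $x$ with $a\to x\to b$ into the same fibre) is correct, but you give no argument showing it is incompatible with a collapse inside $e(\H_0)$; note the hypothesis concerns endomorphisms of $\H$ that fix $\H_0$, not ones that fix $\H'_0$, so you cannot simply transport the hypothesis along $e$ without already knowing what you are trying to prove. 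For the second, you only sketch a hoped-for contradiction without constructing the promised non-automorphic endomorphism fixing $\H_0$. As it stands the argument has a genuine gap at its core.

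The proof this lemma relies on (from \cite{STACStoTOCT}, and visible in the paper's inline proof of Lemma~\ref{lem:8}) takes a much shorter route that sidesteps iteration entirely. The key observation is that on a reflexive tournament \emph{any} section of an endomorphism is automatically a homomorphism: if $e^{-1}$ is any choice of preimages, then for distinct $y_1\to y_2$ in the image, $e^{-1}(y_2)\to e^{-1}(y_1)$ is impossible since applying $e$ would give $y_2\to y_1$ as well, contradicting the single-edge property of tournaments; reflexivity handles $y_1=y_2$. One now chooses the section $e^{-1}$ so that on $\H'_0=e(\H_0)$ it inverts the isomorphism $i$. Then $e^{-1}\circ e$ is an endomorphism of $\H$ that fixes $\H_0$ pointwise, hence by hypothesis an automorphism, hence injective; so $e$ itself is injective, hence bijective, hence (by the edge-counting argument you correctly recalled) an automorphism. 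This one-step composition is exactly what your approach is missing, and having it makes the eventual-image machinery unnecessary.
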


Let $\H_0$ be an induced subgraph of a digraph~$\H$. We say that the pair $(\H,\H_0)$ is \emph{endo-trivial} if all endomorphisms of $\H$ that fix $\H_0$ are automorphisms.

\begin{proposition}[Base Case II]
Let $\H$ be a reflexive tournament with a subtournament~$\H_0$ with Hamilton cycle $\HC_0$ so that $(\H,\H_0)$ and $\H_0$ are endo-trivial and $\mathrm{Spill}_m(\H[\H_0,\HC_0]) = V(\H)$. 
Then $\H$-{\sc Retraction} can be polynomially reduced to $\QCSP(\H)$.
\label{prop:main2}
\end{proposition}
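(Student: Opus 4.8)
The plan is to follow the proof of Proposition~\ref{prop:main1} almost verbatim, with two changes. Since $\G$ is now an instance of $\H$-{\sc Retraction} it already contains $\H$ (hence $\H_0$) as an induced subgraph, so there is no need to glue on a fresh copy of~$\H$: I would simply use $\G$ in the role played by $\G'$ there. And the Base Case~I hypothesis about isomorphic copies of~$\H_0$ is replaced by the endo-triviality of the pair $(\H,\H_0)$, which I would exploit through Lemma~\ref{lem:7}. Concretely, writing $m:=|V(\H_0)|$, $n:=|V(\H)|$, $N:=|V(\H)^{[n]}|$, and given an instance $\G$ of $\H$-{\sc Retraction}, I would form $\mathcal{G}'=\bigotimes_{\lambda\in V(\H)^{[n]}}\mathcal{G}(\lambda)$ from $\G$ as in the construction recalled above, let $\mathcal{H}=\H^N$ be its induced subgraph on $V(\H)^N$, let $\G^d,\H^d,\H^d_0$ be the diagonal copies of $\G,\H,\H_0$, build $\mathcal{G}''$ by attaching to each vertex $v\in V(\mathcal{H})\setminus V(\H^d_0)$ a fresh copy of $\Cylm$ with $v$ in the top copy of $\DCm$ and the bottom copy of $\DCm$ identified with $\HC_0\subseteq\H^d_0$ via the identity, and finally let $\phi$ be the canonical query of $\mathcal{G}''$ with the constants $c_1,\dots,c_n$ promoted to outermost universal variables. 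As before, $\phi$ has size polynomial in $|\G|$, and the claim to establish is that $\G$ retracts to~$\H$ if and only if $\phi\in\QCSP(\H)$.

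For the forward direction, suppose $\rho\colon\G\to\H$ is a retraction and let $\mu\colon[n]\to V(\H)$ be an arbitrary evaluation of the universals. Composing $\rho$ with the projection of $\mathcal{G}'$ onto its $\mu$-th factor (the graph $\G$ with constants interpreted by~$\mu$) yields a homomorphism $h\colon\mathcal{G}'\to\H$ interpreting each $c_i$ as $\mu(i)$ and restricting to the identity on $\H^d$. One then extends $h$ over each attached $\Cylm$ exactly as in Proposition~\ref{prop:main1}: if $h(v)=y$ then, since $\mathrm{Spill}_m(\H[\H_0,\HC_0])=V(\H)$, there is a retraction of $\F(\H_0,\HC_0)$ to $\H$ carrying the designated top vertex to~$y$, and we transplant its restriction to $\Cylm$ onto the gadget at~$v$; these transplants are mutually consistent and consistent with $h$ because $h$ is the identity on $\H^d_0\supseteq\HC_0$ and distinct copies of $\Cylm$ meet only inside~$\mathcal{H}$. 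Since $\mu$ was arbitrary, $\phi\in\QCSP(\H)$.

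For the backward direction, suppose $\phi\in\QCSP(\H)$ and evaluate the universal block by a surjection $\mu\colon[n]\to V(\H)$; as in Proposition~\ref{prop:main1} the witnessing evaluation of the existential variables induces a surjective homomorphism $s\colon\mathcal{G}''\to\H$ whose restriction $s':=s\restriction\mathcal{H}$ is a surjective polymorphism of~$\H$. The cases $|s'(\H^d_0)|=1$ and $1<|s'(\H^d_0)|<m$ are ruled out precisely as in Proposition~\ref{prop:main1} (in the first, Lemma~\ref{lem:claim1} forces $s'$ to be constant on $\mathcal{H}$, contradicting surjectivity via Lemma~\ref{lem:uniformly-z}; in the second, Lemma~\ref{lem:under-m}, which applies since $\H_0$ is endo-trivial, would produce a double edge in the reflexive tournament~$\H$), so $s'\restriction\H^d_0$ is an isomorphism onto a subtournament $\H'_0=i(\H^d_0)$ of $\H$. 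Then $a:=s'\restriction\H^d$ is an endomorphism of $\H$ that maps $\H_0$ onto the isomorphic copy~$\H'_0$, so by Lemma~\ref{lem:7} — applicable precisely because $(\H,\H_0)$ is endo-trivial — $a$ is an automorphism of~$\H$. Finally, since $\H^d=\G^d\cap\mathcal{H}$, the homomorphism $s\restriction\G^d\colon\G\to\H$ agrees with $a$ on $\H$, so $a^{-1}\circ(s\restriction\G^d)$ is a homomorphism $\G\to\H$ that is the identity on $\H$, i.e.\ a retraction of $\G$ onto~$\H$.

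The only step requiring genuine care beyond transcribing Base Case~I is this final one: one must keep track that $\H^d=\G^d\cap\mathcal{H}$, so that $s$ and $s'$ genuinely agree on $\H^d$, and one must verify that the hypothesis of Lemma~\ref{lem:7} holds — which is exactly where the assumption ``$(\H,\H_0)$ endo-trivial'' is used. In contrast with Base Case~I, no auxiliary spillage argument for $\H'_0$ is needed, so the backward direction here is in fact strictly shorter than in Base Case~I.
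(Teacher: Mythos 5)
Your proposal is correct and follows essentially the same path as the paper's own proof: the same construction (product with constants, attaching $\Cylm$ gadgets to $V(\mathcal{H})\setminus V(\H^d_0)$ with bottom cycle glued to $\HC_0$, then universally quantifying the constants), and the same sequence of lemmas (Lemma~\ref{lem:claim1} and Lemma~\ref{lem:uniformly-z} to rule out $|s'(\H^d_0)|=1$, Lemma~\ref{lem:under-m} to rule out intermediate image sizes, and Lemma~\ref{lem:7} via the $(\H,\H_0)$-endo-triviality to upgrade $s'\restriction\H^d$ to an automorphism). Your final step is in fact slightly more scrupulous than the paper's, explicitly tracking $s\restriction\G^d$ versus $s'\restriction\H^d$ where the paper invokes an ``abuse of notation'' to treat them interchangeably.
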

\begin{proof}
Let $\G$ be an instance of {\sc $\H$-Retraction}. 
Let $m$ be the size of~$|V(\H_0)|$ and $n$ be the size of $|V(\H)|$. 
We build an instance $\phi$ of $\QCSP(\H)$
in the following fashion. Consider all possible functions $\lambda:[n]\rightarrow V(\H)$. For some such $\lambda$, let $\mathcal{G}(\lambda)$ be the graph enriched with constants $c_1,\ldots,c_n$ where these are interpreted over some subset of $V(\H)$ according to $\lambda$ in the natural way (acting on the subscripts).

Let $\mathcal{G}=\bigotimes_{\lambda \in {V(\H)}^{[n]}} \mathcal{G}(\lambda)$. Let $\G^d$, $\H^d$ and $\H^d_0$ be the diagonal copies of $\G$, $\H$ and $\H_0$ in $\mathcal{G}$. Let $\mathcal{H}$ be the subgraph of $\mathcal{G}$ induced by $V(\H) \times \cdots \times V(\H)$. Note that the constants $c_1,\ldots,c_n$ live in $\mathcal{H}$. Now build $\mathcal{G}'$ from $\mathcal{G}$ by augmenting a new copy 
of~$\Cylm$ for every vertex $v \in V(\mathcal{H}) \setminus V(\H^d_0)$. Vertex~$v$ is to be identified with any vertex in the top copy of~$\DCm$ in~$\Cylm$ and the bottom copy of~$\DCm$ is to be identified with $\HC_0$ in $\H^d_0$ according to the identity function. 

Finally, build $\phi$ from the canonical query of $\mathcal{G}'$ where we additionally turn the constants $c_1,\ldots,c_n$ to outermost universal variables.

First suppose that~$\G$ retracts to~$\H$ by $r$. Let $\lambda$ be some assignment of the universal variables of $\phi$ to $\H$. To prove $\phi \in \QCSP(\H)$ it suffices to prove that there is a homomorphism from $\mathcal{G}'$ to $\H$ that extends $\lambda$ and for this it suffices to prove that there is a homomorphism from $\mathcal{G}$ that extends $\lambda$. This is always possible since we have $\mathrm{Spill}_m(\H[\H_0,\HC_0])=V(\H)$ by assumption.  

Henceforth let us consider the homomorphic image of $\mathcal{G}$ that is $\mathcal{G}(\lambda)$. To prove $\phi \in \QCSP(\H)$ it suffices to prove that there is a homomorphism from $\G(\lambda)$ to $\H$ that extends $\lambda$. Note that it will be sufficient to prove that $\G$ retracts to $\H$. Well this was our original assumption so we are done.  

Suppose now $\phi \in \QCSP(\H)$. Choose some surjection for $\lambda$, the assignment of the universal variables of $\phi$ to $\H$. Recall $N=|V(\H)^{[n]}|$. The evaluation of the existential variables that witness $\phi \in \QCSP(\H)$ induces a surjective homomorphism $s$ from $\mathcal{G}'$ to $\H$ which contains within it a surjective homomorphism $s'$ from $\mathcal{H}=\H^N$ to $\H$. Consider the diagonal copy of $\H^d_0 \subset \H^d \subset \G^d$ in $(\G)^N$. By abuse of notation we will also consider each of $s$ and $s'$ acting just on the diagonal. If $|s'(\H^d_0)|=1$, by construction of $\mathcal{G}'$, we have $|s'(\H^d)|=1$. By Lemma  \ref{lem:uniformly-z}, this would mean $s'$ is uniformly mapping $\mathcal{H}$ to one vertex, which is impossible as $s'$ is surjective. Now we will work exclusively on the diagonal copy $\G^d$. As $1<|s'(\H^d_0)|<m$ is not possible either due to Lemma~\ref{lem:under-m}, we find that $|s'(\H^d_0)|=m$, and indeed $s'$ maps $\H^d_0$ to a copy of itself in $\H$ which we will call $\H'_0=i(\H^d_0)$ for some isomorphism~$i$. 

As $(\H,\H_0)$ is endo-trivial, Lemma~\ref{lem:7} tells us that  the restriction of $s'$ to $\H^d$ 
is an automorphism of $\H^d$, which we call $\alpha$. The required retraction from $\G$ to $\H$ is now given by $\alpha^{-1} \circ s'$.
\end{proof}

\subsection{The strongly connected case: Generalising the Base Cases}\label{s-general}

We now generalise the two base cases to more general cases via some recursive procedure. Afterwards we will show how to combine these two cases to complete our proof.
We will first need a slightly generalised version of Lemma~\ref{lem:7}, which nonetheless has virtually the same proof. \textcolor{black}{For completeness of this article we provide this proof from \cite{STACStoTOCT}.}

\begin{lemma}[\cite{STACStoTOCT}]
Let $\H_2 \supset \H_1 \supset H_0$ be a sequence of strongly connected reflexive tournaments, each one a subtournament of the one before. Suppose that any endomorphism of $\H_1$ that fixes $\H_0$ is an automorphism. Then any endomorphism $h$ of $\H_2$ that maps $\H_0$ to an isomorphic copy $\H'_0=i(\H_0)$ of itself also gives an isomorphic copy of $\H_1$ in $h(\H_1)$. 
\label{lem:8}
\end{lemma}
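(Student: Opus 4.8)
The plan is to reproduce the argument used for Lemma~\ref{lem:7} (given in \cite{STACStoTOCT}), carrying the extra third level $\H_2 \supseteq \H_1$ through the bookkeeping. Fix an endomorphism $h$ of $\H_2$ with $h|_{\H_0}$ an isomorphism onto $\H'_0 = i(\H_0)$; the goal is to show $h|_{\H_1}$ is injective. This suffices: an injective homomorphism between finite reflexive tournaments of the same order is an isomorphism, and $h(\H_1)$, being a homomorphic image of the strongly connected tournament $\H_1$ lying inside the tournament $\H_2$, is automatically a strongly connected reflexive subtournament of $\H_2$ containing $\H'_0$, so ``$h|_{\H_1}$ injective'' is the same as ``$h(\H_1)$ contains an isomorphic copy of $\H_1$''.

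First I would pass to a stable power of $h$. Since $V(\H_2)$ is finite, there is $N \geq 1$ with $h^N$ idempotent; then $h^N$ is a retraction of $\H_2$ onto $D := h^N(V(\H_2))$, and $h$ restricts to an automorphism of the subtournament $D$ (as $h$ maps $D$ into $D$ and $(h|_D)^N$ is the identity). In parallel I would track the orbits $\H_0^{(j)} := h^j(\H_0)$ and $\H_1^{(j)} := h^j(\H_1)$: these are strongly connected reflexive subtournaments of $\H_2$, their orders are non-increasing in $j$, and $|\H_0^{(0)}| = |\H_0^{(1)}| = m := |V(\H_0)|$ because $h|_{\H_0}$ is an isomorphism. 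The first substantive claim is that the $h$-orbit of $\H_0$ never shrinks, so that each $h|_{\H_0^{(j)}}$ is an isomorphism onto $\H_0^{(j+1)}$; granting this, $|\H_1^{(j)}| \geq m$ for all $j$, so the sequence $(|\H_1^{(j)}|)_j$ stabilises from some index $e$ on (which I take $\geq N$), whence $h$ restricts to an isomorphism on every $\H_1^{(j)}$ with $j \geq e$ and permutes the finitely many subtournaments of $D$ of that order; in particular some power $h^L$ is the identity on $\H_1^{(e)}$, hence also on the subtournament $\H_0^{(e)} \subseteq \H_1^{(e)}$.

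If $|\H_1^{(e)}| = |V(\H_1)|$, then $h^e|_{\H_1}$ is injective and therefore so is $h|_{\H_1}$, and we are done; so suppose for contradiction $|\H_1^{(e)}| < |V(\H_1)|$. The map $h^e|_{\H_1}\colon \H_1 \to \H_1^{(e)}$ is then a non-injective surjective homomorphism whose restriction to $\H_0$ is the isomorphism $h^e|_{\H_0}\colon \H_0 \to \H_0^{(e)}$. The heart of the proof is to produce a homomorphism $\psi\colon \H_1^{(e)} \to \H_1$ that restricts to $(h^e|_{\H_0})^{-1}$ on $\H_0^{(e)}$; this I would obtain exactly as in the proof of Lemma~\ref{lem:7}, using that $\H_1^{(e)}$ is $h$-periodic with $h^L$ the identity on it, so that a suitable power of $h$, read through the isomorphism identifications of the copies $\H_1^{(j)}$, furnishes a partial inverse of $h^e$ compatible with $i^{-1}$ on $\H_0$. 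With such a $\psi$ in hand, $\psi \circ h^e|_{\H_1}$ is an endomorphism of $\H_1$ that fixes $\H_0$ pointwise, hence by the hypothesis an automorphism of $\H_1$; but it factors through $\H_1^{(e)}$, which has strictly fewer vertices than $\H_1$, a contradiction.

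The hard part will be exactly the pair of delicate steps imported from Lemma~\ref{lem:7}: showing that the $h$-orbit of $\H_0$ does not collapse (a general strongly connected reflexive tournament does map onto smaller strongly connected reflexive tournaments, so this genuinely uses that $h|_{\H_0}$ is already an isomorphism, together with finiteness), and building the pull-back homomorphism $\psi$ that lands inside $\H_1$ and agrees with $i^{-1}$ on $\H_0$. Both are handled precisely by the combinatorial arguments of \cite{STACStoTOCT} for Lemma~\ref{lem:7}; the only overhead here is that the endomorphism $h$ lives on the larger tournament $\H_2$ while the rigidity hypothesis constrains only $\H_1$, so throughout one works with $h|_{\H_1}$ and its iterated images in place of $h$ itself.
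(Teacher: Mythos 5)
Your high-level plan is sound — produce a non-injective endomorphism of $\H_1$ that fixes $\H_0$ and contradict the rigidity hypothesis — but the mechanism you propose for the crucial step does not work, and it is considerably more elaborate than what is needed. You want to build $\psi\colon\H_1^{(e)}\to\H_1$ as ``a suitable power of $h$, read through the isomorphism identifications of the copies $\H_1^{(j)}$.'' This cannot succeed: once you pass to $e\ge N$ with $h^N$ idempotent, every $\H_1^{(j)}$ for $j\ge e$ lies inside the retract $D=h^N(V(\H_2))$, and all further powers of $h$ map $D$ into $D$. But the hypothesis under which you need $\psi$ is precisely that $h|_{\H_1}$ is not injective, and in that case $\H_1\not\subseteq D$ (some vertices of $\H_1$ are identified and hence are not fixed by $h^N$). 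So no power of $h$, however you reparametrise, can land surjectively onto $\H_1$. The periodicity of $h$ on $D$ only lets you move among the shrunken copies $\H_1^{(j)}$; it never lets you climb back up to $\H_1$ itself. The same issue means your preliminary claim that ``the $h$-orbit of $\H_0$ never shrinks'' is left without justification; it is not a consequence of $h|_{\H_0}$ being an isomorphism on the first step.

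The idea you are missing — and the one the paper's (one-paragraph) proof hinges on — is that in a reflexive tournament \emph{any} set-theoretic right inverse of a homomorphism is automatically a homomorphism: if $h^{-1}(y)=x$ is chosen so that $h(x)=y$, and $(y_1,y_2)\in E$, then $(h^{-1}(y_1),h^{-1}(y_2))$ must be an edge, since otherwise (by completeness and antisymmetry of the tournament, plus reflexivity to handle the equal case) applying $h$ would force both $(y_1,y_2)$ and $(y_2,y_1)$ to be edges. Once you have this, there is no need for powers, orbits, or stabilisation at all. You simply define $h^{-1}$ on $h(\H_2)$ by taking $i^{-1}$ on $h(\H_0)=\H'_0$ and choosing preimages inside $\H_1$ for points of $h(\H_1)$ (this is free, and guarantees $h^{-1}\circ h$ maps $\H_1$ into $\H_1$); then $h^{-1}\circ h\restrict_{\H_1}$ is an endomorphism of $\H_1$ fixing $\H_0$ pointwise. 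If $|h(\H_1)|<|V(\H_1)|$ this map is not injective and hence not an automorphism, contradicting the hypothesis on $(\H_1,\H_0)$. That is the entire argument. Your proposal substitutes a much longer dynamical-orbit analysis and, at the one point where a map ``back'' to $\H_1$ is genuinely required, appeals to a construction that cannot produce it.
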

{\color{black}
\begin{proof}
For contradiction, suppose there is an endomorphism $h$ of $\H_2$ that maps $\H_0$ to an isomorphic copy $\H'_0=i(\H_0)$ of itself that does not yield an isomorphic copy of $\H_1$. In particular, $|h(\H_1)|<|V(\H_1)|$. We proceed as in the proof of the Lemma~\ref{lem:7}. Choose $h^{-1}$ in the following fashion. We let $h^{-1}$ of $h(\H_0)$ be the natural isomorphism of $h(\H_0)$ to $\H_0$ (that inverts the isomorphism given by $h$ from $\H_0$ to $\H'_0$). Otherwise we choose $h^{-1}$ arbitrarily, such that $h^{-1}(y)=x$ only if $h(x)=y$. Since 
$\H_2$ is a reflexive tournament, $h^{-1}$ is an isomorphism. And $h^{-1} \circ h$ is an endomorphism of $\H_2$ that fixes $\H_0$ that does not yield an isomorphic copy of $\H_1$ in $h(\H_1)$, a contradiction.
\end{proof}
}
The following two lemmas generalise Propositions~\ref{prop:main1} and \ref{prop:main2}.

\begin{proposition}[General Case I]
Let $\H_{0},\H_{1}, \ldots, \H_k, \H_{k+1}$
be reflexive tournaments, the first $k$ of which have Hamilton cycles $\HC_{0},\HC_{1}, \ldots, \HC_{k}$, respectively,
so that $\H_0 \subseteq H_1 \subseteq \cdots \subseteq \H_k \subseteq \H_{k+1}.$
Assume that $\H_0$, $(\H_1,\H_0)$, \ldots, $(\H_{k},\H_{k-1})$ are endo-trivial and that
\[
\begin{array}{lcl}
\mathrm{Spill}_{a_0}(\H_1[\H_0,\HC_{0}]) &= &V(\H_1) \\
\mathrm{Spill}_{a_1}(\H_2[\H_1,\HC_{1}]) &= &V(\H_2) \\
\hspace*{3mm} \vdots &\vdots &\hspace*{3mm}\vdots \\
\mathrm{Spill}_{a_{k-1}}(\H_{k}[\H_{k-1},\HC_{k-1}]) &= &V(\H_k).\\
\end{array}
\]
Moreover, assume that $\H_{k+1}$ retracts to $\H_k$ and also to every isomorphic copy $\H'_{k}=i(\H_k)$ of $\H_k$ in $\H_{k+1}$ with 
$\mathrm{Spill}_{a_k}(\H_{k+1}[\H'_k,i(\HC_{k})]) = V(\H_{k+1})$.
Then $\H_k$-{\sc Retraction} can be polynomially reduced to $\QCSP(\H_{k+1})$.
\label{prop:main-general1}
\end{proposition}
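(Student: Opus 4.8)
The plan is to follow the template of Proposition~\ref{prop:main1} (Base Case I), but to iterate the gadget construction through the whole tower $\H_0 \subseteq \H_1 \subseteq \cdots \subseteq \H_k \subseteq \H_{k+1}$ rather than just a single layer. Let $n := |V(\H_{k+1})|$. Given an instance $\G$ of $\H_k$-{\sc Retraction}, first form $\G'$ by gluing a copy of $\H_{k+1}$ onto $\G$ along their common induced copy of $\H_k$. Then perform the direct-power-with-constants construction of \cite{LMCS2015}: form $\mathcal{G}' = \bigotimes_{\lambda \in V(\H_{k+1})^{[n]}} \mathcal{G}'(\lambda)$ with the $n$ constants $c_1,\ldots,c_n$, writing $\mathcal{H}$ for the induced substructure on $V(\H_{k+1})^N$ (with $N = |V(\H_{k+1})^{[n]}|$) and $\G'^d \supset \H^d_{k+1} \supset \H^d_k \supset \cdots \supset \H^d_0$ for the diagonal copies. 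Now augment $\mathcal{G}'$ to $\mathcal{G}''$ by attaching, for every vertex $v \in V(\mathcal{H})\setminus V(\H^d_0)$, a copy of $\Cyl^*_{a_0}$ with its top cycle identified with $v$ and its bottom cycle identified with $\HC_0$ inside $\H^d_0$. (Thus every layer of the tower is still treated by a single bottom-level gadget on $\HC_0$; the higher $\mathrm{Spill}$ hypotheses will be used analytically, not via further gadgets.) Finally, let $\phi$ be the canonical query of $\mathcal{G}''$ with $c_1,\ldots,c_n$ turned into outermost universal variables. As before the size of $\phi$ is constant-times-polynomial in $|\G|$.

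For soundness ($\G$ retracts to $\H_k$ $\Rightarrow$ $\phi \in \QCSP(\H_{k+1})$): fix an assignment $\lambda$ of the universal variables. It suffices to find a homomorphism of $\mathcal{G}''$ to $\H_{k+1}$ extending $\lambda$; by the same independence argument as in Proposition~\ref{prop:main1} (distinct copies of $\Cyl^*_{a_0}$ meet only inside $\mathcal{H}$, and $\H^d_0$ separates each attached gadget from the rest), it suffices to homomorphically map $\mathcal{G}'$, and for this to map $\mathcal{G}'(\lambda)$, hence to retract $\G'$ to $\H_{k+1}$. That retraction exists: $\H_{k+1}$ retracts to $\H_k$ by hypothesis, and $\G$ retracts to $\H_k \subseteq \H_{k+1}$ by assumption, so glue these along $\H_k$. (The spillage condition $\mathrm{Spill}_{a_0}(\H_1[\H_0,\HC_0]) = V(\H_1)$, together with the chain of retractions $\H_{k+1} \twoheadrightarrow \H_k \twoheadrightarrow \cdots \twoheadrightarrow \H_0$ forced by endo-triviality of each pair $(\H_{i+1},\H_i)$ — wait, more carefully: $(\H_{i+1},\H_i)$ endo-trivial does not itself give a retraction, but the retraction $\H_{k+1} \to \H_k$ is assumed, and we only need $\mathrm{Spill}_{a_0}(\H_{k+1}[\H_0,\HC_0]) = V(\H_{k+1})$, which follows from Lemma~\ref{lem:spillage} applied to the retraction $\H_{k+1}\twoheadrightarrow\H_0$ obtained by composing the assumed retraction to $\H_k$ with the retractions furnished inductively by the $\mathrm{Spill}$ hypotheses and Lemma~\ref{lem:spillage}.) This is the step that must be set out with some care.

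For completeness ($\phi \in \QCSP(\H_{k+1})$ $\Rightarrow$ $\G$ retracts to $\H_k$): pick a surjective $\lambda$; the witnessing evaluation yields a surjective homomorphism $s$ of $\mathcal{G}''$ to $\H_{k+1}$ restricting to a surjection $s' : \mathcal{H} = \H_{k+1}^N \to \H_{k+1}$, which we view on the diagonal $\G'^d$. Since $|s'(\H^d_0)| = 1$ would force $|s'(\H^d_{k+1})| = 1$ (Lemma~\ref{lem:claim1} propagated up) and hence $s'$ uniformly constant on $\mathcal{H}$ (Lemma~\ref{lem:uniformly-z}), contradicting surjectivity; and $1 < |s'(\H^d_0)| < a_0$ is excluded by endo-triviality of $\H_0$ and Lemma~\ref{lem:under-m}; we get that $s'$ maps $\H^d_0$ isomorphically onto a copy $\H'_0 = i(\H^d_0)$ in $\H_{k+1}$. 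Exactly as in Proposition~\ref{prop:main1} (building the retraction $r^*$ of $\F(\H'_0,i(\HC_0))$ from $s'\circ i^{-1}$, using that the top cycle of each gadget hits a vertex mapping onto any prescribed $y \in V(\H_{k+1})$), we deduce $\mathrm{Spill}_{a_0}(\H_{k+1}[\H'_0,i(\HC_0)]) = V(\H_{k+1})$. Now I would run the tower inductively: apply Lemma~\ref{lem:8} to $\H_2 \supset \H_1 \supset \H_0$ with the hypothesis that every endomorphism of $\H_1$ fixing $\H_0$ is an automorphism ($(\H_1,\H_0)$ endo-trivial) to conclude $s'$ gives an isomorphic copy $\H'_1 = i_1(\H^d_1)$ of $\H_1$ inside $s'(\H^d_1)$; repeat with $(\H_2,\H_1)$ to get $\H'_2$, and so on up to $\H'_k = i_k(\H^d_k)$, at each stage transporting the relevant $\mathrm{Spill}$ identity along the isomorphism so that $\mathrm{Spill}_{a_k}(\H_{k+1}[\H'_k, i_k(\HC_k)]) = V(\H_{k+1})$ (again via the $r^*$-construction, since all the gadget vertices witnessing spillage survive). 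By the final hypothesis of the proposition, $\H_{k+1}$ retracts to $\H'_k$, say by $r$. Then $i_k^{-1} \circ r \circ s'$, viewed on the diagonal copy $\H^d$ of $\H_{k+1}$ inside $\G'^d$ and hence restricted to $\G^d$, is the required retraction of $\G$ onto $\H_k$.

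The main obstacle I anticipate is the inductive climb up the tower in the completeness direction: one must be careful that at each level the isomorphic copy $\H'_j$ produced by Lemma~\ref{lem:8} is compatible (as a subgraph) with the copy $\H'_{j-1}$ produced at the previous level, so that the chain $\H'_0 \subset \H'_1 \subset \cdots \subset \H'_k$ is realised simultaneously by the single map $s'$, and that the spillage hypothesis for level $j$ (stated for $\H_{j+1}$ containing $\H_j$) can be correctly pushed forward along $i_j$ to a statement about $\H_{k+1}$ containing $\H'_k$ — this requires observing that $\mathrm{Spill}$ is preserved under isomorphism and that the ambient spillage to $V(\H_{k+1})$ is what one gets by stacking. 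Everything else is a routine adaptation of the proof of Proposition~\ref{prop:main1} together with Lemmas~\ref{lem:7}, \ref{lem:8}, \ref{lem:claim1}, \ref{lem:under-m}, \ref{lem:uniformly-z}, and \ref{lem:spillage}.
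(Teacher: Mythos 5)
Your construction is genuinely different from the paper's, and unfortunately the simplification is the source of a real gap. The paper attaches a \emph{cascade} of cylinders to $\mathcal{G}'$: a copy of $\Cyl_{a_k}$ with bottom on $\HC_k$ in $\H^d_k$ for every $v\in V(\mathcal{H}_{k+1})\setminus V(\H^d_k)$, and then, for each $i\in[k]$, a copy of $\Cyl_{a_{i-1}}$ with bottom on $\HC_{i-1}$ in $\H^d_{i-1}$ for every $v\in V(\H^d_i)\setminus V(\H^d_{i-1})$. You attach only $\Cyl_{a_0}$ gadgets, all with bottom on $\HC_0$, to every vertex in $V(\mathcal{H})\setminus V(\H^d_0)$. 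That one change breaks both directions of the reduction.

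For soundness, you must extend a homomorphism of $\mathcal{G}'$ into $\H_{k+1}$ over a $\Cyl_{a_0}$-gadget whose bottom is pinned to $\HC_0$ and whose top hits a vertex landing on an arbitrary $y\in V(\H_{k+1})$; this requires $\mathrm{Spill}_{a_0}(\H_{k+1}[\H_0,\HC_0])=V(\H_{k+1})$. You try to derive this from ``a chain of retractions $\H_{k+1}\twoheadrightarrow\H_k\twoheadrightarrow\cdots\twoheadrightarrow\H_0$,'' invoking Lemma~\ref{lem:spillage}. But the spill hypotheses $\mathrm{Spill}_{a_{i-1}}(\H_i[\H_{i-1},\HC_{i-1}])=V(\H_i)$ do \emph{not} furnish retractions $\H_i\twoheadrightarrow\H_{i-1}$: Lemma~\ref{lem:spillage} is a one-way implication, and Figure~\ref{fig:weird-example} exhibits a tournament with full spill and no retraction. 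You flag this yourself mid-parenthesis and then make exactly the illicit inference anyway. With only the assumed retraction $\H_{k+1}\twoheadrightarrow\H_k$ you get $\mathrm{Spill}_{a_k}(\H_{k+1}[\H_k,\HC_k])=V(\H_{k+1})$, which is what the paper's top-layer $\Cyl_{a_k}$ gadgets consume; nothing gives the level-$0$ spill into the whole of $\H_{k+1}$.

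For completeness, the obstruction is cleaner still. After forcing $s'$ to map $\H^d_0$ isomorphically onto $\H'_0=i(\H^d_0)$ and climbing the tower with Lemma~\ref{lem:8} to obtain $\H'_k=i(\H^d_k)$, the hypothesis you must invoke to obtain a retraction $\H_{k+1}\to\H'_k$ is $\mathrm{Spill}_{a_k}(\H_{k+1}[\H'_k,i(\HC_k)])=V(\H_{k+1})$. The $r^*$-construction recovers spill information only for the cylinder type actually present in $\mathcal{G}''$: with only $\Cyl_{a_0}$ gadgets on $\HC_0$ you can deduce $\mathrm{Spill}_{a_0}(\H_{k+1}[\H'_0,i(\HC_0)])=V(\H_{k+1})$ and nothing else. ``Transporting the relevant $\mathrm{Spill}$ identity along the isomorphism'' gives at best $\mathrm{Spill}_{a_{j-1}}(\H'_j[\H'_{j-1},\cdot])=V(\H'_j)$ — a statement internal to $\H'_j$, not about $\H_{k+1}$ — so the chain never reaches the statement the proposition's final hypothesis speaks about. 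The paper avoids both failures precisely by placing a $\Cyl_{a_k}$ gadget (bottom on $\HC_k$) on every vertex outside $\H^d_k$, so that surjectivity of $s'$ gives the level-$k$ spill directly, and by using the intermediate $\Cyl_{a_{i-1}}$ gadgets on diagonal vertices, where the spill hypotheses $\mathrm{Spill}_{a_{i-1}}(\H_i[\H_{i-1},\HC_{i-1}])=V(\H_i)$ suffice (since those vertices land in $V(\H_i)$) and are in fact exactly what is needed to rule out $|s'(\H^d_0)|=1$ by following the chain of spills. You should restore the full cascade of gadgets.
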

\begin{proof}
Let $a_{k+1}, \ldots, a_0$ be the cardinalities of $|V(\H_{k+1})|,\ldots,|V(\H_0|)$, respectively. Let $n=a_{k+1}$.
Let $\G$ be an instance of {\sc $\H_k$-Retraction}.
We will build an instance $\phi$ of $\QCSP(\H_{k+1})$ in the following fashion. First, take a copy of $\H_{k+1}$ together with $\G$ and build $\G'$ by identifying these on the copy of $\H_{k}$ that they both possess as  
an induced
subgraph.

Consider all possible functions $\lambda:[n]\rightarrow V(\H_{k+1})$. For some such $\lambda$, let $\mathcal{G'}(\lambda)$ be the graph enriched with constants $c_1,\ldots,c_n$ where these are interpreted over some subset of $V(\H_{k+1})$ according to $\lambda$ in the natural way (acting on the subscripts).

Let $\mathcal{G}'=\bigotimes_{\lambda \in {V(\H_{k+1})}^{[n]}} \mathcal{G}'(\lambda)$. Let $\G'^d$, $\H_{k+1}^d$ and $\H^d_{k}$ etc. be the diagonal copies of $\G'^d$, $\H_{k+1}$ and $\H_{k}$ in $\mathcal{G}'$. Let $\mathcal{H}_{k+1}$ be the subgraph of $\mathcal{G}'$ induced by $V(\H_{k+1}) \times \cdots \times V(\H_{k+1})$. Note that the constants $c_1,\ldots,c_n$ live in $\mathcal{H}_{k+1}$. Now build $\mathcal{G}''$ from $\mathcal{G}'$ by augmenting a new copy 
of~$\Cyl_{a_k}$ for every vertex $v \in V(\mathcal{H}_{k+1}) \setminus V(\H^d_{k})$. Vertex~$v$ is to be identified with any vertex in the top copy of~$\DC_{a_k}$ in~$\Cyl_{a_k}$ and the bottom copy of~$\DC_{a_k}$ is to be identified with $\HC_k$ in $\H^d_k$ according to the identity function. 

Then, for each $i \in [k]$, and $v \in V(\H^d_{i}) \setminus V(\H^d_{i-1})$, add a copy of $\Cyl_{a_{i-1}}$, where $v$ is identified with any vertex in the top copy of $\DC^*_{a_{i-1}}$ in $\Cyl_{a_{i-1}}$ and the bottom copy of $\DC^*_{i-1}$ is to be identified with $\H_{i-1}$ according to the identity map of $\DC^*_{a_{i-1}}$ to $\HC_{i-1}$.

Finally, build $\phi$ from the canonical query of $\mathcal{G}''$ where we additionally turn the constants $c_1,\ldots,c_n$ to outermost universal variables.

First suppose that~$\G$ retracts to~$\H_k$. Let $\lambda$ be some assignment of the universal variables of $\phi$ to $\H_{k+1}$. To prove $\phi \in \QCSP(\H_{k+1})$ it suffices to prove that there is a homomorphism from $\mathcal{G}''$ to $\H_{k+1}$ that extends $\lambda$ and for this it suffices to prove that there is a homomorphism from $\mathcal{G}'$ that extends $\lambda$. Let us explain why. We map the various copies of $\Cyl_{a_{i-1}}$ in $\G''$ in any suitable fashion, which will always exist due to our assumptions and the fact that $\mathrm{Spill}_{a_{k}}(\H_{k+1}[\H_{k},\HC_{k}]) = V(\H_{k+1})$, which follows from our assumption that $\H_{k+1}$ retracts to $\H_k$ and Lemma~\ref{lem:spillage}.

Henceforth let us consider the homomorphic image of $\mathcal{G}'$ that is $\mathcal{G}'(\lambda)$. To prove $\phi \in \QCSP(\H_{k+1})$ it suffices to prove that there is a homomorphism from $\G'(\lambda)$ to $\H_{k+1}$ that extends $\lambda$. Note that it will be sufficient to prove that $\G'$ retracts to $\H_{k+1}$. Let $h$ be the natural retraction from~$\G'$ to~$\H_{k+1}$ that extends the known retraction from $\G$ to $\H_k$. We are done.  

Suppose now $\phi \in \QCSP(\H_{k+1})$. Choose some surjection for $\lambda$, the assignment of the universal variables of $\phi$ to $\H_{k+1}$. Let $N=|V(\H_{k+1})^{[n]}|$. The evaluation of the existential variables that witness $\phi \in \QCSP(\H_{k+1})$ induces a surjective homomorphism $s$ from $\mathcal{G}'$ to $\H_{k+1}$ which contains within it a surjective homomorphism $s'$ from $\mathcal{H}=\H_{k+1}^N$ to $\H_{k+1}$. Consider the diagonal copy of $\H^d_0 \subset \cdots \subset \H^d_k \subset \H^d_{k+1} \subset G'^d$ in $\mathcal{G}'$. By abuse of notation we will also consider each of $s$ and $s'$ acting just on the diagonal. If $|s'(\H^d_0)|=1$, by construction of $\mathcal{G}''$, we could follow the chain of spills to deduce that $|s'(\H^d_{k+1})|=1$, which is not possible by Lemma  \ref{lem:uniformly-z}. Moreover, $1<|s'(H_0^d)|<|V(H_0^d)|$ is impossible due to Lemma \ref{lem:under-m}. Now we will work exclusively on the diagonal copy $\G'^d$.

Thus, $|s'(\H^d_0)|=|V(\H^d_0)|$ and indeed $s'$ maps $\H^d_0$ to an isomorphic copy of itself in $\H_{k+1}$ which we will call $\H'_0=i(\H^d_0)$. We now apply Lemma~\ref{lem:8} as well as our assumed endo-trivialities to derive that $s'$ in fact maps $\H^d_k$ by the isomorphism $i$ to a copy of itself in $\H_{k+1}$ which we will call $\H'_{k}$. Since $s'$ is surjective, we can deduce that $\mathrm{Spill}_{a_k}(\H_{k+1}[\H'_k,i(\HC^d_{k})]) = V(\H_{k+1})$ in the same way as in the proof of Proposition~\ref{prop:main1}.
 and so there exists a retraction $r$ from $\H_{k+1}$ to $\H'_k$.  Now $i^{-1} \circ r \circ s'$ gives the desired retraction of $\G$ to $\H_k$.   
\end{proof}

\begin{proposition}[General Case II]
\label{prop:main-general2}
Let  $\H_{0},\H_{1}, \ldots, \H_k, \H_{k+1}$ be reflexive tournaments, the first $k+1$ of which have Hamilton cycles $\HC_{0},\HC_{1}, \ldots, \HC_{k}$, respectively,
so that $\H_0 \subseteq H_1 \subseteq \cdots \subseteq \H_k \subseteq \H_{k+1}$.
Suppose that $\H_0$, $(\H_1,\H_0)$, \ldots, $(\H_{k},\H_{k-1}), (\H_{k+1},\H_{k})$ are endo-trivial and that
\[
\begin{array}{lcl}
\mathrm{Spill}_{a_0}(\H_1[\H_0,\HC_{0}]) &= &V(\H_1) \\
\mathrm{Spill}_{a_1}(\H_2[\H_1,\HC_{1}]) &= &V(\H_2) \\
\hspace*{3mm} \vdots &\vdots &\hspace*{3mm}\vdots \\
\mathrm{Spill}_{a_{k-1}}(\H_{k}[\H_{k-1},\HC_{k-1}]) &= &V(\H_k)\\
\mathrm{Spill}_{a_{k}}(\H_{k+1}[\H_{k},\HC_{k}]) &= &V(\H_{k+1})
\end{array}
\]
Then {\sc $\H_{k+1}$-Retraction} can be polynomially reduced to $\QCSP(\H_{k+1})$.
\end{proposition}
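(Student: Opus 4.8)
The plan is to closely follow the argument of Proposition~\ref{prop:main2} (Base Case~II), replacing its single attached gadget by the tower of gadgets used to pass from Proposition~\ref{prop:main1} to Proposition~\ref{prop:main-general1}. Write $a_{k+1},\ldots,a_0$ for $|V(\H_{k+1})|,\ldots,|V(\H_0)|$ and set $n:=a_{k+1}$. Given an instance $\G$ of {\sc $\H_{k+1}$-Retraction} (so $\H_{k+1}$, and with it every $\H_i$, is an induced subgraph of $\G$), I would form $\mathcal{G}=\bigotimes_{\lambda\in V(\H_{k+1})^{[n]}}\mathcal{G}(\lambda)$ with diagonal copies $\G^d\supseteq\H^d_{k+1}\supseteq\H^d_k\supseteq\cdots\supseteq\H^d_0$, set $N=|V(\H_{k+1})^{[n]}|$, and let $\mathcal{H}_{k+1}$ be the substructure of $\mathcal{G}$ induced by $V(\H_{k+1})^N$ (so that $c_1,\ldots,c_n$ live in $\mathcal{H}_{k+1}$). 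I would then build $\mathcal{G}'$ from $\mathcal{G}$ by gluing (i)~for each $v\in V(\mathcal{H}_{k+1})\setminus V(\H^d_k)$ a fresh copy of $\Cyl_{a_k}$ with $v$ on its top cycle and its bottom cycle identified with $\HC_k$ in $\H^d_k$, and (ii)~for each $i\in[k]$ and each $v\in V(\H^d_i)\setminus V(\H^d_{i-1})$ a fresh copy of $\Cyl_{a_{i-1}}$ with $v$ on its top cycle and its bottom cycle identified with $\HC_{i-1}$ in $\H^d_{i-1}$. Finally $\phi$ is the canonical query of $\mathcal{G}'$ with $c_1,\ldots,c_n$ promoted to outermost universal variables; the goal is to show that $\G$ retracts to $\H_{k+1}$ if and only if $\phi\in\QCSP(\H_{k+1})$.

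For the forward direction I would assume $\G$ retracts to $\H_{k+1}$ and fix an assignment $\lambda$ of the universal variables into $\H_{k+1}$. It suffices to find a homomorphism $\mathcal{G}'\to\H_{k+1}$ compatible with $\lambda$; for this it suffices to find one from $\mathcal{G}$, since each glued gadget can then be filled in independently --- two vertices of distinct gadgets being adjacent only through the ``host'' copies $\H^d_i$ or $\mathcal{H}_{k+1}$ that separate them, exactly as in the crucial observation of Proposition~\ref{prop:main1} --- each individual fill-in existing by the spill hypotheses $\mathrm{Spill}_{a_{i-1}}(\H_i[\H_{i-1},\HC_{i-1}])=V(\H_i)$ for $i\in[k]$ and $\mathrm{Spill}_{a_k}(\H_{k+1}[\H_k,\HC_k])=V(\H_{k+1})$ together with property~$(\dagger)$. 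A homomorphism $\mathcal{G}\to\H_{k+1}$ compatible with $\lambda$ is obtained by projecting onto the factor $\mathcal{G}(\lambda)$ and then applying any retraction of $\G$ onto $\H_{k+1}$ (which sends $c_i$ to $\lambda(i)$), and such a retraction exists by hypothesis.

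For the converse I would assume $\phi\in\QCSP(\H_{k+1})$, pick $\lambda$ surjective, and observe that the witnessing evaluation of the existential variables is a surjective homomorphism $s:\mathcal{G}'\to\H_{k+1}$ restricting to a surjective homomorphism $s':\mathcal{H}_{k+1}=\H_{k+1}^N\to\H_{k+1}$; I would then argue on the diagonal, writing $s'$ also for the endomorphism $x\mapsto s'(x,\ldots,x)$ of $\H^d_{k+1}$ and $s$ also for the restriction of $s$ to $\G^d\cong\G$. If $|s'(\H^d_0)|=1$, then Lemma~\ref{lem:claim1} applied to the glued gadgets gives $|s'(\H^d_1)|=1$, and climbing the tower gives $|s'(\H^d_{k+1})|=1$, so $s'$ is constant on the diagonal of $\H_{k+1}^N$ and hence, by Lemma~\ref{lem:uniformly-z}, uniformly constant --- impossible since $s'$ is surjective; and $1<|s'(\H^d_0)|<|V(\H_0)|$ is ruled out by Lemma~\ref{lem:under-m}. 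So $s'$ maps $\H^d_0$ isomorphically onto a copy $\H'_0=i(\H^d_0)$ in $\H_{k+1}$. Iterating Lemma~\ref{lem:8} with the endo-trivialities $(\H_1,\H_0),\ldots,(\H_k,\H_{k-1})$ then shows that $s'$ maps $\H^d_k$ isomorphically onto a copy of $\H_k$, and since $(\H_{k+1},\H_k)$ is endo-trivial, Lemma~\ref{lem:7} forces the restriction of $s'$ to $\H^d_{k+1}$ to be an automorphism $\alpha$. Working on $\G^d$, the map $\alpha^{-1}\circ s'$ is then a homomorphism $\G\to\H_{k+1}$ that is the identity on $\H_{k+1}$, i.e.\ the required retraction.

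The only step that is more than bookkeeping is the bootstrapping in the last paragraph --- carrying the isomorphism of $\H^d_0$ all the way up through the chain of endo-trivialities via Lemma~\ref{lem:8} before capping with Lemma~\ref{lem:7}. The other point needing care is the independence of the gadget fill-ins in the forward direction, which holds because each $\H^d_i$ (and $\mathcal{H}_{k+1}$) separates the fresh vertices of the gadgets hung from it from the rest of $\mathcal{G}'$; beyond this, the proof is a verbatim adaptation of those of Propositions~\ref{prop:main2} and~\ref{prop:main-general1}.
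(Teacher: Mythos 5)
Your proposal follows essentially the same approach as the paper's proof: the same direct-product construction with the tower of $\Cyl_{a_{i-1}}$ gadgets glued along the diagonal chain $\H^d_0\subset\cdots\subset\H^d_{k+1}$, the same forward argument using the spill hypotheses to fill in gadgets after retracting, and the same converse argument climbing the chain of endo-trivialities via Lemma~\ref{lem:8} and capping with Lemma~\ref{lem:7} to extract an automorphism. If anything your write-up is slightly more careful than the paper's, which omits the explicit appeal to Lemma~\ref{lem:under-m} to rule out $1<|s'(\H^d_0)|<|V(\H_0)|$ in this particular proposition.
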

\begin{proof}
Let $n=a_{k+1}=|V(\H_{k+1})|$ and let $a_k, \ldots, a_0$ be the cardinalities of $|V(\H_k)|,\ldots,|V(\H_0)|$, respectively.
Let $\G$ be an instance of {\sc $\H_{k+1}$-Retraction}.
We build an instance  $\phi$ of $\QCSP(\H_{k+1})$
in the following fashion. Consider all possible functions $\lambda:[n]\rightarrow V(\H_{k+1})$. For some such $\lambda$, let $\mathcal{G}(\lambda)$ be the graph enriched with constants $c_1,\ldots,c_n$ where these are interpreted over some subset of $V(\H_{k+1})$ according to $\lambda$ in the natural way (acting on the subscripts).

Let $\mathcal{G}=\bigotimes_{\lambda \in {V(\H_{k+1})}^{[n]}} \mathcal{G}(\lambda)$. Let $\G^d$, $\H^d_{k+1}$,  $\H^d_k$, \ldots, $\H^d_0$ be the diagonal copies of $\G$, $\H_{k+1},\H_k,\ldots,\H_0$ in $\mathcal{G}$. Let $\mathcal{H}_{k+1}$ be the subgraph of $\mathcal{G}$ induced by $V(\H_{k+1}) \times \cdots \times V(\H_{k+1})$. Note that the constants $c_1,\ldots,c_n$ live in $\mathcal{H}_{k+1}$.

Build $\mathcal{G}'$ from $\mathcal{G}$ by first augmenting a new copy 
of~$\Cyl_{a_k}$ for every vertex $v \in V(\mathcal{H}_{k+1}) \setminus V(\H^d_k)$. Vertex~$v$ is to be identified with any vertex in the top copy of~$\DC_{a_k}$ in~$\Cyl_{a_k}$ and the bottom copy of~$\DC_{a_k}$ is to be identified with $\HC_k$ in $\H^d_k$ according to the identity function. Now, for each $i \in [k]$, and $v \in V(\H^d_{i}) \setminus V(\H^d_{i-1})$, we add a copy of $\Cyl_{a_{i-1}}$, where $v$ is identified with any vertex in the top copy of $\DC^*_{a_{i-1}}$ in $\Cyl_{a_{i-1}}$ and the bottom copy of $\DC^*_{i-1}$ is to be identified with $\H^d_{i-1}$ according to the identity map of $\DC^*_{a_{i-1}}$ to $\HC^d_{i-1}$.

Finally, build $\phi$ from the canonical query of $\mathcal{G}'$ where we additionally turn the constants $c_1,\ldots,c_n$ to outermost universal variables.

First suppose that $\G$ retracts to $\H_{k+1}$. Let $h$ be a retraction from $\G$ to $\H_{k+1}$. Let $\lambda$ be some assignment of the universal variables of $\phi$ to $\H_{k+1}$. To prove $\phi \in \QCSP(\H_{k+1})$ it suffices to prove that there is a homomorphism from $\mathcal{G}'$ to $\H_{k+1}$ that extends $\lambda$ and for this it suffices to prove that there is a homomorphism from $\mathcal{G}$ that extends $\lambda$.  The extension of the latter to the former will always be possible due to the spill assumptions.

Henceforth let us consider the homomorphic image of $\mathcal{G}$ that is $\mathcal{G}(\lambda)$. To prove $\phi \in \QCSP(\H_{k+1})$ it suffices to prove that there is a homomorphism from $\mathcal{G}(\lambda)$ to $\H_{k+1}$ that extends $\lambda$. Note that it will be sufficient to prove that $\G$ retracts to $\H_{k+1}$. Well this was our original assumption so we are done. 

Suppose now $\phi \in \QCSP(\H_{k+1})$. Choose some surjection for $\lambda$, the assignment of the universal variables of $\phi$ to $\H_{k+1}$. Let $N=|V(\H_{k+1})^{[n]}|$. The evaluation of the existential variables that witness $\phi \in \QCSP(\H_{k+1})$ induces a surjective homomorphism $s$ from $\mathcal{G}$ to $\H_{k+1}$ which contains within it a surjective homomorphism $s'$ from $\mathcal{H}_{k+1}=\H_{k+1}^N$ to $\H_{k+1}$. Consider the diagonal copy of $\H^d_0 \subset \H^d_1 \subset \cdots H^d_{k+1}$ in $\mathcal{G}$. By abuse of notation we will also consider each of $s$ and $s'$ acting just on the diagonal. If $|s'(\H^d_0)|=1$, by construction of $\mathcal{G}'$, we have $|s'(\H^d)|=1$. Now we follow the chain of spills to deduce that $|s'(\mathcal{H}_{k+1})|=1$, a contradiction. We now apply Lemma~\ref{lem:8} as well as our assumed endo-trivialities to derive that $s'$ in fact maps $\H^d_k$ by the isomorphism $i$ to a copy of itself in $\H_{k+1}$, which we will call $\H'_{k}$. Now we can deduce, via Lemma~\ref{lem:7}, that $s'(\H^d_{k+1})$ is an automorphism of $\H_{k+1}$, which we call $\alpha$.
The required retraction from $\G$ to 
$\H_{k+1}$ 
is now given by $\alpha^{-1} \circ s'$.
\end{proof}

\begin{corollary}
Let $\H$ be a non-trivial strongly connected reflexive tournament. 
Then $\QCSP(\H)$ is \NP-hard.
\label{cor:strongly-connected1}
\end{corollary}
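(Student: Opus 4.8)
The plan is the following. Since $\H$ is strongly connected with more than one vertex, and no tournament on two vertices is strongly connected, $|V(\H)|\ge 3$. If $\H$ is endo-trivial, then $\QCSP(\H)$ is already $\NP$-hard (indeed $\Pspace$-complete) by Corollary~\ref{c-main} (which rests on Lemma~\ref{l-Benoit} and \cite{BBCJK}), so from now on assume $\H$ is not endo-trivial. By Lemma~\ref{lem:endo-retraction-trivial} it is then not retract-trivial, so it retracts onto a proper induced subtournament with at least two vertices; a retract of a strongly connected digraph is again strongly connected (a directed path between two fixed vertices maps to a directed walk, hence to a directed path, between them), so this subtournament is strongly connected on at least three vertices. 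Iterating (composing retractions and passing to a further non-trivial retract as long as the current subtournament is still not endo-trivial), we reach an endo-trivial strongly connected subtournament $\H_0$ with $3\le|V(\H_0)|<|V(\H)|$ onto which $\H$ retracts; by Lemma~\ref{l-Ham} it has a directed Hamilton cycle~$\HC_0$.

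Next I would interpolate a chain $\H_0\subseteq\H_1\subseteq\cdots\subseteq\H_k\subseteq\H_{k+1}=\H$ of strongly connected reflexive subtournaments (each with a chosen directed Hamilton cycle~$\HC_i$) so that $\H_0$ and each pair $(\H_{i+1},\H_i)$ with $0\le i\le k-1$ is endo-trivial and, writing $a_i:=|V(\H_i)|$, so that $\mathrm{Spill}_{a_i}(\H_{i+1}[\H_i,\HC_i])=V(\H_{i+1})$ for $0\le i\le k-1$. The chain is produced by starting from the two-term chain $\H_0\subseteq\H$ and repeatedly refining: whenever a consecutive pair $(\H_{i+1},\H_i)$ is not endo-trivial, an endomorphism of $\H_{i+1}$ fixing $\H_i$ that is not an automorphism, iterated to idempotency, is a retraction of $\H_{i+1}$ onto a strongly connected subtournament lying strictly between $\H_i$ and $\H_{i+1}$ and still fixing $\H_i$, which we insert; bounded vertex counts force termination. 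Along the way the spill equalities are obtained from Lemma~\ref{lem:spillage} whenever a step is realised by an honest retraction, and otherwise checked by hand from the structure of the gadget~$\Cylm$ via Lemma~\ref{lem:claim1}, Remark~1 and property~$(\dagger)$; the possibility of a step having full spill while not being a retraction is exactly what Figure~\ref{fig:weird-example} exhibits.

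It remains to deal with the top step $(\H,\H_k)$. Here one of two situations occurs. If $(\H,\H_k)$ is endo-trivial with $\mathrm{Spill}_{a_k}(\H[\H_k,\HC_k])=V(\H)$, then the full chain $\H_0,\dots,\H_k,\H$ satisfies the hypotheses of Proposition~\ref{prop:main-general2} (General Case~II, which is Base Case~II, Proposition~\ref{prop:main2}, when $k=0$), whence $\H$-{\sc Retraction} polynomially reduces to $\QCSP(\H)$. Otherwise $\H$ retracts onto $\H_k$, and then Lemmas~\ref{lem:7} and \ref{lem:8} together with the endo-trivialities of the chain force every endomorphism of $\H$ carrying $\H_0$ — hence $\H_k$ — to an isomorphic copy to be an automorphism; from this one deduces that $\H$ retracts onto every isomorphic copy $\H'_k=i(\H_k)$ of $\H_k$ in $\H$ with $\mathrm{Spill}_{a_k}(\H[\H'_k,i(\HC_k)])=V(\H)$, so the chain satisfies the hypotheses of Proposition~\ref{prop:main-general1} (General Case~I, which is Base Case~I, Proposition~\ref{prop:main1}, when $k=0$), whence $\H_k$-{\sc Retraction} polynomially reduces to $\QCSP(\H)$.

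In either case $\QCSP(\H)$ is at least as hard as $\H'$-{\sc Retraction} for a strongly connected reflexive tournament $\H'\in\{\H_k,\H\}$ with $|V(\H')|\ge 3$; such an $\H'$ contains a directed cycle of length at least three and so is not a transitive tournament, whence $\H'$-{\sc Retraction} is $\NP$-complete by the retraction dichotomy for reflexive tournaments~\cite{La06}. Hence $\QCSP(\H)$ is $\NP$-hard. The main obstacle is the chain construction in the middle two paragraphs: securing all the relative endo-trivialities and all the $\mathrm{Spill}$ equalities simultaneously, and then choosing correctly between the General Case~I and General Case~II routes at the top. It is precisely the gap between ``full spill'' and ``retracts onto'' — illustrated by Figure~\ref{fig:weird-example} — that makes this the delicate part of the argument and that forces the split into the two families of propositions.
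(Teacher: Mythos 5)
Your broad strategy matches the paper's: retract down to an endo-trivial subtournament, interpolate a chain of relatively endo-trivial subtournaments with their spill conditions, and then invoke Proposition~\ref{prop:main-general1} or Proposition~\ref{prop:main-general2} via a case split at the top. But the case split you set up is not a valid dichotomy, and the justification for your second branch does not hold. You split on ``$(\H,\H_k)$ is endo-trivial with full spill'' versus ``otherwise, $\H$ retracts onto $\H_k$.'' These are neither exhaustive nor mutually exclusive: the negation of the first does not give the second (failure of endo-triviality of $(\H,\H_k)$ only produces a retraction of $\H$ onto \emph{some} proper subtournament containing $\H_k$, not onto $\H_k$ itself; and the failure might instead be a failure of the spill equality). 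Your appeal to Lemmas~\ref{lem:7} and~\ref{lem:8} to conclude that every endomorphism of $\H$ carrying $\H_0$ to an isomorphic copy is an automorphism needs exactly the hypothesis that $(\H,\H_0)$ is endo-trivial, which is not available in this branch, and if it were, it would force $\H=\H_k$ and contradict the very retraction you posited. So that branch of the argument is circular.

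The paper's iteration is driven by problematic isomorphic copies, not by the fixed subtournaments $\H_i$. At each stage the question is: does $\H$ retract to \emph{every} isomorphic copy $\H'_i=i(\H_i)$ for which $\mathrm{Spill}_{a_i}(\H[\H'_i,i(\HC_i)])=V(\H)$? If yes, Proposition~\ref{prop:main-general1} applies, as this is precisely its last hypothesis. If no, one fixes a witnessing bad copy $\H'_i$ (full spill, no retraction from $\H$); if $(\H,\H'_i)$ is endo-trivial then Proposition~\ref{prop:main-general2} applies with a chain running through that copy, and otherwise one inserts a new level $\H_{i+1}$ with $\H'_i\subset\H_{i+1}\subset\H$, $(\H_{i+1},\H'_i)$ endo-trivial and $\H$ retracting to $\H_{i+1}$, and repeats. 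Crucially, the intermediate spill equalities along the chain then come for free by composing the known spill retractions (for the ambient $\H$) with the retraction of $\H$ onto the inserted level, rather than by an unspecified hand-check from the gadget. Your formulation never introduces the bad copies, so the spill conditions you need are not secured and the final case split does not close. The initial reduction ($|V(\H)|\ge 3$, descent to an endo-trivial $\H_0$, \NP-completeness of the relevant retraction problems) is fine and matches the paper.
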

\begin{proof}
As $\H$ is a strongly connected reflexive tournament, which has more than one vertex by our assumption, $\H$ is not transitive.
Note that {\sc $\H$-Retraction}
 is \NP-complete (see Section 4.5 in \cite{STACStoTOCT}, using results from \cite{La06,JBK05,LZ05}). Thus, if $\H$ is endo-trivial, the result follows from Proposition~\ref{prop:main1} (note that we could also have used Corollary~\ref{c-main}).

Suppose $\H$ is not endo-trivial. Then, by Lemma~\ref{lem:endo-retraction-trivial}, $\H$ is not retract-trivial either. This means
that~$\H$ has a non-trivial retraction to some subtournament~$\H_0$. We may assume
that $\H_0$ is endo-trivial, as otherwise we will repeat the argument until we find a retraction from $\H$ to an endo-trivial (and consequently strongly connected) subtournament.
 
Suppose that $\H$ retracts to all isomorphic copies $\H'_0=i(\H_0)$ of $\H_0$ within it, except possibly those for which $\mathrm{Spill}_m(\H[\H'_0,i(\HC_0)]) \neq V(\H)$. Then the result follows from  Proposition~\ref{prop:main1}. So there is a copy $\H'_0=i(\H_0)$ to which $\H$ does not retract for which $\mathrm{Spill}_m(\H[\H'_0,i(\HC_0)]) = V(\H)$.
If $(\H,\H'_0)$ is endo-trivial, the result follows from Proposition~\ref{prop:main2}. Thus we assume $(\H,\H'_0)$ is not endo-trivial and we deduce the existence of $\H'_0 \subset \H_1 \subset \H$ ($\H_1$ is strictly between $\H$ and $\H'_0$) so that $(\H_1,\H'_0)$ and $H'_0$ are endo-trivial and $\H$ retracts to $\H_1$. Now we are ready to break out. Either $\H$ retracts to all isomorphic copies of $\H'_1=i(\H_1)$ in $\H$, except possibly for those so that $\mathrm{Spill}_m(\H[\H'_1,i(\HC_1)]) \neq V(\H)$, and we apply Proposition~\ref{prop:main-general1}, or there exists a copy $\H'_1$, with $\mathrm{Spill}_m(\H[\H'_1,i(\HC_1)]) = V(\H)$, to which it does not retract. 
If $(\H,\H'_1)$ is endo-trivial, the result follows from Proposition~\ref{prop:main-general2}. Otherwise we iterate the method, which will terminate because our structures are getting strictly \textcolor{black}{smaller}. 
\end{proof}

\subsection{An initial strongly connected component that is non-trivial}
\label{sec-connectedc:initial-strongly}

Let $\H^+$ denote any reflexive tournament that has an initial strongly connected component $\H$ that is non-trivial (not of size $1$). Let $\Cylmplus$ be $\Cylm$ but with a pendant out-edge hanging from the top-most cycle. This edge is directed to the vertex $x$. Thus, $\Cylmplus$ contains one additional vertex to $\Cylm$ and this has an incoming edge from some vertex in the top-most cycle $\DCm$ (it does not matter which one). $\Cylmplus$ is drawn in Figure~\ref{fig:Photo2plus}.

Define $\mathrm{Spill}^+_m$ as $\mathrm{Spill}_m$ but with respect to $\Cylmplus$ instead of $\Cylm$. 
{\color{black} At this point we risk confusion with our overburdened notation. Let us address in Table~\ref{table-confusion} how our notation maps from the strongly connected case to that in which there is an initial strongly connected component that is non-trivial.}

\begin{table}
{\color{black}
\begin{tabular}{|c|c|c|}
\hline
& Strongly connected & \begin{tabular}{c}Initial component \\ strongly connected \end{tabular} \\
\hline
Graph & $\H$ & $\H^+$ \\
\hline
Gadget & $\Cylm$ & $\Cylmplus$ \\
\hline
\begin{tabular}{c}Subgraph \\ (strongly connected) \end{tabular} & $\H_0$ & $\H_0$ \\
\hline
Hamilton cycle & $\HC_0$ & $\HC_0$ \\
\hline
Spill & $\mathrm{Spill}_m(\H[\H_0,\HC_0))$ & $\mathrm{Spill}^+_m(\H^+[\H_0,\HC_0])$ \\
\hline
\end{tabular}
}
\caption{\textcolor{black}{Mapping notation from the strongly connected case to the case in which there is an initial strongly connected component that is non-trivial.}}
\label{table-confusion}
\end{table}
 \begin{figure}
\centering
\input{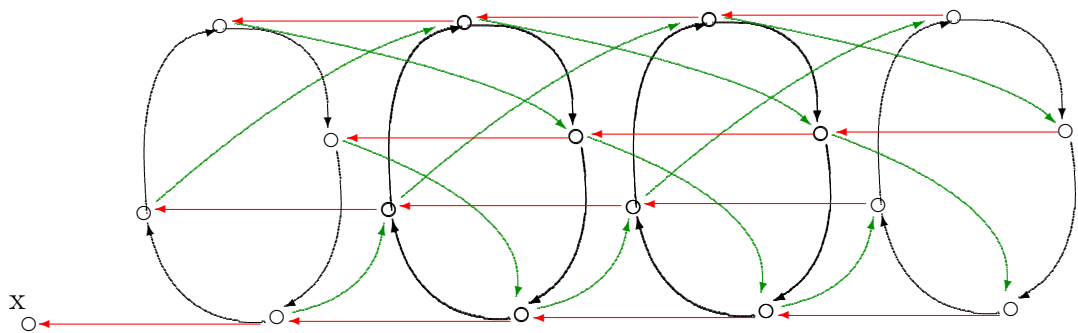}
\caption{The gadget $\Cylmplus$ in the case $m:=4$ (self-loops are not drawn). We usually visualise the right-hand copy of~$\DC^*_4$ as the ``bottom'' copy and then we talk about vertices ``above'' and ``below'' according to the red arrows. The vertex $x$ is depicted at the left-hand extremity.}
\label{fig:Photo2plus}
\end{figure}
Note that Lemma~\ref{lem:claim1}, with $\Cylm$ replaced by $\Cylmplus$, does not hold.

\begin{lemma}
Let $\H^+$ be some reflexive tournament that has an initial strongly connected component $\H$ that is non-trivial and contains endo-trival $\H_0$ with Hamilton cycle $\HC_0$. Suppose $\mathrm{Spill}^+_m(\H[\H_0,\HC_0])=V(\H)$, then $\mathrm{Spill}^+_m(\H^+[\H_0,\HC_0])=V(\H^+)$.
\label{lem:new-trivial}
\end{lemma}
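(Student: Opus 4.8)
The plan is to show that the ``spillage'' in $\H$ already exhausts all of $V(\H)$, that $V(\H)$ is an initial strongly connected component so nothing in $\H^+\setminus\H$ sits below $\H$, and hence that any spill vector of $\H^+$ can be obtained by combining a spill vector of $\H$ on the $\H$-part with arbitrary (constant) behaviour on the rest. First I would recall the relevant definitions: a vertex $y\in V(\H^+)$ lies in $\mathrm{Spill}^+_m(\H^+[\H_0,\HC_0])$ if there is a retraction $r$ of $\F^+(\H_0,\HC_0)$ (the graph built from $\H^+$ by gluing $\Cylmplus$ with its bottom cycle identified to $\HC_0$) to $\H^+$ with $r(x)=y$ for the designated top vertex $x$. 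Since $\H_0$ has at least three vertices (it is endo-trivial, hence Lemma~\ref{l-Ham} applies and it has a Hamilton cycle) and $V(\H)$ contains $\H_0$ by assumption, $V(\H_0)\subseteq\mathrm{Spill}^+_m(\H[\H_0,\HC_0])\subseteq V(\H)$ is immediate from Remark~1 (with $\Cylm$ replaced by $\Cylmplus$); but the hypothesis gives us the stronger $\mathrm{Spill}^+_m(\H[\H_0,\HC_0])=V(\H)$.

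The core step: take any $y\in V(\H^+)$ and build a retraction $r$ of $\F^+(\H_0,\HC_0)$ to $\H^+$ with $r(x)=y$. Split into the two cases. If $y\in V(\H)$, then by hypothesis there is already a retraction $r_0$ of the corresponding $\Cylmplus$-gadget over $\H$ to $\H$ sending $x$ to $y$; extend $r_0$ to $\F^+(\H_0,\HC_0)$ by the identity on $V(\H^+)\setminus V(\H)$. This is a homomorphism because $\H$ is an \emph{initial} strongly connected component, so every edge between $V(\H)$ and $V(\H^+)\setminus V(\H)$ is directed out of $V(\H)$, and $r_0$ maps $V(\H)$ into $V(\H)$, so such an edge $(u,v)$ with $u\in V(\H)$, $v\notin V(\H)$ maps to $(r_0(u),v)$ which is still an out-edge from $V(\H)$ and hence in $E(\H^+)$; edges internal to $V(\H^+)\setminus V(\H)$ are preserved since $r$ is the identity there. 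It is a retraction since both pieces are retractions on their domains. If instead $y\in V(\H^+)\setminus V(\H)$, pick any vertex $w$ in the top cycle of $\Cylmplus$ other than $x$ (here we use $m\ge 3$, which holds), map the entire $\Cylmplus$-gadget — including $x$ — to the single vertex $y$ except for the bottom cycle $\HC_0$ which must go to $\H_0$; but this does not directly give a homomorphism because the bottom cycle is glued. Instead I would: retract the gadget so that every copy of $\DCm$ collapses to a single vertex of $\H_0$, then argue we can reach $y$ via the pendant edge. Concretely, the pendant edge of $\Cylmplus$ points \emph{into} $x$ from a vertex $z$ of the top cycle; using property~$(\dagger)$ and Lemma~\ref{lem:claim1} (which holds for the $\Cylm$-part), send the whole $\Cylm$-part identically onto $\H_0\subseteq\H$, so $z$ lands on some $z_0\in V(\H_0)\subseteq V(\H)$, and then since $\H$ is an initial component, $(z_0,y)\in E(\H^+)$, so we may set $r(x)=y$; extend by the identity off the gadget and off $\H$. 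The single new vertex $x$ has only the in-edge from $z$ and its self-loop, both satisfied.

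The main obstacle I anticipate is verifying that in the second case the resulting map is genuinely a retraction of $\F^+(\H_0,\HC_0)$ onto $\H^+$ rather than merely a homomorphism: one has to check that $x$, which was added as a fresh vertex of the gadget and is \emph{not} a vertex of $\H^+$, is the only vertex whose image differs from itself on the gadget side, and that collapsing the $\Cylm$-part onto $\H_0$ (an automorphic image of the bottom cycle under $(\dagger)$) is consistent with the identity on $\HC_0\subseteq\H^d_0$. A second, minor point is ensuring the pendant out-edge in Figure~\ref{fig:Photo2plus} is oriented as I claimed (into $x$); if it is out of $x$, the symmetric argument with an in-edge to some vertex of $\H$ works verbatim since $\H$ is a component and hence has both in- and out-neighbours available among its own vertices (it is strongly connected with $\ge 2$ vertices). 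Once these checks go through, we conclude every $y\in V(\H^+)$ lies in $\mathrm{Spill}^+_m(\H^+[\H_0,\HC_0])$, and the reverse inclusion is trivial, giving $\mathrm{Spill}^+_m(\H^+[\H_0,\HC_0])=V(\H^+)$.
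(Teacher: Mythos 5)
Your proposal is correct and takes essentially the same route as the paper's (very short) proof: map each copy of $\DCm$ in $\Cylmplus$ identically to the bottom cycle $\HC_0$, note that the source $z$ of the pendant edge then lands on some $z_0\in V(\H_0)\subseteq V(\H)$, and use that $y\in V(\H^+)\setminus V(\H)$ is forward-adjacent from every vertex of $\H$ because $\H$ is the initial strongly connected component; your Case~1 is the trivial extension-by-identity that the paper takes for granted in the phrase ``we only need to argue for $x\in\H^+\setminus\H$''. One small caution: your parenthetical claim that if the pendant edge were oriented \emph{out of} $x$ a ``symmetric argument'' would still work is false --- you would then need an edge from $y$ back into $\H$, which a tournament with $\H$ initial does not have --- but this is moot because the pendant edge is indeed directed into $x$, as you correctly assume in the main argument.
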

\begin{proof}
We only need to argue for the $x \in \H^+\setminus H$. In this case, we may evaluate all the cycles in $\Cylmplus$ onto $\HC_0$ with each vertex mapping to the one directly beneath it. This works as $x$ is forward-adjacent from every vertex in $\HC_0$.
\end{proof}
\noindent The condition of endo-triviality of $\H_0$ was not used in the proof of Lemma~\ref{lem:new-trivial}.
\begin{proposition}[Base Case A-I.]
Let $\H^+$ be some reflexive tournament that has an initial strongly connected component $\H$ that is non-trivial and contains endo-trivial $\H_0$ with Hamilton cycle $\HC_0$. 
Assume that $\H$ retracts to $\H'_0$ for every isomorphic copy $\H'_0=i(\H_0)$ of~$\H_0$ in~$\H$ with $\mathrm{Spill}^+_m(\H[\H'_0,i(\HC_0)])=V(\H)$.
Then $\H_0$-{\sc Retraction} can be polynomially reduced to $\QCSP(\H^+)$.
\label{prop:mainA1}
\end{proposition}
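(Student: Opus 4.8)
The plan is to run the proof of Proposition~\ref{prop:main1} verbatim with the substitutions $\H\rightsquigarrow\H^+$, $\Cylm\rightsquigarrow\Cylmplus$ and $\mathrm{Spill}_m\rightsquigarrow\mathrm{Spill}^+_m$, the one genuinely new input being Lemma~\ref{lem:new-trivial}. First I would pass to a strongly connected instance: given an instance $\G$ of $\H_0$-{\sc Retraction}, adjoin for each $w\in V(\G)\setminus V(\H_0)$ two fresh vertices and arcs giving $w$ a directed path to and from a fixed vertex of $\HC_0$; since $\H_0$ is strongly connected this does not affect whether $\G$ retracts to $\H_0$, and it makes $\G$ strongly connected. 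Now glue a copy of $\H^+$ onto $\G$ along their common $\H_0$ to form $\G'$, put $n=|V(\H^+)|$ and $N=|V(\H^+)^{[n]}|$, form $\mathcal{G}'(\lambda)$ for each $\lambda\in V(\H^+)^{[n]}$ and their direct product $\mathcal{G}'$, let $\mathcal{H}^+\cong(\H^+)^N$ be the induced substructure on tuples over $V(\H^+)$, and obtain $\mathcal{G}''$ by attaching to every $v\in V(\mathcal{H}^+)\setminus V(\H^d_0)$ a fresh copy of $\Cylmplus$ whose bottom cycle is $\HC_0$ in $\H^d_0$ via the identity and with $v$ identified with a top-cycle vertex (so the pendant of each gadget is new). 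Take $\phi$ to be the canonical query of $\mathcal{G}''$ with $c_1,\dots,c_n$ turned into outermost universal variables; the claim is that $\G$ retracts to $\H_0$ iff $\phi\in\QCSP(\H^+)$.

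For the forward direction, assume $\G$ retracts to $\H_0$, so $\G'$ retracts onto $\H^+$, and fix an assignment of the universal variables into $\H^+$. As in Proposition~\ref{prop:main1} it suffices to find a homomorphism $\mathcal{G}'\to\H^+$ respecting this assignment (namely the composite of the projection of $\mathcal{G}'$ onto the factor indexed by that assignment with the retraction $\G'\to\H^+$) and then to extend it, independently for each gadget since two gadgets meet only inside $\mathcal{H}^+$, over the attached copies of $\Cylmplus$; extending over a gadget whose top-cycle vertex goes to $y$ asks precisely that $y\in\mathrm{Spill}^+_m(\H^+[\H_0,\HC_0])$. By Lemma~\ref{lem:new-trivial} this set is all of $V(\H^+)$ as soon as $\mathrm{Spill}^+_m(\H[\H_0,\HC_0])=V(\H)$, which is the form in which the spill hypothesis enters here (it is, for instance, automatic when $\H$ retracts to $\H_0$, by Lemma~\ref{lem:spillage}, since a retraction witnessing a vertex of $\mathrm{Spill}_m$ also witnesses it for $\mathrm{Spill}^+_m$ by sending the pendant to the image of the top-cycle vertex adjacent to it).

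For the backward direction, choose $\lambda$ surjective; the witnessing evaluation yields a surjective homomorphism $s\colon\mathcal{G}''\to\H^+$, hence a surjective $N$-ary polymorphism $s'=s\restrict\mathcal{H}^+$ of $\H^+$, which we examine on the diagonal. The case $|s'(\H^d_0)|=1$ is ruled out exactly as in Proposition~\ref{prop:main1}: although Lemma~\ref{lem:claim1} fails for $\Cylmplus$, this is harmless because each gadget is attached at a top-\emph{cycle} vertex and the $\Cylm$-subgadget of $\Cylmplus$ still collapses the whole top cycle — in particular the attached vertex — once its bottom cycle collapses, so $s'$ would be constant on the diagonal and hence, by Lemma~\ref{lem:uniformly-z}, uniformly constant, contradicting surjectivity. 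By Lemma~\ref{lem:under-m} the case $1<|s'(\H^d_0)|<m$ is impossible, so $s'$ maps $\H^d_0$ isomorphically onto a copy $\H'_0=i(\H^d_0)$; and, $s'$ being a surjective polymorphism of $\H^+=\H_1\Rightarrow\cdots$, it preserves $V(\H_1)=V(\H)$, whence $\H'_0\subseteq\H$. One then shows $\mathrm{Spill}^+_m(\H[\H'_0,i(\HC_0)])=V(\H)$ as in Proposition~\ref{prop:main1}: for $y\in V(\H)\setminus V(\H'_0)$, surjectivity of $s'$ supplies a top-cycle vertex $x$ of some gadget with $s'(x)=y$, giving (by transporting $s'$ along the evident isomorphism) a retraction $r^*$ of $\H^++\Cylmplus$ onto $\H^+$ with $r^*(x)=y$, and $r^*$ restricts to a retraction of $\H+\Cylmplus$ onto $\H$ because every cycle vertex of the gadget lies on a closed walk through $\HC_0\subseteq\H_1$ and so lands in $\H_1=\H$, the pendant being rerouted into $\H$ through its self-loop. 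By the hypothesis of the proposition there is a retraction $r\colon\H\to\H'_0$, and then $i^{-1}\circ r\circ s'$, read on the diagonal copy of $\G$, is the required retraction of $\G$ onto $\H_0$: on $\H_0$ it computes $i^{-1}\circ r\circ i=\mathrm{id}$, and it is well-defined because $\G$ is strongly connected and $s'$ sends $\H^d_0$ into $\H_1$, so $s'$ sends all of the diagonal copy of $\G$ into $\H$, the domain of $r$.

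The step I expect to be the real obstacle is this last one, which is exactly why the preprocessing is needed: unlike in Proposition~\ref{prop:main1}, the witness $s'$ a priori sends only the diagonal copy of $\H$, not that of the whole instance, into the initial component $\H$, and $\H^+$ has no retraction onto $\H$ when $\H$ is non-trivial (a strongly connected tournament on at least two vertices has no sink), so $i^{-1}\circ r\circ s'$ need not even be defined on $\G$ unless every vertex of $\G$ already lies on a directed closed walk through $\H_0$. Everything else is a faithful transcription of the proof of Proposition~\ref{prop:main1}, with Lemma~\ref{lem:new-trivial} absorbing the sole extra phenomenon — spillage onto the later strongly connected components of $\H^+$ — introduced by the pendant vertex of $\Cylmplus$.
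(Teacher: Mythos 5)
Your construction attaches each $v\in V(\mathcal{H}^+)\setminus V(\H^d_0)$ to a \emph{top-cycle} vertex of $\Cylmplus$ and leaves the pendant free. This is precisely backwards from what the paper does, and it breaks the forward direction of the reduction. The cylinder part of $\Cylmplus$ (the copies of $\DCm$ with their red and green edges, i.e.\ $\Cylm$) is strongly connected, so any homomorphism of the gadget into $\H^+$ that sends the bottom cycle $\HC_0$ into the initial component $\H_1=\H$ must send the \emph{entire} cylinder -- including every top-cycle vertex -- into $\H$, since no closed directed walk can leave an SCC in the acyclic condensation $\H_1\Rightarrow\cdots\Rightarrow\H_n$. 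Consequently, attaching $v$ to a top-cycle vertex forces $v$ to land in $V(\H)$. But your universal variables are the $c_i$'s themselves, which are generic vertices of $\mathcal{H}^+=(\H^+)^N$ carrying gadgets; a universal assignment may send $c_i$ to any vertex of $\H^+$, including one outside $\H$, and then no compatible evaluation of the existentials exists. So $\phi$ is false even when $\G$ retracts to $\H_0$, and the reduction fails. Your statement that ``extending over a gadget whose top-cycle vertex goes to $y$ asks precisely that $y\in\mathrm{Spill}^+_m(\H^+[\H_0,\HC_0])$'' conflates $\mathrm{Spill}_m$ (distinguished vertex on the top cycle) with $\mathrm{Spill}^+_m$ (distinguished vertex is the pendant $x$, as the paper's construction and the proof of Lemma~\ref{lem:new-trivial} make explicit). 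The set your gadget is actually demanding is $\mathrm{Spill}_m(\H^+[\H_0,\HC_0])$, which is always contained in $V(\H)$, and Lemma~\ref{lem:new-trivial} does not apply to it.

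The pendant is the whole point of moving from $\Cylm$ to $\Cylmplus$: it is a sink of the gadget, lies on no closed walk, and hence can escape the initial component. The paper therefore identifies $v$ with the \emph{pendant} $x$, so that the gadget lets $v$ reach any $y\in\mathrm{Spill}^+_m$, and Lemma~\ref{lem:new-trivial} then shows this spill covers all of $V(\H^+)$. Once you attach at the pendant, you also lose the collapsing argument you used to rule out $|s'(\H^d_0)|=1$ (the pendant does not collapse with the cycle), which is why the paper instead argues that $s'(\H^d)$ would then be an out-star from a single vertex and hence not strongly connected, contradicting that $\H^d$ is. More broadly, the paper's construction differs from yours in a correlated way: it glues $\H$ (not $\H^+$) to $\G$, takes the power $\H^N$ rather than $(\H^+)^N$, introduces fresh vertices $d_i$ with edges $c_i\to d_i$ and makes the $d_i$ universal while existentially quantifying the $c_i$ and restricting them to $V(\H)$ via Corollary~\ref{cor:galois}. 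That combination is what lets the universals range over all of $\H^+$ (which is needed: the reverse direction must allow $\lambda'$ surjective) while still confining the combinatorially relevant part of the witnessing map inside $\H$. Your preprocessing of $\G$ to make it strongly connected is a reasonable alternative to the existential-restriction trick for confining $\G$'s image to $\H$, but it does nothing to repair the attachment error, which is the genuine gap.
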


\begin{proof}
Let $m$ be the size of~$|V(\H_0)|$ and $n$ be the size of $|V(\H)|$. 
Let~$\G$ be an instance of $\H_0$-{\sc Retraction}.
We build an instance~$\phi$ of $\QCSP(\H^+)$
in the following fashion. First, take a copy of~$\H$ together with~$\G$ and build~$\G'$ by identifying these on the copy of~$\H_0$ that they both possess as an induced subgraph.

Now, consider all possible functions $\lambda:[n]\rightarrow V(\H)$. For some such $\lambda$, let $\mathcal{G'}(\lambda)$ be the graph enriched with constants $c_1,\ldots,c_n$ where these are interpreted over some subset of $V(\H)$ according to $\lambda$ in the natural way (acting on the subscripts).

Let $\mathcal{G}'=\bigotimes_{\lambda \in {V(\H)}^{[n]}} \mathcal{G}'(\lambda)$. Let $\G'^d$, $\H^d$ and $\H^d_0$ be the diagonal copies of $\G'$, $\H$ and $\H_0$ in $\mathcal{G}'$. Let $\mathcal{H}$ be the subgraph of $\mathcal{G}'$ induced by $V(\H) \times \cdots \times V(\H)$. Note that the constants $c_1,\ldots,c_n$ live in $\mathcal{H}$. Now build $\mathcal{G}''$ from $\mathcal{G}'$ by augmenting a new copy of~$\Cylmplus$ for every vertex $v \in V(\mathcal{H}) \setminus V(\H^d_0)$. Vertex~$v$ is to be identified with the vertex $x$ that is at the end of the out-edge pendant on the top copy of~$\DCm$ in~$\Cylmplus$ and the bottom copy of~$\DCm$ is to be identified with $\HC_0$ in $\H^d_0$ according to the identity function. Call these \emph{the $\Cylmplus$ of the second stage}.

Now build $\mathcal{G}'''$ by adding an edge from each vertex $c_i$ to a new vertex $d_i$ (for each $i \in [n]$). Now add a copy 
of~$\Cylmplus$ for every vertex $v \in \{d_1,\ldots,d_n\}$. Vertex~$v$ is to be identified with the vertex $x$ that is at the end of the out-edge pendant on the top copy of~$\DCm$ in~$\Cylmplus$ and the bottom copy of~$\DCm$ is to be identified with $\HC_0$ in $\H^d_0$ according to the identity function.  Call these \emph{the $\Cylmplus$ of the third stage}.

Finally, build $\phi$ from the canonical query of $\mathcal{G}'''$, where we additionally turn the vertices $d_1,\ldots,d_n$ to outermost universal variables $z_1,\ldots,z_n$. Then existentially quantify all remaining constants and vertices innermost. Finally, restrict all except the universal variables to be in $V(\H)$, appealing to the definition guaranteed by Corollary~\ref{cor:galois}.


We claim that~$\G$ retracts to~$\H_0$ if and only if $\phi \in \QCSP(\H^+)$.

First suppose that~$\G$ retracts to~$\H_0$ by $r$. Let $\lambda'$ be some assignment of the universal variables $z_1,\ldots,z_n$ of $\phi$ to $\H^+$ and choose $y_1,\ldots,y_n$ backwards-adjacent to these in $\H$, mapped by $\lambda$. To prove $\phi \in \QCSP(\H^+)$ it suffices to prove that there is a homomorphism from $\mathcal{G}''$ to $\H^+$ that extends $\lambda$ and for this it suffices to prove that there is a homomorphism $h$ from $\mathcal{G}'$ to $\H$ that extends $\lambda$. Let us explain why. Because $\H$ retracts to $\H_0$, we have $\mathrm{Spill}_m(\H[\H_0,\HC_0])=V(\H)$ due to Lemma~\ref{lem:spillage} which implies the weaker $\mathrm{Spill}^+_m(\H[\H_0,\HC_0])=V(\H)$. For the $\Cylmplus$ of the second stage, the weaker statement suffices, but for the $\Cylmplus$ of the third stage, the stronger statement is needed. 

Henceforth let us consider the homomorphic image of $\mathcal{G}'$ that is $\mathcal{G}'(\lambda)$. To prove $\phi \in \QCSP(\H^+)$ it suffices to prove that there is a homomorphism from $\G'(\lambda)$ to $\H$ that extends $\lambda$. Note that it will be sufficent to prove that $\G'$ retracts to $\H$. Let $h$ be the natural retraction from~$\G'$ to~$\H$ that extends the known retraction $r$ from $\G$ to $\H_0$. We are done.  

Suppose now $\phi \in \QCSP(\H^+)$.  Choose some surjection for $\lambda'$ mapping $z_1,\ldots,z_n$ to $\H$. Choose some $y_1,\ldots,y_n$ backwards-adjacent to these and let this be the map $\lambda$. Note that it is not possible for all $y_1,\ldots,y_n$ to be evaluated as a single vertex as the initial strongly connected component is non-trivial.

The evaluation of the existential variables that witness $\phi \in \QCSP(\H)$ induces a non-trivial homomorphism $s$ from $\mathcal{G}''$ to $\H$ which contains within it a non-trivial homomorphism $s'$ from $\mathcal{H}=\H^N$ to $\H$. Consider the diagonal copy of $\H^d_0 \subset \H^d \subset \G'^d$ in $\mathcal{G}'$. By abuse of notation we will also consider each of $s$ and $s'$ acting just on the diagonal.

 If $|s'(\H^d_0)|=1$, by construction of $\mathcal{G}''$, we have that $s'(\H^d)$ is an in-star (that is, a single terminal vertex receiving an edge from potentially numerous initial vertices), but this is not possible as $\H^d$ is strongly connected. As $1<|s'(\H^d_0)|<m$ is not possible either due to Lemma~\ref{lem:under-m}, we find that $|s'(\H^d_0)|=m$, and indeed $s'$ maps $\H^d_0$ to a copy of itself in $\H$ which we will call $\H'_0=i(\H^d_0)$ for some isomorphism~$i$.

We claim that $\mathrm{Spill}^+_m(\H[\H'_0,i(\HC^d_0)]) = V(\H)$. Since $\lambda'$ is surjective on $\H^+$, this is enforced explicitly by the $\Cylmplus$ of the third stage. As $\mathrm{Spill}^+_m(\H[\H'_0,i(\HC^d_0)]) = V(\H)$, we find, by assumption of the lemma, that there exists a retraction $r$ from $\H^d$ to $\H'_0$.  
Now $i^{-1} \circ r \circ s'$ is the desired retraction of $\G$ to~$\H_0$.  
\end{proof}

\begin{proposition}[Base Case A-II]
Let $\H^+$ be some reflexive tournament that has an initial strongly connected component $\H$ that is non-trivial and contains $\H_0$ with Hamilton cycle $\HC_0$ so that  $(\H,\H_0)$ and $\H_0$ are endo-trivial and $\mathrm{Spill}^+_m(\H[\H_0,\HC_0]) = V(\H)$. 
Then $\H$-{\sc Retraction} can be polynomially reduced to $\QCSP(\H^+)$.
\label{prop:mainA2}
\end{proposition}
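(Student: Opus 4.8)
The plan is to mirror the construction of Proposition~\ref{prop:mainA1}, but exploiting the extra hypothesis that $(\H,\H_0)$ is endo-trivial so that we do not have to verify a spill condition on isomorphic copies of $\H_0$; instead, once $s'$ restricted to $\H^d$ is forced to fix $\H_0$ up to isomorphism, Lemma~\ref{lem:7} will upgrade it directly to an automorphism of $\H$. Concretely, given an instance $\G$ of $\H$-{\sc Retraction}, I would take $m=|V(\H_0)|$ and $n=|V(\H)|$, form $\mathcal{G}(\lambda)$ for each $\lambda:[n]\to V(\H)$ with constants $c_1,\dots,c_n$ interpreted by $\lambda$, take the product $\mathcal{G}=\bigotimes_\lambda \mathcal{G}(\lambda)$, and identify diagonal copies $\G^d$, $\H^d$, $\H^d_0$. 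Then build $\mathcal{G}'$ by hanging a copy of $\Cylmplus$ from every $v\in V(\mathcal{H})\setminus V(\H^d_0)$ (identifying $v$ with the pendant vertex $x$ and the bottom cycle with $\HC_0$ in $\H^d_0$), then build $\mathcal{G}''$ by adding an edge $c_i\to d_i$ for each $i$ and hanging a further copy of $\Cylmplus$ from each $d_i$, exactly as in the third stage of Proposition~\ref{prop:mainA1}. Finally form $\phi$ from the canonical query of $\mathcal{G}''$, turning $d_1,\dots,d_n$ into outermost universal variables $z_1,\dots,z_n$, existentially quantifying everything else innermost, and restricting the non-universal variables to $V(\H)$ via Corollary~\ref{cor:galois}.

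For the forward direction, suppose $\G$ retracts to $\H$ by $r$. Given any assignment $\lambda'$ of $z_1,\dots,z_n$ into $\H^+$, pick $y_1,\dots,y_n$ backwards-adjacent in $\H$ (possible since $\H$ is the initial strongly connected component, so every vertex of $\H^+$ receives an edge from every vertex of $\H$, and in particular $\H$ itself is strongly connected so has in-neighbours), setting $\lambda$ accordingly. It suffices to extend to a homomorphism on $\mathcal{G}$, and then to the $\Cylmplus$ gadgets: for the second-stage gadgets the hypothesis $\mathrm{Spill}^+_m(\H[\H_0,\HC_0])=V(\H)$ suffices, and for the third-stage gadgets we use that $\H$ retracts to $\H$ trivially (or directly that $\mathrm{Spill}^+_m(\H[\H_0,\HC_0])=V(\H)$, which is exactly what is assumed). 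Passing to the homomorphic image $\mathcal{G}(\lambda)$, it is enough that $\G$ retracts to $\H$, which is the hypothesis.

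For the converse, suppose $\phi\in\QCSP(\H^+)$. Choose $\lambda'$ surjective onto $\H$ and $y_1,\dots,y_n$ backwards-adjacent, which cannot all coincide since the initial component is non-trivial; so the induced homomorphism $s$ (hence $s'$ on $\mathcal{H}=\H^N$) is non-trivial. Working on the diagonal: if $|s'(\H^d_0)|=1$ then, as in Proposition~\ref{prop:mainA1}, $s'(\H^d)$ would be an in-star, impossible since $\H^d$ is strongly connected; and $1<|s'(\H^d_0)|<m$ is ruled out by Lemma~\ref{lem:under-m}. Hence $s'$ maps $\H^d_0$ isomorphically onto a copy $\H'_0=i(\H^d_0)$ inside $\H$. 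Now $s'$ restricted to $\H^d$ is an endomorphism of $\H$ mapping $\H_0$ to an isomorphic copy of itself, so by Lemma~\ref{lem:7} (using that $(\H,\H_0)$ is endo-trivial) it is an automorphism $\alpha$ of $\H$. Then $\alpha^{-1}\circ s'$, viewed on the diagonal copy of $\G$, is the desired retraction of $\G$ to $\H$.

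The main obstacle, as in Proposition~\ref{prop:mainA1}, is handling the ``in-star'' pathology introduced by the pendant edge of $\Cylmplus$: Lemma~\ref{lem:claim1} fails for $\Cylmplus$, so we cannot conclude $|s'(\H^d)|=1$ from $|s'(\H^d_0)|=1$ directly. The resolution is the one used above — the non-triviality of $s'$ (forced by the third-stage gadgets through the universally quantified $z_i$ and their backwards-adjacent $y_i$) together with the strong connectivity of $\H^d$ rules the degenerate case out. The one point requiring a little care is ensuring the restriction-to-$V(\H)$ step via Corollary~\ref{cor:galois} is legitimate, i.e.\ that $V(\H)=V(\H_1)$ really is QCSP-definable in $\H^+$; this is immediate since $\H$ is the initial strongly connected component of $\H^+$.
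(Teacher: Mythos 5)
Your proof is correct and reaches the same conclusion via essentially the same mechanism, but your gadget construction is strictly heavier than the paper's. You mirror Base Case~A-I in full, including the ``third-stage'' copies of $\Cylmplus$ hung from each $d_i$. The paper's Base Case~A-II construction deliberately omits these: its $\mathcal{G}''$ is just $\mathcal{G}'$ with the pendant edges $c_i\to d_i$, and nothing more. The reason the paper can drop the third-stage gadgets is exactly the mechanism you identified at the start: in A-I those gadgets exist to enforce $\mathrm{Spill}^+_m(\H[\H'_0,i(\HC^d_0)])=V(\H)$ for the a priori unknown copy $\H'_0=s'(\H^d_0)$, so that the retraction hypothesis on well-spilled copies can be invoked. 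In A-II, endo-triviality of $(\H,\H_0)$ plus Lemma~\ref{lem:7} upgrades $s'|_{\H^d}$ directly to an automorphism once $s'(\H^d_0)$ is an isomorphic copy, with no spill verification needed at all, so the gadgets do no work in the converse direction. Your extra gadgets are harmless but not free: they also burden the forward direction, where you must show that for arbitrary $\lambda'(z_i)\in V(\H^+)$ (not merely $V(\H)$) the third-stage $\Cylmplus$ admits the required homomorphism; your justification ``$\mathrm{Spill}^+_m(\H[\H_0,\HC_0])=V(\H)$'' only covers targets in $V(\H)$, and you should additionally invoke Lemma~\ref{lem:new-trivial} to lift this to $\mathrm{Spill}^+_m(\H^+[\H_0,\HC_0])=V(\H^+)$, as the paper does in the forward direction of A-I. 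With that small repair your argument is sound; the paper's version simply uses the minimal construction that the stronger hypothesis permits.
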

\begin{proof}
Let $m$ be the size of~$|V(\H_0)|$ and $n$ be the size of $|V(\H)|$.
Let $\G$ be an instance of {\sc $\H$-Retraction}. 
We build an instance $\phi$ of $\QCSP(\H^+)$
in the following fashion. Consider all possible functions $\lambda:[n]\rightarrow V(\H)$. For some such $\lambda$, let $\mathcal{G}(\lambda)$ be the graph enriched with constants $c_1,\ldots,c_n$ where these are interpreted over some subset of $V(\H)$ according to $\lambda$ in the natural way (acting on the subscripts).

Let $\mathcal{G}=\bigotimes_{\lambda \in {V(\H)}^{[n]}} \mathcal{G}(\lambda)$. Let $\G^d$, $\H^d$ and $\H^d_0$ be the diagonal copies of $\G$, $\H$ and $\H_0$ in $\mathcal{G}$. Let $\mathcal{H}$ be the subgraph of $\mathcal{G}$ induced by $V(\H) \times \cdots \times V(\H)$. Note that the constants $c_1,\ldots,c_n$ live in $\mathcal{H}$. Now build $\mathcal{G}'$ from $\mathcal{G}$ by augmenting a new copy 
of~$\Cylmplus$ for every vertex $v \in V(\mathcal{H}) \setminus V(\H^d_0)$. Vertex~$v$ is to be identified with the vertex $x$ that is at the end of the out-edge pendant on the top copy of~$\DCm$ in~$\Cylmplus$ and the bottom copy of~$\DCm$ is to be identified with $\HC_0$ in $\H^d_0$ according to the identity function. 

Now build $\mathcal{G}''$ by adding an edge from each vertex $c_i$ to a new vertex $d_i$ (for each $i \in [n]$). 

Finally, build $\phi$ from the canonical query of $\mathcal{G}''$, where we additionally turn the vertices $d_1,\ldots,d_n$ to outermost universal variables $z_1,\ldots,z_n$. Then existentially quantify all remaining constants and vertices innermost. Finally, restrict all except the universal variables to be in $V(\H)$.

First suppose that~$\G$ retracts to~$\H$ by $r$. Let $\lambda'$ be some assignment of the universal variables $z_1,\ldots,z_n$ of $\phi$ to $\H^+$ and choose $y_1,\ldots,y_n$ backwards-adjacent to these in $\H$, mapped by $\lambda$.

To prove $\phi \in \QCSP(\H^+)$ it suffices to prove that there is a homomorphism from $\mathcal{G}'$ to $\H^+$ that extends $\lambda$ and for this it suffices to prove that there is a homomorphism $h$ from $\mathcal{G}$ to $\H$ that extends $\lambda$. Let us explain why. 
By assumption, we have $\mathrm{Spill}^+_m(\H[\H_0,\HC_0])=V(\H)$. 

Henceforth let us consider the homomorphic image of $\mathcal{G}$ that is $\mathcal{G}(\lambda)$. To prove $\phi \in \QCSP(\H^+)$ it suffices to prove that there is a homomorphism from $\G(\lambda)$ to $\H$ that extends $\lambda$. Note that it will be sufficient to prove that $\G$ retracts to $\H$. 
We are done.  

Suppose now $\phi \in \QCSP(\H^+)$. Choose some surjection for $\lambda'$ mapping $z_1,\ldots,z_n$ to $\H$. Choose some $y_1,\ldots,y_n$ backwards-adjacent to these (and therefore in $\H$) and let this be the map $\lambda$. Note that it is not possible for all $y_1,\ldots,y_n$ to be evaluated as a single vertex as $\H$ is strongly connected. Recall $N=|V(\H)^{[n]}|$. The evaluation of the existential variables that witness $\phi \in \QCSP(\H)$ induces a non-trivial homomorphism $s$ from $\mathcal{G}'$ to $\H$ which contains within it a non-trivial homomorphism $s'$ from $\mathcal{H}=\H^N$ to $\H$. Consider the diagonal copy of $\H^d_0 \subset \H^d \subset \G^d$ in $\mathcal{G}$. By abuse of notation we will also consider each of $s$ and $s'$ acting just on the diagonal. 
If $|s'(\H^d_0)|=1$, by construction of $\mathcal{G}''$ with the $\Cylmplus$, we have $s'(\H^d)$ is an in-star, but this is not possible as $\H^d$ is strongly connected. As $1<|s'(\H^d_0)|<m$ is not possible either due to Lemma~\ref{lem:under-m}, we find that $|s'(\H^d_0)|=m$, and indeed $s'$ maps $\H^d_0$ to a copy of itself in $\H$ which we will call $\H'_0=i(\H^d_0)$ for some isomorphism~$i$. 

As $(\H,\H_0)$ is endo-trivial, Lemma~\ref{lem:7} tells us that  the restriction of $s'$ to $\H^d$ 
is an automorphism of $\H^d$, which we call $\alpha$. The required retraction from $\G$ to $\H$ is now given by $\alpha^{-1} \circ s'$.
\end{proof}

It remains to generalise these base cases.

\begin{proposition}[General Case A-I]
Let $\H_{k+1}^+$ be some reflexive tournament that has an initial strongly connected component $\H_{k+1}$.
Let $\H_{0},\H_{1}, \ldots, \H_k, \H_{k+1}$
be reflexive tournaments, the first $k$ of which have Hamilton cycles $\HC_{0},\HC_{1}, \ldots, \HC_{k}$, respectively,
so that $\H_0 \subseteq H_1 \subseteq \cdots \subseteq \H_k \subseteq \H_{k+1}.$
Assume that $\H_0$, $(\H_1,\H_0)$, \ldots, $(\H_{k},\H_{k-1})$ are endo-trivial and that
\[
\begin{array}{lcl}
\mathrm{Spill}^+_{a_0}(\H_1[\H_0,\HC_{0}]) &= &V(\H_1) \\
\mathrm{Spill}^+_{a_1}(\H_2[\H_1,\HC_{1}]) &= &V(\H_2) \\
\hspace*{3mm} \vdots &\vdots &\hspace*{3mm}\vdots \\
\mathrm{Spill}^+_{a_{k-1}}(\H_{k}[\H_{k-1},\HC_{k-1}]) &= &V(\H_k).\\
\end{array}
\]
Moreover, assume that $\H_{k+1}$ retracts to $\H_k$ and also to every isomorphic copy $\H'_{k}=i(\H_k)$ of $\H_k$ in $\H_{k+1}$ with 
$\mathrm{Spill}^+_{a_k}(\H_{k+1}[\H'_k,i(\HC_{k})]) = V(\H_{k+1})$.
Then $\H_k$-{\sc Retraction} can be polynomially reduced to $\QCSP(\H^+_{k+1})$.
\label{prop:main-generala-1}
\end{proposition}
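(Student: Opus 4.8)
The plan is to fuse the stacked-gadget reduction from the proof of Proposition~\ref{prop:main-general1} with the ``third stage'' device from the proof of Proposition~\ref{prop:mainA1}: everywhere one replaces $\mathrm{Cyl}^*$ by $\mathrm{Cyl}^{*+}$ and $\mathrm{Spill}$ by $\mathrm{Spill}^+$, and one uses that the initial component $\H_{k+1}$ is strongly connected and non-trivial in place of surjectivity of the inner homomorphism.

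Concretely, set $a_{k+1}=|V(\H_{k+1})|=n$ and let $a_k,\dots,a_0$ be the sizes of $\H_k,\dots,\H_0$. Given an instance $\G$ of \textsc{$\H_k$-Retraction}, first glue a copy of $\H_{k+1}$ to $\G$ along their common induced copy of $\H_k$ to obtain $\G'$, and form $\mathcal{G}'=\bigotimes_{\lambda\in V(\H_{k+1})^{[n]}}\mathcal{G}'(\lambda)$ exactly as in Proposition~\ref{prop:main-general1}, with diagonal copies $\G'^d\supset\H^d_{k+1}\supset\H^d_k\supset\cdots\supset\H^d_0$ and with $\mathcal{H}_{k+1}$ the substructure induced by $V(\H_{k+1})^N$ (containing the constants $c_1,\dots,c_n$). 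Build $\mathcal{G}''$ by the \emph{second stage}: for each $v\in V(\mathcal{H}_{k+1})\setminus V(\H^d_k)$ attach a $\mathrm{Cyl}^{*+}_{a_k}$ with bottom cycle $\HC^d_k$ and pendant $x$ identified with $v$, and for each $i\in[k]$ and $v\in V(\H^d_i)\setminus V(\H^d_{i-1})$ attach a $\mathrm{Cyl}^{*+}_{a_{i-1}}$ with bottom cycle $\HC^d_{i-1}$ and pendant $x$ identified with $v$. Build $\mathcal{G}'''$ by the \emph{third stage} as in Proposition~\ref{prop:mainA1}: for each $i\in[n]$ add a fresh vertex $d_i$ with an edge $c_i\to d_i$ and a $\mathrm{Cyl}^{*+}_{a_k}$ with bottom cycle $\HC^d_k$ and pendant $x$ identified with $d_i$. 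Let $\phi$ be the canonical query of $\mathcal{G}'''$ with $d_1,\dots,d_n$ turned into outermost universal variables $z_1,\dots,z_n$, everything else existential and innermost, and all non-universal variables restricted to $V(\H_{k+1})$ by Corollary~\ref{cor:galois}.

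For ``$\G$ retracts to $\H_k$'' $\Rightarrow$ ``$\phi\in\QCSP(\H^+_{k+1})$'': given any $\lambda'$ sending $z_1,\dots,z_n$ into $\H^+_{k+1}$, pick $\lambda$ so that each $c_i$ is an in-neighbour in $\H_{k+1}$ of the value of $z_i$ (possible since $\H_{k+1}$ is strongly connected and non-trivial, and every vertex outside $\H_{k+1}$ is forward-adjacent from all of $\H_{k+1}$). A homomorphism $\G'(\lambda)\to\H_{k+1}$ extending $\lambda$ exists because $\G'$ retracts to $\H_{k+1}$ (using that $\H_{k+1}$ retracts to $\H_k$); it extends over the second-stage gadgets using the hypothesised chain of $\mathrm{Spill}^+$ equalities together with $\mathrm{Spill}^+_{a_k}(\H_{k+1}[\H_k,\HC_k])=V(\H_{k+1})$ (from Lemma~\ref{lem:spillage} and $\H_{k+1}\to\H_k$), and over the third-stage gadgets by Lemma~\ref{lem:new-trivial}. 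For the converse, a witnessing evaluation induces maps $s$ on $\mathcal{G}'''$ and $s'$ on $\mathcal{H}_{k+1}=\H^N_{k+1}$; choosing $\lambda'$ surjective onto $\H_{k+1}$ (possible as above) and using the edges $c_i\to d_i$ together with $\H_{k+1}$ being the initial component forces $s'$ to be non-trivial. As in Proposition~\ref{prop:mainA1}, $|s'(\H^d_0)|=1$ propagates up the chain of second-stage $\mathrm{Cyl}^{*+}$ gadgets (using Lemma~\ref{lem:claim1} for the cycles and that a reflexive tournament has no double edges) to make $s'(\mathcal{H}_{k+1})$ an in-star, contradicting strong connectedness of $\H_{k+1}$, while $1<|s'(\H^d_0)|<|V(\H^d_0)|$ is excluded by Lemma~\ref{lem:under-m}. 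Hence $s'$ maps $\H^d_0$ isomorphically to $\H'_0=i(\H^d_0)$; iterating Lemma~\ref{lem:8} with the assumed endo-trivialities, $s'$ maps $\H^d_k$ by $i$ onto a copy $\H'_k$ of itself in $\H_{k+1}$. The third-stage $\mathrm{Cyl}^{*+}_{a_k}$ gadgets, with $\lambda'$ surjective, then enforce $\mathrm{Spill}^+_{a_k}(\H_{k+1}[\H'_k,i(\HC^d_k)])=V(\H_{k+1})$ exactly as in Proposition~\ref{prop:mainA1}; so by hypothesis there is a retraction $r$ of $\H_{k+1}$ onto $\H'_k$, and $i^{-1}\circ r\circ s'$ retracts $\G$ onto $\H_k$.

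The step I expect to be the main obstacle is the $|s'(\H^d_0)|=1$ argument in the converse direction: one must check that the collapse of $\HC^d_0$ is forced to propagate level by level through each block of second-stage $\mathrm{Cyl}^{*+}_{a_{i-1}}$ gadgets, so that $s'(\H^d_i)$ is an in-star with a common centre for every $i$ and finally $s'(\mathcal{H}_{k+1})$ is an in-star — and, on the positive side, that Lemma~\ref{lem:8} really can be applied iteratively along the whole chain $\H^d_0\subset\cdots\subset\H^d_k$, just as in the proof of Proposition~\ref{prop:main-general1}. The rest is a routine amalgamation of Propositions~\ref{prop:mainA1} and~\ref{prop:mainA2} with the general-case bookkeeping of Propositions~\ref{prop:main-general1} and~\ref{prop:main-general2}.
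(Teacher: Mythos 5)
Your proposal is correct and follows essentially the same route as the paper: the same glue–product–diagonal construction, the same second-stage chain of $\Cylplus$ gadgets together with the third-stage $\Cylplus$ gadgets pendant to the universal variables $d_i$, and the same converse argument (non-trivial $s'$ via surjective $\lambda'$ and strong connectedness of $\H_{k+1}$, exclusion of $|s'(\H^d_0)|=1$ by the in-star/strongly-connected contradiction iterated up the chain, Lemma~\ref{lem:under-m}, Lemma~\ref{lem:8}, and the third stage enforcing the $\mathrm{Spill}^+$ condition on $\H'_k$). Your explicit appeal to Lemma~\ref{lem:new-trivial} for extending over the third-stage gadgets is a slight clarification of what the paper phrases more tersely as needing ``the stronger statement,'' but it is the same mechanism.
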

\begin{proof}
Let $n=a_{k+1}=|V(\H_{k+1})|$ and let $a_k, \ldots, a_0$ be the cardinalities of $|V(\H_k)|,\ldots,|V(\H_0)|$, respectively.
Let $\G$ be an instance of {\sc $\H_k$-Retraction}.
We will build an instance $\phi$ of $\QCSP(\H^+_{k+1})$ in the following fashion. First, take a copy of $\H_{k+1}$ together with $\G$ and build $\G'$ by identifying these on the copy of $\H_{k}$ that they both possess as 
an induced
subgraph.

Consider all possible functions $\lambda:[n]\rightarrow V(\H_{k+1})$. For some such $\lambda$, let $\mathcal{G'}(\lambda)$ be the graph enriched with constants $c_1,\ldots,c_n$ where these are interpreted over some subset of $V(\H_{k+1})$ according to $\lambda$ in the natural way (acting on the subscripts).

Let $\mathcal{G}'=\bigotimes_{\lambda \in {V(\H_{k+1})}^{[n]}} \mathcal{G}'(\lambda)$. Let $\G'^d$, $\H_{k+1}^d$ and $\H^d_{k}$ etc. be the diagonal copies of $\G'$, $\H_{k+1}$ and $\H_{k}$ in $\mathcal{G}'$. Let $\mathcal{H}_{k+1}$ be the subgraph of $\mathcal{G}'$ induced by $V(\H_{k+1}) \times \cdots \times V(\H_{k+1})$. Note that the constants $c_1,\ldots,c_n$ live in $\mathcal{H}_{k+1}$. 

Now build $\mathcal{G}''$ from $\mathcal{G}'$ by augmenting a new copy 
of~$\Cylplus_{a_k}$ for every vertex $v \in V(\mathcal{H}_{k+1}) \setminus V(\H^d_{k})$. Vertex~$v$ is to be identified with the vertex $x$ that is at the end of the out-edge pendant on the top copy of~$\DC_{a_k}$ in~$\Cylplus_{a_k}$ and the bottom copy of~$\DC_{a_k}$ is to be identified with $\HC_k$ in $\H^d_k$ according to the identity function.  Call these \emph{the $\Cylplus_{a_k}$  of the second stage}. Then, for each $i \in [k]$, and $v \in V(\H^d_{i}) \setminus V(\H^d_{i-1})$, add a copy of $\Cylplus_{a_{i-1}}$, where $v$ is identified with the vertex $x$ that is at the end of the out-edge pendant on the top copy of $\DC^*_{a_{i-1}}$ in $\Cylplus_{a_{i-1}}$ and the bottom copy of $\DC^*_{i-1}$ is to be identified with $\H_{i-1}$ according to the identity map of $\DC^*_{a_{i-1}}$ to $\HC_{i-1}$.

Now build $\mathcal{G}'''$ by adding an edge from each vertex $c_i$ to a new vertex $d_i$ (for each $i \in [n]$). Now add a copy 
of~$\Cylplus_{a_k}$ for every vertex $v \in \{d_1,\ldots,d_n\}$. Vertex~$v$ is to be identified with the vertex $x$ that is at the end of the out-edge pendant on the top copy of~$\DC_{a_k}$ in~$\Cylplus_{a_k}$  and the bottom copy of~$\DC_{a_k}$ is to be identified with $\HC_k$ in $\H^d_k$ according to the identity function.  Call these \emph{the $\Cylplus_{a_k}$  of the third stage}.

Finally, build $\phi$ from the canonical query of $\mathcal{G}'''$, where we additionally turn the vertices $d_1,\ldots,d_n$ to outermost universal variables $z_1,\ldots,z_n$. Then existentially quantify all remaining constants and vertices innermost. Finally, restrict all except the universal variables to be in $V(\H)$.

First suppose that~$\G$ retracts to~$\H_k$ by $r$. Let $\lambda'$ be some assignment of the universal variables $z_1,\ldots,z_n$ of $\phi$ to $\H_{k+1}^+$ and choose $y_1,\ldots,y_n$ backwards-adjacent to these in $\H_{k+1}$, mapped by $\lambda$. To prove $\phi \in \QCSP(\H_{k+1}^+)$ it suffices to prove that there is a homomorphism from $\mathcal{G}''$ to $\H^+_{k+1}$ that extends $\lambda$ and for this it suffices to prove that there is a homomorphism $h$ from $\mathcal{G}'$ that extends $\lambda$. Let us explain why. Because $\H_{k+1}$ retracts to $\H_k$, we have $\mathrm{Spill}_{a_{k}}(\H_{k+1}[\H_k,\HC_k])=V(\H_{k+1})$ due to Lemma~\ref{lem:spillage} which implies the weaker $\mathrm{Spill}^+_{a_{k}}(\H_{k+1}[\H_k,\HC_k])=V(\H_{k+1})$. For the $\Cylplus_{a_{k}}$ of the second stage, the weaker statement suffices, but for the $\Cylplus_{a_{k}}$ of the third stage, the stronger statement is needed. We continue mapping now the various copies of $\Cylplus_{a_{i-1}}$ in $\G''$ in any suitable fashion, which will always exist due to our assumptions.

Henceforth let us consider the homomorphic image of $\mathcal{G}'$ that is $\mathcal{G}'(\lambda)$. To prove $\phi \in \QCSP(\H^+_{k+1})$ it suffices to prove that there is a homomorphism from $\G'(\lambda)$ to $\H_{k+1}$ that extends $\lambda$. Note that it will be sufficient to prove that $\G'$ retracts to $\H_{k+1}$. Let $h$ be the natural retraction from~$\G'$ to~$\H_{k+1}$ that extends the known retraction $r$ from $\G$ to $\H_k$. We are done.  

Suppose now $\phi \in \QCSP(\H^+_{k+1})$. Choose some surjection for $\lambda$, the assignment of the universal variables of $\phi$ to $\H_{k+1}$. Choose some $y_1,\ldots,y_n$ backwards-adjacent to these (and therefore in $\H_{k+1}$) and let this be the map $\lambda$. Note that it is not possible for all $y_1,\ldots,y_n$ to be evaluated as a single vertex as $\H_{k+1}$ is strongly connected. Let $N=|V(\H_{k+1})^{[n]}|$. The evaluation of the existential variables that witness $\phi \in \QCSP(\H^+_{k+1})$ induces a non-trivial homomorphism $s$ from $\mathcal{G}'$ to $\H_{k+1}$ which contains within it a non-trivial homomorphism $s'$ from $\mathcal{H}=\H_{k+1}^N$ to $\H_{k+1}$. Consider the diagonal copy of $\H^d_0 \subset \cdots \subset \H^d_k \subset \H^d_{k+1} \subset G'^d$ in $\mathcal{G}'$. By abuse of notation we will also consider each of $s$ and $s'$ acting just on the diagonal.

If $|s'(\H^d_0)|=1$, by construction of $\mathcal{G}''$, we have that $s'(\H^d_1)$ is either an in-star or a loop, but the former is not possible as $\H^d_1$ is strongly connected. Iterating this argument we find that $|s'(\H^d_{k+1})|=1$, but this would mean $s'$ is uniformly mapping $\mathcal{H}_{k+1}$ to one vertex, which is impossible as $s'$ is non-trivial. As $1<|s'(\H^d_0)|<m$ is not possible either due to Lemma~\ref{lem:under-m}, we find that $|s'(\H^d_0)|=m$, and indeed $s'$ maps $\H^d_0$ to a copy of itself in $\H$ which we will call $\H'_0=i(\H^d_0)$ for some isomorphism~$i$. 

We now apply Lemma~\ref{lem:8} as well as our assumed endo-trivialities to derive that $s'$ in fact maps $\H^d_k$ by the isomorphism $i$ to a copy of itself in $\H_{k+1}$ which we will call $\H'_{k}$. 

We claim that $\mathrm{Spill}^+_{a_k}(\H_{k+1}[\H'_{k+1},i(\HC^d_{a_{k}})]) = V(\H_{k+1})$. Since $\lambda'$ is surjective on $\H_{k+1}^+$, this is enforced explicitly by the $\Cylplus_{a_{k}}$ of the third stage. Thus, there exists a retraction $r$ from $\H_{k+1}$ to $\H'_k$.  Now $i^{-1} \circ r \circ s'$ gives the desired retraction of $\G$ to $\H_k$.   
\end{proof}

\begin{proposition}[General Case A-II]
\label{prop:main-general-a2}
Let $\H_{k+1}^+$ be some reflexive tournament that has an initial strongly connected component $\H_{k+1}$ that is non-trivial.
Let  $\H_{0},\H_{1}, \ldots, \H_k, \H_{k+1}$ be reflexive tournaments, the first $k+1$ of which have Hamilton cycles $\HC_{0},\HC_{1}, \ldots, \HC_{k}$, respectively,
so that $\H_0 \subseteq H_1 \subseteq \cdots \subseteq \H_k \subseteq \H_{k+1}$.
Suppose that $\H_0$, $(\H_1,\H_0)$, \ldots, $(\H_{k},\H_{k-1}), (\H_{k+1},\H_{k})$ are endo-trivial and that
\[
\begin{array}{lcl}
\mathrm{Spill}^+_{a_0}(\H_1[\H_0,\HC_{0}]) &= &V(\H_1) \\
\mathrm{Spill}^+_{a_1}(\H_2[\H_1,\HC_{1}]) &= &V(\H_2) \\
\hspace*{3mm} \vdots &\vdots &\hspace*{3mm}\vdots \\
\mathrm{Spill}^+_{a_{k-1}}(\H_{k}[\H_{k-1},\HC_{k-1}]) &= &V(\H_k)\\
\mathrm{Spill}^+_{a_{k}}(\H_{k+1}[\H_{k},\HC_{k}]) &= &V(\H_{k+1})
\end{array}
\]
Then {\sc $\H_{k+1}$-Retraction} can be polynomially reduced to $\QCSP(\H^+_{k+1})$.
\end{proposition}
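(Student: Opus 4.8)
The plan is to transcribe the proof of Proposition~\ref{prop:main-general2} (General Case~II) while making the same three adjustments to the initial-component setting that Proposition~\ref{prop:mainA2} (Base Case~A-II) makes to Proposition~\ref{prop:main2}: replace every occurrence of $\Cyl$ by $\Cylplus$; add an edge from each constant $c_i$ to a fresh vertex $d_i$, turn the $d_i$ into the outermost universal variables $z_i$, and restrict all remaining variables to $V(\H_{k+1})$ using Corollary~\ref{cor:galois}; and argue throughout with \emph{non-trivial} (non-constant) rather than surjective homomorphisms. Concretely, set $n=a_{k+1}=|V(\H_{k+1})|$ and let $a_k,\dots,a_0$ be the cardinalities of $|V(\H_k)|,\dots,|V(\H_0)|$. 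Given an instance $\G$ of {\sc $\H_{k+1}$-Retraction}, I would form $\mathcal{G}=\bigotimes_{\lambda\in V(\H_{k+1})^{[n]}}\mathcal{G}(\lambda)$, take the diagonal copies $\H^d_0\subset\cdots\subset\H^d_{k+1}\subset\G^d$, and let $\mathcal{H}_{k+1}$ be the subgraph induced by $V(\H_{k+1})^N$. Then I would build $\mathcal{G}'$ by hanging off every $v\in V(\mathcal{H}_{k+1})\setminus V(\H^d_k)$ a copy of $\Cylplus_{a_k}$ whose pendant vertex $x$ is identified with $v$ and whose bottom cycle is $\HC_k$ in $\H^d_k$, and then, for each $i\in[k]$ and each $v\in V(\H^d_i)\setminus V(\H^d_{i-1})$, a copy of $\Cylplus_{a_{i-1}}$ with pendant $x$ identified with $v$ and bottom cycle $\HC^d_{i-1}$. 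Since the whole chain $\H_0,(\H_1,\H_0),\dots,(\H_{k+1},\H_k)$ is endo-trivial, the witnessing evaluation will produce an automorphism of $\H_{k+1}$ directly, so --- unlike General Case~A-I --- there is no ``third stage'' of gadgets attached to the $d_i$. Finally I would build $\mathcal{G}''$ from $\mathcal{G}'$ by adding an edge $c_i\to d_i$ for each $i\in[n]$, form $\phi$ from the canonical query of $\mathcal{G}''$ with $d_1,\dots,d_n$ turned into outermost universal variables $z_1,\dots,z_n$ and everything else existentially quantified innermost, and restrict every non-universal variable to $V(\H_{k+1})$.

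For the forward direction ($\G$ retracts to $\H_{k+1}$ $\Rightarrow$ $\phi\in\QCSP(\H^+_{k+1})$): given an assignment $\lambda'$ of $z_1,\dots,z_n$ into $\H^+_{k+1}$, I would pick $y_1,\dots,y_n\in V(\H_{k+1})$ with $(y_i,\lambda'(z_i))\in E$ --- possible because $\H_{k+1}$ is strongly connected and every later-component vertex is forward of all of $\H_{k+1}$ --- and call the map $i\mapsto y_i$ by $\lambda$. It then suffices to extend the given retraction $\G\to\H_{k+1}$ to a homomorphism $\mathcal{G}''\to\H^+_{k+1}$ with $c_i\mapsto y_i$ and $d_i\mapsto\lambda'(z_i)$. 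As in Proposition~\ref{prop:main-general2} this homomorphism maps $\mathcal{G}$ into $\H_{k+1}$, is the identity on each $\H^d_i$, and extends over the cylinder gadgets one at a time: a bottom cycle maps identically and a pendant maps to its image vertex, which is legal exactly because the hypothesised equalities $\mathrm{Spill}^+_{a_{i-1}}(\H_i[\H_{i-1},\HC_{i-1}])=V(\H_i)$ hold at every level --- this is precisely where $\Cylplus$, rather than $\Cyl$, is needed, since the hypotheses supply $\mathrm{Spill}^+$ and not $\mathrm{Spill}$. Distinct gadgets share only consistently mapped bottom cycles, so the extensions do not interfere.

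For the converse, suppose $\phi\in\QCSP(\H^+_{k+1})$ and choose $\lambda'$ surjective onto $\H_{k+1}$. A witnessing homomorphism $s\colon\mathcal{G}''\to\H^+_{k+1}$ satisfies $s(d_i)=\lambda'(z_i)$; set $y_i:=s(c_i)$ and let $s'$ be the restriction of $s$ to $\mathcal{H}_{k+1}$. Each $y_i$ lies in $V(\H_{k+1})$ with $(y_i,\lambda'(z_i))\in E$, and since $\lambda'$ is onto $\H_{k+1}$ while a strongly connected tournament has no dominant vertex, the $y_i$ are not all equal, so the polymorphism $s'\colon\mathcal{H}_{k+1}=\H_{k+1}^N\to\H_{k+1}$ is non-constant. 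Working on the diagonal: if $|s'(\H^d_0)|=1=\{z\}$, then by Lemma~\ref{lem:claim1} every $\Cylplus_{a_0}$ over $\H^d_1\setminus\H^d_0$ collapses its cylinder to $z$, so every such pendant lands in $\{w:(z,w)\in E\}$ and $s'(\H^d_1)\subseteq\{z\}\cup\{w:(z,w)\in E\}$; as $\H^d_1$ is strongly connected and $z\in s'(\H^d_1)$ while $z$ has no in-edge in that set other than its loop, $s'(\H^d_1)=\{z\}$, and iterating up the chain gives $s'(\H^d_{k+1})=\{z\}$, whence $s'$ is uniformly $z$ by Lemma~\ref{lem:uniformly-z}, contradicting non-constancy. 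The possibility $1<|s'(\H^d_0)|<|V(\H_0)|$ is excluded by Lemma~\ref{lem:under-m}, so $s'$ maps $\H^d_0$ isomorphically onto some $\H'_0=i(\H^d_0)$; Lemma~\ref{lem:8} together with the endo-trivialities $(\H_1,\H_0),\dots,(\H_k,\H_{k-1})$ then forces $s'$ to map $\H^d_k$ by $i$ onto a copy $\H'_k$ of $\H_k$, and Lemma~\ref{lem:7} applied to the endo-trivial pair $(\H_{k+1},\H_k)$ shows $s'\restrict\H^d_{k+1}$ is an automorphism $\alpha$ of $\H_{k+1}$. Reading $s$ off the diagonal copy $\G^d\cong\G$, the map $\alpha^{-1}\circ s$ is the required retraction of $\G$ to $\H_{k+1}$.

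I expect the most delicate point to be the collapse argument in the converse: with the escaping pendant edge of $\Cylmplus$ replacing the closed gadget $\Cylm$, one must verify that ``$|s'(\H^d_{i-1})|=1\Rightarrow|s'(\H^d_i)|=1$'' really propagates through every level of the chain at once --- in particular that $s'$ cannot collapse an initial segment while only partially collapsing some later $\H^d_i$ --- and that the pendant edges do not spoil the independence, used in the forward direction, of the cylinders attached to distinct vertices of $\mathcal{H}_{k+1}$. Everything else is a routine amalgam of the arguments of Propositions~\ref{prop:main-general2}, \ref{prop:mainA2}, and \ref{prop:main-generala-1}.
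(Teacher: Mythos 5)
Your proposal is correct and follows essentially the same route as the paper: same gadget $\Cylplus$, same $c_i\to d_i$ pendant edges with the $d_i$ promoted to universals, same restriction of the remaining variables to $V(\H_{k+1})$ via Corollary~\ref{cor:galois}, same non-triviality (rather than surjectivity) argument for $s'$, the same chain collapse via Lemma~\ref{lem:claim1} applied to the cylinder part of $\Cylplus$ followed by strong connectivity of each $\H^d_i$, and the same final application of Lemmas~\ref{lem:under-m}, \ref{lem:8} and \ref{lem:7} to extract an automorphism $\alpha$ of $\H_{k+1}$ with $\alpha^{-1}\circ s$ the desired retraction. The one deviation is that you attach $\Cylplus_{a_k}$ to every vertex of $V(\mathcal{H}_{k+1})\setminus V(\H^d_k)$ whereas the paper attaches it only to the diagonal $V(\H^d_{k+1})\setminus V(\H^d_k)$; this is a harmless over-constraint (the spill hypothesis $\mathrm{Spill}^+_{a_k}(\H_{k+1}[\H_k,\HC_k])=V(\H_{k+1})$ licenses extending over all those extra gadgets in the forward direction, and the converse argument uses only the diagonal chain anyway), so both versions of the construction go through. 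Your two flagged worries --- the collapse propagating cleanly and the independence of the cylinders --- are in fact handled by exactly the arguments you give: the image of each strongly connected $\H^d_i$ is strongly connected and a single out-dominating vertex forces it to be a singleton, and distinct gadgets intersect only in the shared bottom cycle, which is mapped consistently by the identity.
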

\begin{proof}
Let $n=a_{k+1}=|V(\H_{k+1})|$ and let $a_k, \ldots, a_0$ be the cardinalities of $|V(\H_k)|,\ldots,|V(\H_0|$, respectively.
Let $\G$ be an instance of {\sc $\H_{k+1}$-Retraction}.
We build an instance  $\phi$ of $\QCSP(\H^+_{k+1})$
in the following fashion. Consider all possible functions $\lambda:[n]\rightarrow V(\H_{k+1})$. For some such $\lambda$, let $\mathcal{G}(\lambda)$ be the graph enriched with constants $c_1,\ldots,c_n$ where these are interpreted over some subset of $V(\H_{k+1})$ according to $\lambda$ in the natural way (acting on the subscripts).

Let $\mathcal{G}=\bigotimes_{\lambda \in {V(\H_{k+1})}^{[n]}} \mathcal{G}(\lambda)$. Let $\G^d$, $\H^d_{k+1}$,  $\H^d_k$, \ldots, $\H^d_0$ be the diagonal copies of $\G$, $\H_{k+1},\H_k,\ldots,\H_0$ in $\mathcal{G}$. Let $\mathcal{H}_{k+1}$ be the subgraph of $\mathcal{G}$ induced by $V(\H_{k+1}) \times \cdots \times V(\H_{k+1})$. Note that the constants $c_1,\ldots,c_n$ live in $\mathcal{H}_{k+1}$. 


Now build $\mathcal{G}'$ from $\mathcal{G}$ by the following procedure.
For each $i \in [k+1]$, and $v \in V(\H^d_{i}) \setminus V(\H^d_{i-1})$, add a copy of $\Cylplus_{a_{i-1}}$, where $v$ is identified with the vertex $x$ that is at the end of the out-edge pendant on the top copy of $\DC^*_{a_{i-1}}$ in $\Cylplus_{a_{i-1}}$ and the bottom copy of $\DC^*_{i-1}$ is to be identified with $\H_{i-1}$ according to the identity map of $\DC^*_{a_{i-1}}$ to $\HC_{i-1}$.

Now build $\mathcal{G}''$ by adding an edge from each vertex $c_i$ to a new vertex $d_i$ (for each $i \in [n]$). 

Finally, build $\phi$ from the canonical query of $\mathcal{G}''$, where we additionally turn the vertices $d_1,\ldots,d_n$ to outermost universal variables $z_1,\ldots,z_n$. Then existentially quantify all remaining constants and vertices innermost. Finally, restrict all except the universal variables to be in $V(\H_{k+1})$.

First suppose that~$\G$ retracts to~$\H_{k+1}$ by $r$. Let $\lambda'$ be some assignment of the universal variables $z_1,\ldots,z_n$ of $\phi$ to $\H_{k+1}^+$ and choose $y_1,\ldots,y_n$ backwards-adjacent to these in $\H_{k+1}$, mapped by $\lambda$. To prove $\phi \in \QCSP(\H^+_{k+1})$ it suffices to prove that there is a homomorphism from $\mathcal{G}'$ to $\H^+_{k+1}$ that extends $\lambda$ and for this it suffices to prove that there is a homomorphism $h$ from $\mathcal{G}$ that extends $\lambda$. The extension of the latter to the former will always be possible due to the spill assumptions.

Henceforth let us consider the homomorphic image of $\mathcal{G}$ that is $\mathcal{G}(\lambda)$. To prove $\phi \in \QCSP(\H^+_{k+1})$ it suffices to prove that there is a homomorphism from $\mathcal{G}(\lambda)$ to $\H_{k+1}$ that extends $\lambda$. Note that it will be sufficient to prove that $\G$ retracts to $\H_{k+1}$. Well this was our original assumption so we are done.  

Suppose now $\phi \in \QCSP(\H^+_{k+1})$.  Choose some surjection for $\lambda'$ mapping $z_1,\ldots,z_n$ to $\H_{k+1}$. Choose some $y_1,\ldots,y_n$ backwards-adjacent to these (and therefore in $\H_{k+1}$) and let this be the map $\lambda$. Note that it is not possible for all $y_1,\ldots,y_n$ to be evaluated as a single vertex as $\H_{k+1}$ is strongly connected. Recall $N=|V(\H)^{[n]}|$.  The evaluation of the existential variables that witness $\phi \in \QCSP(\H^+_{k+1})$ induces a non-trivial homomorphism $s$ from $\mathcal{G}$ to $\H_{k+1}$ which contains within it a non-trivial homomorphism $s'$ from $\mathcal{H}_{k+1}=\H_{k+1}^N$ to $\H_{k+1}$. Consider the diagonal copy of $\H^d_0 \subset \H^d_1 \subset \cdots H^d_{k+1}$ in $\mathcal{G}$. By abuse of notation we will also consider each of $s$ and $s'$ acting just on the diagonal.

If $|s'(\H^d_0)|=1$ we deduce that $s'(\H^d_1)$ is either an in-star or a loop, but the former is not possible as $\H^d_1$ is strongly connected. Iterating this argument we find that $|s'(\H^d_{k+1})|=1$, but this would mean $s'$ is uniformly mapping to one vertex, which is impossible as $s'$ is non-trivial.
As $1<|s'(\H^d_0)|<m$ is not possible either due to Lemma~\ref{lem:under-m}, we find that $|s'(\H^d_0)|=m$, and indeed $s'$ maps $\H^d_0$ to a copy of itself in $\H$ which we will call $\H'_0=i(\H^d_0)$ for some isomorphism~$i$.

We now apply Lemma~\ref{lem:8} as well as our assumed endo-trivialities to derive that $s'$ in fact maps $\H^d_k$ by the isomorphism $i$ to a copy of itself in $\H_{k+1}$, which we will call $\H'_{k}$. Now we can deduce, via Lemma~\ref{lem:7}, that $h(\H^d_{k+1})$ is an automorphism of $\H_{k+1}$, which we call $\alpha$.
The required retraction from $\G$ to 
$\H_{k+1}$ 
is now given by $\alpha^{-1} \circ s'$.
\end{proof}

The proof of the following is exactly as that for Corollary~\ref{cor:strongly-connected1} modulo $\mathrm{Spill}$ becoming $\mathrm{Spill}^+$.
\begin{corollary}
Let $\H$ be a reflexive tournament with an initial strongly connected component that is non-trivial. Then $\QCSP(\H)$ is \NP-hard.
\label{cor:initial-full}
\end{corollary}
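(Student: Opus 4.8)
The plan is to mimic, essentially verbatim, the proof of Corollary~\ref{cor:strongly-connected1}, performing three uniform substitutions: replace the gadget $\Cylm$ by $\Cylmplus$, replace the invariant $\mathrm{Spill}_m$ by $\mathrm{Spill}^+_m$, and replace Propositions~\ref{prop:main1}, \ref{prop:main2}, \ref{prop:main-general1}, \ref{prop:main-general2} by their counterparts Propositions~\ref{prop:mainA1}, \ref{prop:mainA2}, \ref{prop:main-generala-1}, \ref{prop:main-general-a2}. Write $\H$ for the non-trivial initial strongly connected component of the given $\H^+$. Since $\H$ is a strongly connected reflexive tournament on more than one vertex it is not transitive, so {\sc $\H$-Retraction}, and likewise {\sc $\H_0$-Retraction} for any (strongly connected) endo-trivial subtournament $\H_0$ of $\H$, is \NP-complete by Section~4.5 of \cite{STACStoTOCT}. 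Hence in every case it will suffice to produce a polynomial reduction of one of these retraction problems to $\QCSP(\H^+)$.

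First I would dispose of the case that $\H$ itself is endo-trivial: here I would apply Proposition~\ref{prop:mainA1} with $\H_0:=\H$ and a Hamilton cycle $\HC_0$ of $\H$ (which exists by Lemma~\ref{l-Ham}); the only isomorphic copy of $\H$ inside $\H$ is $\H$, to which $\H$ retracts via the identity, so the hypothesis of that proposition holds and {\sc $\H$-Retraction} reduces to $\QCSP(\H^+)$. If $\H$ is not endo-trivial, then by Lemma~\ref{lem:endo-retraction-trivial} it is not retract-trivial, so it admits a non-trivial retraction to a proper subtournament; replacing that subtournament by one of its retracts if necessary (a process that strictly shrinks the structure, hence terminates) I may assume this subtournament $\H_0$ is endo-trivial.

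Then I would branch exactly as in Corollary~\ref{cor:strongly-connected1}: if $\H$ retracts to every isomorphic copy $\H'_0=i(\H_0)$ in $\H$, except possibly those with $\mathrm{Spill}^+_m(\H[\H'_0,i(\HC_0)])\neq V(\H)$, invoke Proposition~\ref{prop:mainA1}; otherwise fix a copy $\H'_0$ with $\mathrm{Spill}^+_m(\H[\H'_0,i(\HC_0)])=V(\H)$ to which $\H$ does not retract, and if $(\H,\H'_0)$ is endo-trivial invoke Proposition~\ref{prop:mainA2}. If instead $(\H,\H'_0)$ is not endo-trivial, pass to a strictly larger endo-trivial sandwich $\H'_0\subset\H_1\subset\H$ with $(\H_1,\H'_0)$ endo-trivial and $\H$ retracting to $\H_1$, and repeat the retracts-to-all-copies dichotomy: either every relevant copy of $\H_1$ is a retract and Proposition~\ref{prop:main-generala-1} applies, or a copy $\H'_1$ with full $\mathrm{Spill}^+_m$ fails to be a retract and one tests whether $(\H,\H'_1)$ is endo-trivial (Proposition~\ref{prop:main-general-a2}) or iterates. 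The iteration terminates because the subtournaments in play strictly grow while staying inside $\H$.

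The main obstacle — the one point where one must re-examine rather than copy the arguments — is that Lemma~\ref{lem:claim1} does not survive the replacement of $\Cylm$ by $\Cylmplus$, so the inference ``$|s'(\H^d_0)|=1$ forces $|s'(\H^d)|=1$'' used in the strongly connected case is unavailable. I expect this to be harmless: in the A-Propositions it is already superseded by the weaker but sufficient observation that $|s'(\H^d_0)|=1$ would make $s'(\H^d)$ an in-star, impossible since $\H^d$ is strongly connected; and the complementary facts needed — that fullness of $\mathrm{Spill}^+_m$ on $\H$ lifts to $\H^+$ (Lemma~\ref{lem:new-trivial}) and that one may legitimately restrict all non-universal variables to $V(\H)$ (Corollary~\ref{cor:galois}) — are likewise already incorporated into those propositions. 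So once the case analysis above is written down, there should be nothing left to verify beyond the termination count.
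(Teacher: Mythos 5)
Your proof is correct and takes essentially the same route as the paper, which states only that the argument ``is exactly as that for Corollary~\ref{cor:strongly-connected1} modulo $\mathrm{Spill}$ becoming $\mathrm{Spill}^+$''; you spell out that substitution explicitly, invoking Propositions~\ref{prop:mainA1}, \ref{prop:mainA2}, \ref{prop:main-generala-1}, \ref{prop:main-general-a2} in place of Propositions~\ref{prop:main1}, \ref{prop:main2}, \ref{prop:main-general1}, \ref{prop:main-general2}, and your observations about the in-star replacement for Lemma~\ref{lem:claim1}, Lemma~\ref{lem:new-trivial}, and Corollary~\ref{cor:galois} being already baked into the A-propositions are all accurate. (Your termination phrasing, ``the subtournaments strictly grow while staying inside $\H$,'' is actually clearer than the paper's ``structures are getting strictly smaller,'' which refers to the shrinking gap.)
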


\section{The Proof of the NL Cases of the Dichotomy}\label{s-nl}
 
A particular role in the tractable part of our dichotomy will be played by $\TT_2$, the reflexive transitive $2$-tournament, which has vertex set $\{0,1\}$ and edge set $\{(0,0),(0,1),(1,1)\}$.

\begin{lemma}
\label{lem:sur-hom}
Let $\H=\H_1\Rightarrow \cdots \Rightarrow \H_n$ be a reflexive tournament on $m+2$ vertices with $V(\H_1)=\{s\}$ and $V(\H_n)=\{t\}$. Then there exists a surjective homomorphism from $(\TT_2)^m$ to $\H$.
\end{lemma}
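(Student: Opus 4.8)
The plan is to construct the required surjective homomorphism completely explicitly, using only three features of $\H$: every vertex carries a loop (reflexivity), the single vertex $s$ of the initial component satisfies $(s,v)\in E(\H)$ for all $v\in V(\H)$, and the single vertex $t$ of the final component satisfies $(v,t)\in E(\H)$ for all $v\in V(\H)$. First I would set up the two sides. The vertex set of $(\TT_2)^m$ is $\{0,1\}^m$, with $(x_1,\dots,x_m)\to(y_1,\dots,y_m)$ an edge exactly when $x_i\le y_i$ for every $i$; for $x\in\{0,1\}^m$ write $|x|=\sum_i x_i$ for its weight, and let $e_1,\dots,e_m$ denote the weight-one vectors. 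On the target side, since the initial and final components are trivial we have $V(\H)=\{s\}\cup W\cup\{t\}$ where $W=V(\H_2)\cup\cdots\cup V(\H_{n-1})$ has exactly $m$ vertices; fix any enumeration $W=\{w_1,\dots,w_m\}$.

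Then I would define $f\colon V((\TT_2)^m)\to V(\H)$ by $f(0^m)=s$, $f(e_i)=w_i$ for each $i\in[m]$, and $f(x)=t$ for every $x$ with $|x|\ge 2$ (in particular $f(1^m)=t$, since $m\ge 2$). Surjectivity is immediate: the image is $\{s\}\cup\{w_1,\dots,w_m\}\cup\{t\}=V(\H)$. To verify that $f$ is a homomorphism I would take an arbitrary edge $x\to y$ of $(\TT_2)^m$, i.e.\ $x\le y$ coordinatewise, and split on $|x|$. If $|x|=0$ then $f(x)=s$ and $(s,f(y))\in E(\H)$ because $s$ lies in the initial component (or because of the loop at $s$ in case $y=x$). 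If $|x|\ge 2$ then $|y|\ge|x|\ge 2$, so $f(x)=f(y)=t$ and we use the loop at $t$. If $|x|=1$, say $x=e_i$, then $x\le y$ forces $y_i=1$, so either $y=e_i$, in which case $f(y)=w_i=f(x)$ and we use the loop at $w_i$, or $|y|\ge 2$, in which case $f(y)=t$ and $(w_i,t)\in E(\H)$ because $t$ lies in the final component. This covers every edge, so $f$ is the desired surjective homomorphism.

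There is no genuinely hard step: the construction never inspects the internal structure of the intermediate strongly connected components, and the possible non-transitivity of $\H$ causes no trouble precisely because any two distinct weight-one vectors $e_i,e_j$ are incomparable in $(\TT_2)^m$ and hence non-adjacent, so $f$ is never asked to respect an edge between two vertices of $W$. The only point requiring a word of care is the degenerate small-$m$ behaviour: for $m\le 1$ the vectors $0^m$, $1^m$ and the $e_i$ coincide and $2^m<m+2$ anyway, so the statement is to be read (and is applied) with $m\ge 2$, which is exactly the regime in which $0^m$, $e_1,\dots,e_m$ and the weight-$\ge 2$ vectors are pairwise distinct and partition $\{0,1\}^m$; I would either record this hypothesis explicitly at the start of the proof or note that it is forced by the cardinality constraint.
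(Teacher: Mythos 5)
Your construction is exactly the one used in the paper: send the all-zeros tuple to $s$, the $m$ weight-one tuples bijectively onto the $m$ vertices outside $\{s,t\}$, and every weight-$\ge 2$ tuple to $t$, verifying the homomorphism property by casing on the weight of the source. The only difference is your explicit remark that the construction implicitly requires $m\ge 2$ (for $m\le 1$ the vectors $0^m$, $e_i$, $1^m$ would coincide and $2^m<m+2$ anyway), a caveat the paper leaves tacit but which is a correct and worthwhile observation.
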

\begin{proof}
Build a surjective homomorphism $f$ from $(\TT_2)^m$ to $\H$ in the following fashion. Let $\overline{x}_i$ be the $m$-tuple which has $1$ in the $i$th position and $0$ in all other positions. For $i\in [m]$, let $f$ map $\overline{x}_i$ to $i$. Let $f$ map $(0,\ldots,0)$ to $s$ and everything remaining to $t$. 

By construction, $f$ is surjective. To see that $f$ is a homomorphism, 
let $((y_1,\ldots,y_m),$ $(z_1,\ldots,z_m)) \in E((\TT_2)^m)$, which is the case exactly when
 $y_i\leq z_i$ for all $i \in [m]$. Let $f(y_1,\ldots,y_m)=u$ and $f(z_1,\ldots,z_m)=v$.
First suppose that $y_1,\ldots,y_m$ are all $0$. Then $u=s$. As $s$ has an out-edge to every vertex of $\H$, we find that $(u,v)\in E(\H)$.
Now suppose that $y_1,\ldots,y_m$ contains a single~$1$. If $(y_1,\ldots,y_m)=(z_1,\ldots,z_m)$, then $u=v$. As $\H$ is reflexive, we find that $(u,v)\in \H$. If $(y_1,\ldots,y_m)\neq (z_1,\ldots,z_m)$, then $v=t$. As $t$ has an in-edge from every vertex of $\H$, we find that $(u,v)\in E(\H)$.
Finally suppose that $y_1,\ldots,y_m$ contains more than one $1$. Then $u=v=t$. As $\H$ is reflexive, we find that $(u,v)\in E(\H)$.
\end{proof}

We also need the following lemma, which follows from combining some known results.

\begin{lemma}
\label{lem:NL-base}
If $\H$ is a transitive reflexive tournament then $\QCSP(\H)$ is in \NL.
\end{lemma}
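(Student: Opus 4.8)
The plan is to reduce $\QCSP(\H)$ for a transitive reflexive tournament $\H$ to a problem on a very simple structure, namely $\TT_2$, and then use known tractability of $\QCSP(\TT_2)$. Since $\H$ is a transitive reflexive tournament, its strongly connected components are all singletons, so $\H$ is (isomorphic to) the reflexive linear order on $m$ vertices for $m = |V(\H)|$, that is, the vertex set is $\{1,\ldots,m\}$ with $(i,j) \in E(\H)$ iff $i \le j$. The key observation is that this linear order is a retract of a power of $\TT_2$: the map $(x_1,\ldots,x_{m-1}) \mapsto 1 + \sum_{i} x_i$ from $(\TT_2)^{m-1}$ to the reflexive order on $\{1,\ldots,m\}$ is a surjective homomorphism, and conversely the order embeds back into $(\TT_2)^{m-1}$ via $j \mapsto (\underbrace{1,\ldots,1}_{j-1},0,\ldots,0)$; composing gives a retraction. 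More simply, $\H$ is a core and $\TT_2 \to \H \to \TT_2$ is not quite what we want, so the cleanest route is: $\H$ is pp-interdefinable (or at least has the same $\QCSP$ complexity class, up to logspace reductions) with $\TT_2$. I would first record that $\QCSP(\TT_2)$ is in $\NL$ — this is a standard fact, since $\TT_2$ is a semilattice (in fact $\TT_2$ with the operation $\min$ or $\max$ is a meet-semilattice on a two-element chain), and $\QCSP$ over a structure preserved by a semilattice operation with the appropriate properties on a two-element domain is tractable; more concretely one can cite the known result that $\QCSP$ of the reflexive $2$-element linear order is in $\NL$.

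First I would make precise that it suffices to handle $\H = \TT_2$, or rather to show $\QCSP(\H)$ log-space reduces to $\QCSP(\TT_2)$ and vice versa. For the reduction from $\QCSP(\H)$ to $\QCSP(\TT_2)$: given a sentence $\phi = \forall x_1 \exists y_1 \cdots \Phi$, I would replace each variable $v$ by a block of $m-1$ variables $v^{(1)},\ldots,v^{(m-1)}$ (with the universal/existential quantifier type inherited, keeping the block interleaving order), and translate each atom $E(u,v)$ of $\Phi$ into the conjunction of atoms $E(u^{(i)}, v^{(i)})$ over $i \in [m-1]$. By the standard product/power argument (exactly the mechanism used throughout Section~\ref{s-tour} via the construction of $\mathcal{G}$ from the $\mathcal{G}(\lambda)$), together with the retraction $(\TT_2)^{m-1} \to \H$ and the embedding $\H \hookrightarrow (\TT_2)^{m-1}$ noted above, $\H \models \phi$ iff $\TT_2 \models \phi'$. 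The retraction handles the implication that a winning strategy over $(\TT_2)^{m-1}$ projects to one over $\H$; the embedding handles the converse, lifting a winning strategy over $\H$ to one over $(\TT_2)^{m-1}$ componentwise. One must check the universal player in the translated game is no stronger: since universal variables range over all of $\{0,1\}^{m-1}$, every universal choice in $\H$ (an element of $\{1,\ldots,m\}$) corresponds to the embedded tuple, and the extra tuples in $\{0,1\}^{m-1}$ not in the image of the embedding are still handled because the retraction sends them somewhere in $\H$ and the existential player can respond accordingly — this is where transitivity/linearity is used crucially.

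The main obstacle, and the step I would spend the most care on, is verifying the equivalence $\H \models \phi \iff \TT_2 \models \phi'$ in the direction that the universal player does not gain power under the translation — i.e.\ that a strategy for the existential player in the $\TT_2$-game yields one in the $\H$-game. This requires that the retraction $r : (\TT_2)^{m-1} \to \H$ be chosen so that it is the identity on the embedded copy of $\H$ and, combined with the fact that $\H$ is preserved by the relevant operations, the simulation goes through; it is precisely the content that the relations pp-definable over $\H$ and over $(\TT_2)^{m-1}$ (restricted to the embedded copy) coincide. An alternative, possibly cleaner, route avoiding this bookkeeping: invoke the algebraic characterisation directly. The transitive reflexive tournament on $m$ vertices is preserved by a \emph{near-unanimity} operation? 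No — it is preserved by the two-element semilattice structure pulled back, equivalently by the ternary median operation on a chain, i.e.\ a majority operation, and by the binary operations $\min$ and $\max$. Structures with a majority polymorphism have bounded strict width, and one would then cite the result (from the literature on $\QCSP$, e.g.\ the Bulatov–Chen–Dalmau-style analysis or the collapse results) that $\QCSP(\H)$ for such $\H$ is in $\NL$; combined with Lemma~\ref{lem:sur-hom} this slots directly into the surjective-homomorphism machinery used later. I would present the proof via the explicit $\TT_2$-reduction as the primary argument, since it is the most self-contained and connects immediately to the established fact that $\QCSP$ of the reflexive two-element chain lies in $\NL$, and mention the algebraic route as corroboration.
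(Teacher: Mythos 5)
Your proposal takes a genuinely different route from the paper, and the route contains one real gap and one superfluous detour. The paper proves the lemma directly: it observes that a transitive reflexive tournament $\H$ is preserved by the ternary median (majority) operation, then invokes Chen's collapsibility theorem (Theorem~5.16 of \cite{hubie-sicomp}) to reduce $\QCSP(\H)$ to a polynomial-size ensemble of $\CSP(\H)$-with-constants instances, each of which is in \NL\ by Corollary~4 of \cite{DK08} because majority implies bounded pathwidth duality. That argument is self-contained and applies uniformly to every transitive reflexive tournament, including $\TT_2$ itself.

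The gap in your argument is in the justification of the coordinatewise translation $\phi\mapsto\phi'$. You argue the equivalence $\H\models\phi\iff\TT_2\models\phi'$ by simulating strategies through the embedding $e:\H\hookrightarrow(\TT_2)^{m-1}$ and the retraction $r:(\TT_2)^{m-1}\to\H$, but the direction ``a winning $\exists$-strategy over $\H$ lifts to one over $(\TT_2)^{m-1}$'' does not follow from the retraction. Concretely, with $m=3$: for the universal choice $x=(0,1)$ we have $r(x)=2$; if the $\H$-strategy responds with $y=2$ then $e(y)=(1,0)$, and $x\not\leq e(y)$ coordinatewise, so a constraint $E(u,v)$ would be violated by the simulated play even though $2\leq 2$ holds in $\H$. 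The equivalence $\QCSP(\H)=\QCSP(\TT_2)$ \emph{is} true, but it is a consequence of Theorem~3.4 of \cite{LMCS2015} (applied from the surjective homomorphisms $(\TT_2)^{m-1}\to\H$ and $\H\to\TT_2$), which is precisely the machinery the paper invokes for Theorem~\ref{t-easy}; the naive per-instance embed/project simulation is not sound and the vague appeal to ``pp-definable relations coinciding'' does not repair it. Separately, your first claimed witness for $\QCSP(\TT_2)\in\NL$ is a semilattice polymorphism; semilattice gives membership in \PP\ for the CSP side but is not what puts the problem in \NL. The working citation is the majority (median) polymorphism combined with \cite{hubie-sicomp} and \cite{DK08}, which you do eventually mention. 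But once you are invoking that algebraic pair, it applies to $\H$ directly with no need to first descend to $\TT_2$: the reduction buys you nothing, and the paper's one-step argument is the cleaner one.
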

\begin{proof}
It is noted in \cite{STACStoTOCT} that $\H$ has the ternary median operation as a polymorphism. 
It follows from well-known results (e.g. in \cite{hubie-sicomp,DK08}) that $\QCSP(\H)$ is in \NL. \textcolor{black}{Specifically, one can apply Theorem 5.16 from \cite{hubie-sicomp} to reduce  $\QCSP(\H)$ to an ensemble of instances of $\CSP(\H)$, which may also reference constants, each of which can be solved in $\NL$ by Corollary 4 from \cite{DK08}. Each of these instances may be solved independently and the ensemble is polynomial in number, hence the whole procedure can be accomplished in \NL.}
\end{proof}
The other tractable cases are more interesting.

We are now ready to prove the main result of this section.

\begin{theorem}\label{t-easy}
Let $\H=\H_1\Rightarrow \cdots \Rightarrow \H_n$ be a reflexive tournament. If $|V(H_1)|=|V(H_n)|=1$, then $\QCSP(\H)$ is in \NL.
\end{theorem}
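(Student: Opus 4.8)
The plan is to reduce $\QCSP(\H)$ to the already-tractable case $\QCSP(\TT_2)$ using the machinery of the construction from \cite{LMCS2015} that is recalled in Section~\ref{s-tour}. The key structural input is Lemma~\ref{lem:sur-hom}: since $|V(\H_1)|=|V(\H_n)|=1$, writing $V(\H_1)=\{s\}$, $V(\H_n)=\{t\}$ and $m:=|V(\H)|-2$, there is a surjective homomorphism $f$ from $(\TT_2)^m$ onto $\H$. First I would recall that the construction producing $\mathcal{H}$ from $\H$ (taking all $\lambda\in V(\H)^{[n]}$, forming $\mathcal{G}(\lambda)$, taking the direct power, and building $\phi_{\mathcal{H}}(y_1,\ldots,y_n)$) has the property proved in \cite{LMCS2015} that $\H'\models\forall y_1,\ldots,y_n\ \phi_{\mathcal{H}}(y_1,\ldots,y_n)$ if and only if $\QCSP(\H)\subseteq\QCSP(\H')$. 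Applying this with $\H'=\TT_2$ is not directly what we want; rather, I would set it up the other way, exhibiting that $\QCSP(\H)\subseteq\QCSP(\TT_2)$ fails in general, so the honest route is a direct reduction.

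Concretely: given an input sentence $\psi:=\forall x_1\exists y_1\cdots\forall x_n\exists y_n\ \Phi$ to $\QCSP(\H)$, I would build from it a sentence $\psi'$ for $\QCSP(\TT_2)$ by replacing each variable $v$ with an $m$-tuple of variables $\langle v\rangle=(v^1,\ldots,v^m)$ (universal variables become blocks of universal variables, existential become blocks of existential, in the same order), and replacing each atom $E(u,v)$ of $\Phi$ by the conjunction $\bigwedge_{i\in[m]} E(u^i,v^i)$ — that is, by the pullback of the edge relation of $\H$ along $f$. Since $f$ is a surjective homomorphism, $f$ (applied coordinatewise) sends any satisfying assignment of $\psi'$ over $\TT_2$ to a satisfying assignment of $\psi$ over $\H$, handling one direction. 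For the converse, given a winning strategy for the existential player in the $\H$-game for $\psi$, I would use surjectivity of $f$: for any move $d\in V(\TT_2)^m$ by the universal player on a block $\langle x_j\rangle$, the vertex $f(d)\in V(\H)$ is a legal $\H$-move, the $\H$-strategy responds with some $e\in V(\H)$, and by surjectivity we pick a preimage $f^{-1}(e)\in V(\TT_2)^m$ as the response on $\langle y_j\rangle$. This preserves all the $\bigwedge_i E(u^i,v^i)$ atoms precisely because $f$ is a homomorphism on every coordinate. Thus $\psi\in\QCSP(\H)$ iff $\psi'\in\QCSP(\TT_2)$, the translation is clearly logspace-computable (it only blows up each variable by the constant factor $m$ and each atom by the constant factor $m$), and by Lemma~\ref{lem:NL-base} applied to $\TT_2$ (which is a transitive reflexive tournament) $\QCSP(\TT_2)$ is in \NL, whence $\QCSP(\H)$ is in \NL.

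The main obstacle I anticipate is the converse direction of the game-equivalence, specifically being careful that the existential player in the $\TT_2$-game can \emph{commit to a coherent preimage choice}: when a universal block $\langle x_j\rangle$ is played as $d$, the existential strategy must pick one fixed $f^{-1}(f(d))$, and because $f$ is not injective, one must check that this choice interacts correctly with later atoms and later moves. The clean way to phrase this is as a strategy transfer: the existential player in the $\TT_2$-game simulates the $\H$-game by applying $f$ to universal moves and applying any fixed section $g:V(\H)\to V(\TT_2)^m$ of $f$ to the $\H$-responses; one then verifies that whenever $E(u,v)$ holds in $\H$ between simulated values, the block atoms $\bigwedge_i E(u^i,v^i)$ hold in $\TT_2$ between the chosen tuples — but this can fail if $u$ or $v$ came from a universal block, since the actual tuple played is $d$, not $g(f(d))$. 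I would resolve this by also applying $g\circ f$ to the universal blocks inside the simulation but feeding the \emph{original} $d$ only where it must be consumed; more precisely, one should observe that the atoms of $\Phi$ only constrain pairs of variables, and since $f(d)$ and $f(g(f(d)))=f(d)$ agree, the $\H$-side never sees the difference — so the check reduces to: if $E(a,b)\in E(\H)$ and $p,q\in V(\TT_2)^m$ are \emph{any} tuples with $f(p)=a$, $f(q)=b$, is $\bigwedge_i E(p^i,q^i)$ forced? It is \emph{not}, in general. Hence the correct fix is that the existential player must also choose the preimages of the universal moves — which it cannot. So the honest argument instead runs through the $\mathcal{H}$-construction of \cite{LMCS2015}: one shows $\QCSP(\H)\subseteq\QCSP(\TT_2^m)$-with-a-twist via the surjective polymorphism $f$, invoking that $\TT_2$ having a majority (median) polymorphism, together with Lemma~\ref{lem:sur-hom}, lets us run the $\mathcal{H}$ test; I expect this to be where the real work lies, and I would structure the writeup so that Lemma~\ref{lem:sur-hom} and Lemma~\ref{lem:NL-base} are combined through the construction recalled before Section~\ref{sec:principal-lemma} rather than through a naive game simulation.
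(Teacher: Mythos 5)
Your diagnosis of the flaw in the naive pullback reduction is exactly right: because $f$ is not injective, the existential player cannot control which preimage of a universal block is played, and $E(f(p),f(q))\in E(\H)$ does not force $\bigwedge_i E(p^i,q^i)$ in $\TT_2$. But the fix you gesture toward at the end --- rerouting through the $\mathcal{H}$-construction and majority polymorphisms --- is not what is needed, and you never actually close the gap. The missing ingredient is much simpler and sits in plain sight: there is \emph{also} a surjective homomorphism from $\H$ itself to $\TT_2$, namely $s\mapsto 0$ and every other vertex $\mapsto 1$ (this is a homomorphism because $s$ is a source of $\H$, so there are no edges into $s$ from elsewhere, and all other pairs map into $\{(0,0),(0,1),(1,1)\}$).

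Once you have surjective homomorphisms in \emph{both} directions at the level of finite powers --- $(\TT_2)^m\twoheadrightarrow\H$ from Lemma~\ref{lem:sur-hom} and $\H=\H^1\twoheadrightarrow\TT_2$ from the observation above --- the general result from \cite{LMCS2015} (Theorem 3.4 there, which is the precise statement of the strategy-transfer argument you already sketched correctly in the $\TT_2^m\to\H$ direction) gives $\QCSP(\TT_2)\subseteq\QCSP(\H)$ and $\QCSP(\H)\subseteq\QCSP(\TT_2)$, hence $\QCSP(\H)=\QCSP(\TT_2)$. Then Lemma~\ref{lem:NL-base} finishes. Note that your one-sided game transfer (surjective $f:\B^k\twoheadrightarrow\C$ gives $\QCSP(\B)\subseteq\QCSP(\C)$) is sound; the error was expecting a single surjective homomorphism from a power of $\TT_2$ to deliver a two-way reduction. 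It doesn't, and you need the second, reverse surjection to get equality of the problems. The $\mathcal{H}$-construction you recall from Section~\ref{s-tour} is designed to \emph{test} containment and is used in the hardness proofs; it is not needed here, where both containments are witnessed directly by explicit surjective homomorphisms.
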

\begin{proof}
Let $|V(\H)|=m+2$ for some $m\geq 0$.
By Lemma \ref{lem:sur-hom}, there exists a surjective homomorphism from $(\TT_2)^m$ to $\H$. 
 There exists also a surjective homomorphism from $\H$ to $\TT_2$; we map $s$ to $0$ and all other vertices of $\H$ to $1$.
 It follows from \textcolor{black}{Theorem 3.4 in} \cite{LMCS2015} that $\QCSP(\H)=\QCSP(\TT_2)$ meaning we may consider the latter problem.
 We note that $\TT_2$ is a transitive reflexive tournament. Hence, we may appply Lemma \ref{lem:NL-base}.
\end{proof}

\section{Final result and remarks}\label{s-con}

We are now in a position to prove our main dichotomy theorem.
\begin{theorem}\label{t-dichotomy}
Let $\H=\H_1\Rightarrow \cdots \Rightarrow \H_n$ be a reflexive tournament. If $|V(H_1)|=|V(H_n)|=1$, then $\QCSP(\H)$ is in \NL; otherwise it is \NP-hard.
\end{theorem}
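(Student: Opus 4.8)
The plan is to assemble the dichotomy directly from the results already proved in the excerpt, so the proof is essentially a case analysis with no new machinery. Write $\H=\H_1\Rightarrow\cdots\Rightarrow\H_n$. There are two cases according to whether both $|V(\H_1)|=|V(\H_n)|=1$ or not.

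First I would dispose of the tractable case. If $|V(\H_1)|=|V(\H_n)|=1$, then Theorem~\ref{t-easy} already states that $\QCSP(\H)$ is in \NL, so there is nothing further to do; one simply cites it.

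For the hard case, suppose that at least one of $\H_1,\H_n$ has size greater than $1$. The key observation is that a reflexive tournament is, up to the reversal of all edges, symmetric in its initial and final component: if $\H^{\mathrm{op}}$ denotes $\H$ with every edge reversed, then $\H^{\mathrm{op}}=\H_n\Rightarrow\cdots\Rightarrow\H_1$ is again a reflexive tournament, its strongly connected components are exactly $\H_n^{\mathrm{op}},\ldots,\H_1^{\mathrm{op}}$, and $\QCSP(\H)$ and $\QCSP(\H^{\mathrm{op}})$ are log-space equivalent (reverse every atom $E(x,y)$ to $E(y,x)$ in the input sentence, which leaves the quantifier prefix untouched). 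Hence we may assume without loss of generality that $|V(\H_1)|>1$, i.e.\ the \emph{initial} strongly connected component of $\H$ is non-trivial. Now there are two sub-cases. If $n=1$, then $\H=\H_1$ is itself a non-trivial strongly connected reflexive tournament and Corollary~\ref{cor:strongly-connected1} gives that $\QCSP(\H)$ is \NP-hard. If $n\geq 2$, then $\H$ is a reflexive tournament whose initial strongly connected component $\H_1$ is non-trivial, so $\H$ is of the form $\H^+$ considered in Section~\ref{sec-connectedc:initial-strongly}, and Corollary~\ref{cor:initial-full} gives directly that $\QCSP(\H)$ is \NP-hard. Combining the two sub-cases (and undoing the reversal in the case where it was the final component that was non-trivial) yields \NP-hardness in the hard case, completing the proof.

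The only point requiring a word of care — and the ``main obstacle'', though it is a minor one — is the reduction to the non-trivial-initial-component situation via edge reversal; one must check that reversal is a valid reduction between the relevant $\QCSP$ problems and that it genuinely reverses the roles of $\H_1$ and $\H_n$. Everything else is an immediate appeal to Theorem~\ref{t-easy}, Corollary~\ref{cor:strongly-connected1}, and Corollary~\ref{cor:initial-full}, so the argument is short. I would present it as a two-paragraph proof: one sentence for the \NL\ case, and a short case split (using the reversal symmetry) for the \NP-hard case.
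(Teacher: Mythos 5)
Your proof is correct and takes essentially the same approach as the paper's: cite Theorem~\ref{t-easy} for the \NL\ case, and for \NP-hardness invoke Corollaries~\ref{cor:strongly-connected1} and~\ref{cor:initial-full} together with the edge-reversal duality to reduce the non-trivial-final-component case to the non-trivial-initial-component case. Your explicit $n=1$ versus $n\geq 2$ split and your check that edge reversal is a valid log-space reduction are mild elaborations of the paper's one-line remark (``map edges $(x,y)$ to $(y,x)$'') and do not change the argument.
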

\begin{proof}
The \NL\ case
follow from Theorem~\ref{t-easy}. The \NP-hard cases follow from Corollary~\ref{cor:strongly-connected1} and Corollary~\ref{cor:initial-full}, bearing in mind the case with a non-trivial final strongly connected component is dual to the case   with a non-trivial initial strongly connected component  (map edges $(x,y)$ to $(y,x)$).
\end{proof}
Theorem~\ref{t-dichotomy} resolved the open case in Table~\ref{fig:context}. \textcolor{black}{It is difficult to position this result in the overall classification program for finite-domain QCSPs save to say that our methods are tailored, indeed specialised, to reflexive tournaments. It is not clear that they can be applied easily to different or wider classes (in this vein we return to mixed-type tournaments below). Since complexities outside of P, NP-complete and Pspace-complete were discovered for QCSPs in \cite{ZhukM20}, for example co-NP-complete, DP-complete and $\Theta_2^{\mathrm{P}}$, the whole classification task has been thrown wide open. Classes such as that of reflexive tournaments might provide comfort, as it is doubtful such monstrous complexities could be found here. Though, we cannot be sure, with our lacuna between NP-hard and Pspace-complete.} 

Recall that the results for the irreflexive tournaments in this table were all proven in a more general setting, namely for irreflexive semicomplete graphs. \textcolor{black}{One} natural direction for future research is to determine a complexity dichotomy for \QCSP\ and $\mathrm{SCSP}$
for reflexive semicomplete graphs. We leave this as an interesting open direction.

{\color{black} The task of promoting our \NP-hardness results to \Pspace-complete, while using the same method, seems to require corresponding \Pspace-hardness results for reflexive tournaments with constants. If $\QCSP^c(\H)$ were \Pspace-complete, for $\H$ a non-trivial reflexive strongly connected tournament, then likely our \NP-hardness results, for the similar class of graphs, would easily rise to \Pspace-complete. The cases that are not strongly connected require additional arguments, and perhaps even a different method.}

{\color{black} Mixed-type tournaments, where some vertices are reflexive and others irreflexive, are well-understood algebraically \cite{Wires15}. Indeed, from this paper there follows a complexity dichotomy for $\CSP^c(\H)$ where $\H$ is a mixed-type tournament. Furthermore, $\CSP(\H)$ is either trivial or $\H$ is an irreflexive tournament, so the complexity dichotomy for $\CSP(\H)$ is also known. Though many of our supporting lemmas hold for mixed-type tournaments, some do not. For example, Lemma~\ref{lem:uniformly-z} fails for the transitive $2$-tournament $\mathrm{TT}_2$ in which one vertex is a self-loop and the other is not. To extend our classification to mixed-type tournaments thus requires still some work.}


\section*{Acknowledgements}

We are grateful to several referees for careful reading of the paper and good advice.

\end{document}